\newcommand{\eqdef}{\stackrel{\mathrm{def}}{=}} 
\definecolor{dgreyblue}{rgb}{0.26,0.3,0.46}             
\newcommand{\cC}{\mathcal{C}}
\newcommand{\cD}{\mathcal{D}}
\renewcommand{\text}[1]{\hbox{\rm \ #1\ \/}}
\newcommand{\beqn}{\begin{eqnarray*}}
\newcommand{\eeqn}{\end{eqnarray*}}
\newcommand{\beq}{\begin{eqnarray}}
\newcommand{\eeq}{\end{eqnarray}}
\newcommand{\ben}{\begin{enumerate}}
\newcommand{\een}{\end{enumerate}}
\newcommand{\bi}{\begin{itemize}}
\newcommand{\ei}{\end{itemize}}
\newcommand{\eps}{\varepsilon}
\newcommand{\gadd}{\delta^{+}}
\newcommand{\IE}{{\em i.e.}\xspace}
\newcommand{\tx}{^{\rm th}}
\newtheorem{theorem}{Theorem}
\newtheorem{lemma}[theorem]{Lemma}
\newtheorem{corollary}[theorem]{Corollary}
\newenvironment{proof}{{\noindent\bf Proof.\ }}{\hfill{\Pisymbol{pzd}{113}}\vspace{0.1in}}
\newenvironment{proof-sketch}{{\noindent\bf Sketch of Proof.\ }}{\hfill{\Pisymbol{pzd}{113}}\vspace{0.1in}}
\newcommand{\NP}{\mathsf{NP}}
\newcommand{\cS}{\mathcal{S}}
\newcommand{\cP}{\mathcal{P}}
\newcommand{\cQ}{\mathcal{Q}}
\newcommand{\TB}{\vspace{-0.1ex}}\newcommand{\TiE}{\setlength{\itemsep}{-1ex}}
\newcommand{\comment}[1]{}
\newcommand{\eX}[1]{{\mathbb E}\left[#1\right]}
\newcommand{\EG}{{\it e.g.}\xspace}
\newcommand{\FI}[1]{Fig.~\ref{#1}\xspace}
\newcommand{\Nbr}{\mathsf{Nbr}}
\newcommand{\gscaled}{\delta^{\mathrm{Y}}}
\newcommand{\Gscaled}{\Delta^{\mathrm{Y}}}
\newcommand{\gdiam}{\delta^{\mathrm{\cD}}}
\newcommand{\Gdiam}{\Delta^{\mathrm{\cD}}}
\newcommand{\gL}{\delta^{\mathrm{L}}}
\newcommand{\GL}{\Delta^{\mathrm{L}}}
\newcommand{\gLMS}{\delta^{\mathrm{L+M+S}}}
\newcommand{\GLMS}{\Delta^{\mathrm{L+M+S}}}
\begin{document}

\title{Topological implications of negative curvature for biological and social networks}

\author{R\'{e}ka Albert}
    \email{ralbert@phys.psu.edu}
    \homepage{www.phys.psu.edu/~ralbert} 
    \affiliation{Department of Physics, Pennsylvania State University, University Park, PA 16802}
\author{Bhaskar DasGupta}
    \email{dasgupta@cs.uic.edu}
    \homepage{www.cs.uic.edu/~dasgupta}
    \thanks{Author to whom correspondence should be sent.}
    \affiliation{Department of Computer Science, University of Illinois at Chicago, Chicago, IL 60607}
\author{Nasim Mobasheri} 
    \email{nmobas2@uic.edu}
    \affiliation{Department of Computer Science, University of Illinois at Chicago, Chicago, IL 60607}

\date{\today}

\pacs{87.18.Mp,87.18.Cf,87.18.Vf,89.75.Hc,02.10.Ox}

\keywords{Hyperbolicity, Networks, Crosstalk, Influential nodes}

\begin{abstract}
Network measures that reflect the most salient properties of complex large-scale networks are in high demand in the network research community. 
In this paper we adapt a combinatorial measure of negative curvature (also called hyperbolicity) to parameterized finite networks, and show that 
a variety of biological and social networks {\em are} hyperbolic. This hyperbolicity property has strong implications on the higher-order connectivity and other 
topological properties of these networks. Specifically, we derive and prove bounds on the distance among shortest or approximately shortest paths in hyperbolic networks. 
We describe two implications of these bounds to cross-talk in biological networks, and to the existence of central, influential neighborhoods 
in both biological and social networks.
\end{abstract}

\maketitle

\section{Introduction}
For a large variety of complex systems, ranging from the World-Wide Web to metabolic networks, representation as a parameterized network 
and graph theoretical analysis of this network have led to many useful insights~\cite{Newman-book,AB-review}. In addition to 
established network measures such as the average degree, clustering coefficient or diameter, complex network researchers have proposed and evaluated a number 
novel network measures~\cite{rich-club,LM07,albert-et-al-2011,bassett-et-al-2011}. 
In this article we consider a combinatorial measure of negative curvature (also called hyperbolicity) of  parameterized finite networks and the implications of 
negative curvature on the higher-order connectivity and 
topological properties of these networks.

There are many ways in which the (positive or negative) curvature of a continuous surface or other similar spaces can be defined 
depending on whether the measure is to reflect the local or global properties of the underlying space. The specific notion of negative curvature that we 
use is an adoption of the hyperbolicity measure for a infinite metric space 
with bounded local geometry as originally proposed by Gromov~\cite{G87} using a so-called ``$4$-point condition''. 
We adopt this measure for parameterized finite discrete metric spaces induced by a network via all-pairs shortest paths and 
apply it to biological and social networks.
Recently, there has been a surge of empirical works measuring and analyzing the hyperbolicity of networks defined in this manner, 
and many real-world networks were observed to be hyperbolic 
in this sense.
For example, 
preferential attachment networks were shown to be scaled hyperbolic in~\cite{JL04,JLB07}, 
networks of high power transceivers in a wireless sensor network were empirically observed to have a tendency to be hyperbolic in~\cite{ALJKZ08}, 
communication networks at the IP layer and at other levels were empirically observed to be hyperbolic in~\cite{NS11,PKBV10}, 
extreme congestion at a very limited number of nodes in a very large traffic network was shown in~\cite{JLBB11}
to be caused due to hyperbolicity of the network together with minimum length routing, and 
the authors in~\cite{BPK10} showed how to efficiently map the topology of the Internet to a hyperbolic space.

Gromov's hyperbolicity measure adopted on a shortest-path metric of networks can also be visualized as a measure of 
the ``closeness'' of the original network topology to a tree topology~\cite{MSV11}.
Another popular measure used in both the bioinformatics and theoretical computer science literature 
is the {\em treewidth} measure first introduced by Robertson and Seymour~\cite{RS83}.
Many $\NP$-hard problems on general networks admit efficient polynomial-time solutions if restricted to classes of networks 
with bounded treewidth~\cite{B88}, just as several routing-related problems or the diameter estimation problem become easier if
the network has small hyperbolicity~\cite{CE07,CDEHV08,CDEHVX12,GL05}. 
However, as observed in~\cite{MSV11}, the two measures are quite different in nature: 
``the treewidth is more related to the least number of nodes whose removal changes the connectivity of the graph in a significant manner whereas the hyperbolicity
measure is related to comparing the geodesics of the given network with that of a tree''.
Other related research works on hyperbolic networks include estimating the distortion necessary to map hyperbolic metrics to tree metrics~\cite{ABKMRT07}
and studying the algorithmic aspects of several combinatorial problems on points in a hyperbolic space~\cite{KL06}.

\section{Hyperbolicity-related Definitions and Measures}

Let $G=(V,E)$ be a {\em connected} undirected graph of $n\geq 4$ nodes. 
We will use the following notations: 
\begin{itemize}
\item
$u \!\stackrel{\cP}{\leftrightsquigarrow}\! v$ denotes a path $\cP\equiv\left(u=u_0,u_1,\dots,u_{k-1},u_k=v\right)$ from node $u$ to node $v$ and 
$\ell(\cP)$ denotes the {\em length} (number of edges) of such a path.

\item
$u_i \!\stackrel{\cP}{\leftrightsquigarrow}\! u_j$ denotes the sub-path $\left(u_i,u_{i+1},\dots,u_j\right)$ of $\cP$ from $u_i$ to $u_j$.

\item
$u\!\stackrel{\mathfrak{s}}{\leftrightsquigarrow}\! v$ denotes a shortest path from node $u$ to node $v$ of length 
$d_{u,v}=\ell\big(u\!\stackrel{\mathfrak{s}}{\leftrightsquigarrow}\! v\big)$.
\end{itemize}
We introduce the hyperbolicity measures via the $4$-node condition as originally proposed by Gromov. 
Consider a quadruple of distinct nodes\footnote{If two or more nodes among $u_1,u_2,u_3,u_4$ are identical, then $\gadd_{u_1,u_2,u_3,u_4}=0$ due to the metric's triangle
inequality; thus it suffices to assume that the four nodes are distinct.} 
$u_1,u_2,u_3,u_4$, and let $\pi=\left(\pi_1,\pi_2,\pi_3,\pi_4\right)$ be a permutation of $\{1,2,3,4\}$ denoting 
a rearrangement of the indices of nodes such that 
\begin{multline*}
S_{u_1,u_2,u_3,u_4}=d_{u_{\pi_1},u_{\pi_2}}+d_{u_{\pi_3},u_{\pi_4}} 
\\
\leq M_{u_1,u_2,u_3,u_4}=d_{u_{\pi_1},u_{\pi_3}}+d_{u_{\pi_2},u_{\pi_4}} 
\\
\leq L_{u_1,u_2,u_3,u_4}=d_{u_{\pi_1},u_{\pi_4}}+d_{u_{\pi_2},u_{\pi_3}}
\end{multline*}
and let 
{$\gadd_{u_1,u_2,u_3,u_4} = \frac {L_{u_1,u_2,u_3,u_4}-M_{u_1,u_2,u_3,u_4}}{2}$.}
Considering all combinations of four nodes in a graph one can define a worst-case hyperbolicity\cite{G87} as 
\[
\gadd_{\mathrm{worst}}(G) = \max_{u_1,u_2,u_3,u_4} \left\{ \,\gadd_{u_1,u_2,u_3,u_4}\right\}
\]
and an average hyperbolicity as 
\[
\gadd_{\mathrm{ave}}(G) = \frac{1}{\binom{n}{4}} \hspace*{-0.2in} \sum_{\hspace*{0.2in}u_1,u_2,u_3,u_4} \hspace*{-0.2in} \gadd_{u_1,u_2,u_3,u_4}
\]
Note that $\gadd_{\mathrm{ave}}(G)$ is the expected value of $\gadd_{u_1,u_2,u_3,u_4}$ when the four nodes $u_1,u_2,u_3,u_4$ are picked
independently and uniformly at random from the set of all nodes. Both 
$\gadd_{\mathrm{worst}}(G)$ and $\gadd_{\mathrm{ave}}(G)$
can be trivially computed in O$\big(n^4\big)$ time for any graph $G$.

A graph $G$ is called $\delta$-hyperbolic if $\gadd_{\mathrm{worst}}(G) \leq\delta$. 
If $\delta$ is a small constant independent of the parameters of the graph, a $\delta$-hyperbolic 
graph is simply called a hyperbolic graph.
It is easy to see that if $G$ is a tree then 
$\gadd_{\mathrm{worst}}(G) = \gadd_{\mathrm{ave}}(G) = 0$. Thus all trees are hyperbolic graphs.

The hyperbolicity measure $\gadd_{\mathrm{worst}}$ considered in this paper for a metric space was originally used by Gromov in the 
context of group theory~\cite{G87} by observing that
many results concerning the fundamental group of a Riemann surface hold true in a more general context.
$\gadd_{\mathrm{worst}}$ is trivially infinite in the standard (unbounded) Euclidean space. Intuitively, 
a metric space has a finite value of $\gadd_{\mathrm{worst}}$ if it behaves metrically in the large scale as a negatively curved Riemannian manifold, and thus 
the value of $\gadd_{\mathrm{worst}}$ can be related to the standard scalar curvature of a Hyperbolic manifold. 
For example, a simply connected complete Riemannian manifold whose sectional curvature is below $\alpha<0$ has a value of 
$\gadd_{\mathrm{worst}}$ that is $\mathrm{O}\left({\left( \sqrt{-\alpha} \, \right)}^{-1}\right)$ (see~\cite{R96}).

In this paper we first show that a variety of biological and social networks are hyperbolic. 
We formulate and prove bounds on the existence of path-chords and on the distance among
shortest or approximately shortest paths in hyperbolic networks. We determine the implications of these bounds on {\em regulatory} networks, \IE, directed networks
whose edges correspond to regulation or influence. This category includes all the biological networks that we study in 
this paper. We also discuss the implications of our results on the 
region of influence of nodes in social networks. Some of the proofs of our theoretical results are adaptation of corresponding arguments in the
continuous hyperbolic space. All the proofs are presented in the appendix for the sake of completeness.

\renewcommand{\tabcolsep}{2pt}
\renewcommand{\arraystretch}{1}
\begin{table*}
\hspace*{-0.0in}
\begin{minipage}[c]{8.4cm}
\renewcommand{\arraystretch}{1.2}
\caption{\label{L3}Hyperbolicity and diameter values for biological networks.}
\scalebox{0.8}[0.8]
{
\begin{tabular}{l c c c c c c}
\\
\hline
\multicolumn{1}{c}{
\small
Network 
{\em id}
}
&
\small
reference
&
\begin{tabular}{c}
\small
Average 
\\
[-0.03in]
\small
degree 
\end{tabular}
&
\small
$\gadd_{\mathrm{ave}}(G)$
&
\small
$\gadd_{\mathrm{worst}}(G)$
&
\small
$\cD$
&
\small
$\dfrac{\gadd_{\mathrm{worst}}(G)}{\nicefrac{\cD}{2}}$
\\
[0.05in]
\hline
\small
{\bf 1}.  {\em E. coli} 
transcriptional & \small \cite{net1} 
&
\small $1.45$  	        &	\small $0.132$ &	\small $2$  	        &	\small $10$ & \small $0.400$
\\
\hline
\small {\bf 2}. Mammalian 
Signaling & \small \cite{net2} 
&
\small $2.04$	         	& \small $0.013$	& \small $3$	         	&  \small $11$ & \small $0.545$
\\
\hline
\small {\bf 3}. {\em E. Coli}
transcriptional & \small $\pmb{\sharp}$	
&
\small $1.30$	         	& \small $0.043$	& \small $2$	         	&  \small $13$ & \small $0.308$
\\
\hline
\small {\bf 4}. T LGL 
signaling & \small \cite{net6}
&
\small $2.32$	         	& \small $0.297$	& \small $2$	         	&   \small $7$ &  \small $0.571$
\\
\hline
\small {\bf 5}. 
{\em S. cerevisiae} 
transcriptional & \small \cite{net8}	
&
\small $1.56$	         	& \small $0.004$	& \small $3$	         	&  \small $15$ & \small $0.400$
\\
\hline
\small {\bf 6}. 
{\em C. elegans} 
Metabolic & \small \cite{net10-1} 
&
\small $4.50$	         	& \small $0.010$	& \small $1.5$	         	&  \small $7$ & \small $0.429$
\\
\hline
\small {\bf 7}. 
{\em Drosophila} 
segment polarity & \small \cite{Odell:2000} 
&
\small $1.69$	         	& \small $0.676$	& \small $4$	         	&   \small $9$ &  \small $0.889$
\\
\hline
\small {\bf 8}. 
ABA
signaling & \small \cite{LAA06}
&
\small $1.60$	         	& \small $0.302$	& \small $2$	         	&  \small $7$ & \small $0.571$
\\
\hline
\small {\bf 9}. 
Immune Response 	
Network & \small \cite{Thakar07}	
&
\small $2.33$	         	& \small $0.286$	& \small $1.5$	         	&  \small $4$ & \small $0.750$
\\
\hline
\small {\bf 10}. 
T Cell Receptor	
Signalling & \small \cite{julio07}
&
\small $1.46$	         	& \small $0.323$	& \small $3$	         	&  \small $13$ & \small $0.462$
\\
\hline
\small {\bf 11}. 
Oriented 
yeast PPI & \small \cite{net11}
&
\small $3.11$	         	& \small $0.001$	& \small $2$	         	& \small $6$ & \small $0.667$
\\
\hline
\multicolumn{7}{l}{
\small 
$^{\textstyle\pmb{\sharp}}\,$\cite[\small updated version]{net1}
}
\\
\multicolumn{7}{l}{
\small
$\,\,\,$ see www.weizmann.ac.il/mcb/UriAlon/Papers/networkMotifs/coli1\_1Inter\_st.txt
}
\end{tabular}
}
\renewcommand{\arraystretch}{1}
\end{minipage}
%
%
%
\hspace*{0.25in}
\begin{minipage}[c]{8.6cm}
\renewcommand{\tabcolsep}{2pt}
\renewcommand{\arraystretch}{1.2}
\caption{\label{L4}Hyperbolicity and diameter values for social networks.}
\scalebox{0.8}[0.8]
{
\begin{tabular}{l c c c c c c}
\\
\hline
\multicolumn{1}{c}{
\small
Network 
{\em id}
}
&
\small
reference
&
\begin{tabular}{c}
\small
Average 
\\
[-0.03in]
\small
degree 
\end{tabular}
&
\small
$\gadd_{\mathrm{ave}}(G)$
&
\small
$\gadd_{\mathrm{worst}}(G)$
&
\small
$\cD$
&
\small
$\dfrac{\gadd_{\mathrm{worst}}(G)}{\nicefrac{\cD}{2}}$
\\
[0.05in]
\hline
\small {\bf 1}.  
Dolphins	
social 
network & \small \cite{LSBHSD03} 
&
\small $5.16$  	&	\small $0.262$ &	\small $2$
& 
\small $8$ & 
\small $0.750$
\\
\hline
\small {\bf 2}.  
American 
College Football &  \small \cite{GN02} 
&
\small $10.64$  	        &	\small $0.312$ &	\small $2$ 
& 
\small $5$ & 
\small $0.800$
\\
\hline
\small {\bf 3}.  
Zachary 
Karate Club & \small \cite{Z77}
&
\small $4.58$  	        &	\small $0.170$ &	\small $1$
& 
\small $5$ & 
\small $0.400$
\\
\hline
\small {\bf 4}.  
Books about 
US Politics & \small $\pmb{\ddagger}$
&
\small $8.41$  	        &	\small $0.247$ &	\small $2$
& 
\small $7$ & 
\small $0.571$
\\
\hline
\small {\bf 5}.  
Sawmill 
communication & \small \cite{MM97} 
&
\small $3.44$  	        &	\small $0.162$ &	\small $1$
& 
\small $8$ & 
\small $0.250$
\\
\hline
\small {\bf 6}.  
Jazz musician & \small \cite{GD03} 
&
\small $27.69$  	        &	\small $0.140$ &	\small $1.5$
& 
\small $6$ & 
\small $0.500$
\\
\hline
\small {\bf 7}.  
Visiting ties 
in San Juan & \small \cite{LMCL53} 
&
\small $3.84$  	        &	\small $0.422$ &	\small $3$
& 
\small $9$ & 
\small $0.667$
\\
\hline
\small {\bf 8}.  
World Soccer 
data, 
$1998$ &  \small $\pmb{\dagger}$	
&
\small $3.37$  	        &	\small $0.270$ &	\small $2.5$
& 
\small $12$ & 
\small $0.286$ 
\\
\hline
\small {\bf 9}.  
Les 
Miserable & \small \cite{Knu93}
&
\small $6.51$  	        &	\small $0.278$ &	\small $2$ 
& 
\small $14$ & 
\small $0.417$ 
\\
\hline
\multicolumn{7}{l}{
\small $\,^{\textstyle\pmb{\ddagger}}\,$V. Krebs, \url{www.orgnet.com}, 
}
\\
\multicolumn{7}{l}{
\begin{tabular}{l}
\small $^{\textstyle\pmb{\dagger}}\,$Dagstuhl seminar: {\em Link Analysis and Visualization}, Dagstuhl 1-6, 2001;
\\
\small 
$\,\,\,$ \url{vlado.fmf.uni-lj.si/pub/networks/data/sport/football.htm}
\end{tabular}
}
\end{tabular}
}
\renewcommand{\tabcolsep}{6pt}
\renewcommand{\arraystretch}{1}
\end{minipage}
\end{table*}

\section{Results and Discussion}

Subsection~A examines in detail the hyperbolicity of an assorted list of diverse
biological and social networks. The remaining subsections of this section, namely subsections~B--E, 
state our findings on the implications of hyperbolicity of a network on various topological properties of the network.
For subsections~D, E, we first state our findings as applicable for biological or social networks, 
followed by a summary of formal mathematical results that led to such findings.
Because the precise bounds on topological features of a network as a function of hyperbolicty measures are quite mathematically
involved, we discuss these bounds in a somewhat simplified form
in subsections~B--E, leaving the precise bounds as theorems and proofs in the appendix.

\subsection{Hyperbolicity of Real Networks}
\label{realnet-sec}

We analyzed twenty well-known biological and social networks. The $11$ biological networks shown in Table~\ref{L3}
include $3$ transcriptional regulatory, 
$5$ signalling, $1$ metabolic, $1$ immune response and $1$ oriented protein-protein interaction networks.
Similarly, the $9$ social networks shown in Table~\ref{L4}
range from interactions in dolphin communities to the social network of 
jazz musicians. 
The hyperbolicity of the biological and directed social networks was computed by ignoring the direction of edges.
The hyperbolicity values were calculated by writing codes in C using standard algorithmic procedures.

As shown on Table~\ref{L3} and Table~\ref{L4}, the hyperbolicity values of almost all networks are small. 
If $\cD=\max_{u,v} \big\{ d_{u,v} \big\}$ is the diameter of the graph, then it is easy to see that  
$\gadd_{\mathrm{worst}}(G) \leq \nicefrac{\cD}{2}$, and thus small diameter indeed implies a small value of worst-case hyperbolicity. 
As can be seen on Table~\ref{L3} and Table~\ref{L4}, 
$\gadd_{\mathrm{worst}}(G)$ varies with respect to its worst-case bound of $\nicefrac{\cD}{2}$ from $25\%$ of $\nicefrac{\cD}{2}$ to no more than
$89\%$ of $\nicefrac{\cD}{2}$, and there does not seem to be a systematic dependence of 
$\gadd_{\mathrm{worst}}(G)$ 
on the number of nodes (which ranges from $18$ to $786$),  edges (from $42$ to $2742$),  or on the value of the diameter $\cD$.

For all the networks 
$\gadd_{\mathrm{ave}}(G)$ is one or two orders of magnitude smaller than $\gadd_{\mathrm{worst}}(G)$. 
Intuitively, this suggests that 
the value of $\gadd_{\mathrm{worst}}(G)$ may be a rare deviation
from typical values of $\gadd_{u_1,u_2,u_3,u_4}$ that one would obtain for most combinations of nodes 
$\left\{u_1,u_2,u_3,u_4\right\}$.

We additionally performed the following rigorous tests for hyperbolicity of our networks.

\subsubsection{Checking hyperbolicity via the scaled hyperbolicity approach}

An approach for testing hyperbolicity for finite graphs was introduced and used via ``scaled'' Gromov hyperbolicity 
in \cite{JLB07,NS11} for hyperbolicity defined via thin triangles and in~\cite{JLA11} for 
for hyperbolicity defined via the four-point condition as used in this paper. 
The basic idea is to ``scale'' the values of
$\gadd_{u_1,u_2,u_3,u_4}$ by a suitable scaling factor, say $\mu_{u_1,u_2,u_3,u_4}$, such that 
there exists a constant $0<\eps<1$ with the following property: 
\begin{itemize}
\item
the maximum achievable value of 
$\frac{\gadd_{u_1,u_2,u_3,u_4}}{\mu_{u_1,u_2,u_3,u_4}}$
is $\eps$ in 
the standard hyperbolic space or in the Euclidean space, and 

\item
$\frac{\gadd_{u_1,u_2,u_3,u_4}}{\mu_{u_1,u_2,u_3,u_4}}$
goes beyond $\eps$ in positively curved spaces.
\end{itemize}
We use the notation 
$\displaystyle\cD_{u_1,u_2,u_3,u_4}=\max_{i,j\in\{1,2,3,4\}} \left\{ d_{u_i,u_j} \right\}$ to indicate
the diameter of the subset of four nodes $u_1,u_2,u_3$ and $u_4$.
By using theoretical or empirical calculations, the authors in~\cite{JLA11} provide the bounds shown in Table~\ref{tabl1}.

\begin{table}
\caption{\label{tabl1}~\cite{JLA11} Various scaled Gromov hyperbolicities.}
\scalebox{0.9}[0.9]
{
\begin{tabular}{l l c c l}
\\
\hline
\multicolumn{1}{c}{\small Name} & \multicolumn{1}{c}{\small Notation} & \small $\mu_{u_1,u_2,u_3,u_4}$ & \small $\eps$ & \begin{tabular}{c} \small Method for \\ \small determining $\eps$ \end{tabular}
\\
[0.05in]
\hline
\small \begin{tabular}{l} diameter-scaled \\ hyperbolicity \end{tabular} & \small $\gdiam$ & \small $\cD_{u_1,u_2,u_3,u_4}$ & \small $0.2929$ & \small\hspace*{0.1in} empirical
\\
\hline
\small \begin{tabular}{l} $L$-scaled \\ hyperbolicity \end{tabular} &  \small $\gL$ & \small $L_{u_1,u_2,u_3,u_4}$ & \begin{tabular}{r l} \small & \small $\!\!\!\!\!\! \frac{\sqrt{2}-1}{2\sqrt{2}}$ \\ \small $\approx$ & \small $\!\! 0.1464$ \end{tabular} & \small \hspace*{0.1in}mathematical
\\
\hline
\small \begin{tabular}{l} $(L+M+S)$-scaled \\ hyperbolicity \end{tabular} &  \small $\gLMS$ & \begin{tabular}{l} \small $L_{u_1,u_2,u_3,u_4}$ \\ \small $\,\,+\,M_{u_1,u_2,u_3,u_4}$ \\ \small $\,\,\,\,\,\,+\,S_{u_1,u_2,u_3,u_4}$ \end{tabular} & \small $0.0607$ & \small \hspace*{0.1in}mathematical
\\
[0.2in]
\hline
\end{tabular}
}
\end{table}

\renewcommand{\tabcolsep}{5pt}
\renewcommand{\arraystretch}{1}
\begin{table*}
\begin{minipage}[c]{8.3cm}
\renewcommand{\arraystretch}{1.2}
\caption{\label{newtab1}$\Gscaled(G)$ values for biological networks for $Y\in \left\{ \cD,\, L,\, L+M+S\right\}$.}
\vspace*{-0.2in}
\begin{center}
\scalebox{1}[1]
{
\begin{tabular}{l c c c}
\\
\hline
\multicolumn{1}{c}{
\small
Network 
{\em id}
}
&
\small
$\Gdiam(G)$
&
\small
$\GL(G)$
&
\small
$\GLMS(G)$
\\
[0.05in]
\hline
\small
{\bf 1}.  {\em E. coli} 
transcriptional &  \small $0.0014$ &	\small $0.0018$
& \small $0.0015$
\\
\hline
\small {\bf 2}. Mammalian 
Signaling & \small $0.0021$	& \small $0.0018$
& \small $0.0022$
\\
\hline
\small {\bf 3}. {\em E. Coli}
transcriptional & \small $0.0006$	         	& \small $0.0006$
& \small $0.0007$
\\
\hline
\small {\bf 4}. T LGL 
signaling & \small $0.0228$	         	& 
\small $0.0221$
& \small $0.0318$
\\
\hline
\small {\bf 5}. 
{\em S. cerevisiae} 
transcriptional & \small $0.0031$	         	& 
\small $0.0032$
& \small $0.0033$
\\
\hline
\small {\bf 6}. 
{\em C. elegans} 
Metabolic & \small $0.0020$	         	& 
\small $0.0018$
& \small $0.0019$
\\
\hline
\small {\bf 7}. 
{\em Drosophila} 
segment polarity & \small $0.0374$	         	& 
\small $\mathbf{0.0558}$
& \small $0.0750$
\\
\hline
\small {\bf 8}. 
ABA
signaling & \small $0.0343$	         	& 
\small $0.0285$
& \small $0.0425$
\\
\hline
\small {\bf 9}. 
Immune Response 	
Network & \small $\mathbf{0.0461}$	         	& 
\small $0.0552$
& \small $\mathbf{0.0781}$
\\
\hline
\small {\bf 10}. 
T Cell Receptor	
Signalling & \small $0.0034$	         	& 
\small $0.0045$
& \small $0.0056$
\\
\hline
\small {\bf 11}. 
Oriented 
yeast PPI & \small $0.0013$	         	& 
\small $0.0009$
& \small $0.0012$
\\
\hline
\multicolumn{1}{r}{\small maximum} &  \small $0.0461$ & \small $0.0558$ & \small $0.0781$
\\
\hline
\end{tabular}
}
\end{center}
\renewcommand{\arraystretch}{1}
\end{minipage}
%
%
%
\hspace*{0.3in}
\begin{minipage}[c]{8.3cm}
\renewcommand{\tabcolsep}{5pt}
\renewcommand{\arraystretch}{1.2}
\caption{\label{newtab2}$\Gscaled(G)$ values for social networks for $Y\in \left\{ \cD,\, L,\, L+M+S\right\}$.}
\vspace*{-0.2in}
\begin{center}
\scalebox{1}[1]
{
\begin{tabular}{l c c c}
\\
\hline
\multicolumn{1}{c}{
\small
Network 
{\em id}
}
&
\small
$\Gdiam(G)$
&
\small
$\GL(G)$
&
\small
$\GLMS(G)$
\\
[0.05in]
\hline
\small {\bf 1}.  
Dolphins	
social 
network & \small $0.0115$ & 
\small $0.0120$ & 
\small $0.0168$
\\
\hline
\small {\bf 2}.  
American 
College Football &  \small $\mathbf{0.0435}$ & 
\small $\mathbf{0.0395}$ & 
\small $\mathbf{0.0577}$
\\
\hline
\small {\bf 3}.  
Zachary 
Karate Club &	\small $0.0195$ & 
\small $0.0249$ & 
\small $0.0284$
\\
\hline
\small {\bf 4}.  
Books about 
US Politics &	\small $0.0106$ & 
\small $0.0074$ & 
\small $0.0116$
\\
\hline
\small {\bf 5}.  
Sawmill 
communication &	\small $0.0069$ & 
\small $0.0068$ & 
\small $0.0085$
\\
\hline
\small {\bf 6}.  
Jazz musician &	\small $0.0097$ & 
\small $0.0117$ & 
\small $0.0124$
\\
\hline
\small {\bf 7}.  
Visiting ties 
in San Juan &	\small $0.0221$ & 
\small $0.0242$ & 
\small $0.0275$
\\
\hline
\small {\bf 8}.  
World Soccer 
data, 
$1998$ &	\small $0.0145$ & 
\small $0.0155$ & 
\small $0.0212$ 
\\
\hline
\small {\bf 9}.  
Les 
Miserable &	\small $0.0032$  & 
\small $0.0034$ & 
\small $0.0049$ 
\\
\hline
\multicolumn{1}{r}{\small maximum} &  \small $0.0435$ & \small $0.0395$ & \small $0.0577$
\\
\hline
\end{tabular}
}
\end{center}
\end{minipage}
\end{table*}
\renewcommand{\tabcolsep}{6pt}

\renewcommand{\arraystretch}{1}

We adapt the criterion proposed by Jonckheere, Lohsoonthorn and Ariaei~\cite{JLA11} to designate a given finite graph as hyperbolic 
by requiring a {\em significant} percentage of all possible subset of four nodes to
satisfy the $\eps$ bound. More formally, 
suppose that $G$ has $t$ connected components containing $n_1,n_2,\dots,n_t$ nodes, respectively ($\sum_{j=1}^t n_j=n$).
Let $0<\eta<1$ be a sufficiently high value indicating the confidence level in declaring the graph $G$ to be hyperbolic.
Then, we call our given graph $G$ to be (scaled) hyperbolic if and only if 
\begin{multline*}
\Gscaled(G)
=
\frac
{
\text{
\begin{tabular}{c}
\small
number of subset of four nodes $\left\{u_i,u_j,u_k,u_\ell\right\}$ 
\\
\small
such that $\gscaled_{u_i,u_j,u_k,u_\ell}>\eps$
\end{tabular}
}
}
{
\text{ 
\begin{tabular}{c}
\small
number of all possible combinations of four nodes 
\\
\small
that contribute to hyperbolicity
\end{tabular}
}
}
\\
=
\frac
{
\text{
\begin{tabular}{c}
\small
number of subset of four nodes $\left\{u_i,u_j,u_k,u_\ell\right\}$ 
\\
\small
such that $\gscaled_{u_i,u_j,u_k,u_\ell}>\eps$
\end{tabular}
}
}
{
\sum_{1\leq j\leq t \colon n_j>3} \binom{n_j}{4} 
}
<
1-\eta
\end{multline*}
The values of $\Gscaled(G)$ 
for our networks are shown in Table~\ref{newtab1} and Table~\ref{newtab2}.
It can be seen that, for all scaled hyperbolicity measures and for all networks, the value of $1-\eta$ is very close to zero.

\renewcommand{\tabcolsep}{2pt}
\renewcommand{\arraystretch}{1}
\begin{table*}
\renewcommand{\arraystretch}{1.2}
\caption{\label{newp1}$p$-values for the $\Gscaled(G)$ values for biological networks for $Y\in \left\{ \cD,\, L,\, L+M+S\right\}$.
In general, a $p$-value less than $0.05$ (shown in {\bf boldface}) is considered to be statistically significant,
and a $p$-value above $0.05$ is considered to be {\em not} statistically significant.
}
%
%
\scalebox{0.87}[0.87]
{
\begin{tabular}{c c c c c c c c c c c c c}
\\
\hline
& 
& 
\multicolumn{11}{c}{
Network 
{\em id}
}
\\
[0.05in]
\hline
& 
& 
\begin{tabular}{c}
\small
{\bf 1}.  
\\
[-0.05in]
\small
{\em E. coli} 
\end{tabular}
&
\begin{tabular}{c}
\small {\bf 2}. 
\\
[-0.05in]
\small Mammalian 
\\
[-0.05in]
\small Signaling 
\end{tabular}
&
\begin{tabular}{c}
\small {\bf 3}. 
\\
[-0.05in]
\small {\em E. Coli}
\\
[-0.05in]
\small
transcriptional 
\end{tabular}
&
\begin{tabular}{c}
\small {\bf 4}. 
\\
[-0.05in]
\small T LGL 
\\
[-0.05in]
\small
signaling 
\end{tabular}
&
\begin{tabular}{c}
\small {\bf 5}. 
\\
[-0.05in]
\small {\em S. cerevisiae} 
\\
[-0.05in]
\small
transcriptional 
\end{tabular}
&
\begin{tabular}{c}
\small {\bf 6}. 
\\
[-0.05in]
\small
{\em C. elegans} 
\\
[-0.05in]
\small
Metabolic 
\end{tabular}
&
\begin{tabular}{c}
\small {\bf 7}. 
\\
[-0.05in]
\small
{\em Drosophila} 
\\
[-0.05in]
\small
segment 
\\
[-0.05in]
\small
polarity 
\end{tabular}
&
\begin{tabular}{c}
\small {\bf 8}. 
\\
[-0.05in]
\small
ABA
\\
[-0.05in]
\small
signaling 
\end{tabular}
&
\begin{tabular}{c}
\small {\bf 9}. 
\\
[-0.05in]
\small
Immune 
\\
[-0.05in]
\small
Response 	
\\
[-0.05in]
\small
Network 
\end{tabular}
&
\begin{tabular}{c}
\small {\bf 10}. 
\\
[-0.05in]
\small
T Cell 
\\
[-0.05in]
\small
Receptor	
\\
[-0.05in]
\small
Signalling 
\end{tabular}
&
\begin{tabular}{c}
\small {\bf 11}. 
\\
[-0.05in]
\small
Oriented 
\\
[-0.05in]
\small
yeast 
\\
[-0.05in]
\small
PPI 
\end{tabular}
\\
\hline
\multirow{3}{*}{
\begin{tabular}{c}
\small 
$p$ 
\\
[-0.05in]
\small
values
\end{tabular}
} 
&
$\Gdiam$ 
& 
$\mathbf{0.0018}$
&
$\mathbf{<0.0001}$
&
$\mathbf{<0.0001}$
&
$\mathbf{<0.0001}$
&
\color{darkgray}
$0.3321$
&
$\mathbf{<0.0001}$
&
$\mathbf{<0.0001}$
&
$\mathbf{<0.0001}$
&
$\mathbf{<0.0001}$
&
$\mathbf{<0.0001}$
&
$\mathbf{<0.0001}$
\\
[0.05in]
\cline{2-13}
&
$\GL$ 
&
$\mathbf{<0.0001}$
&
$\mathbf{<0.0001}$
&
$\mathbf{<0.0001}$
&
\color{darkgray}
$0.011$
&
\color{darkgray}
$0.3434$
&
$\mathbf{<0.0001}$
&
$\mathbf{<0.0001}$
&
\color{darkgray}
$0.9145$
&
$\mathbf{<0.0001}$
&
$\mathbf{<0.0001}$
&
$\mathbf{<0.0001}$
\\
[0.05in]
\cline{2-13}
&
$\GLMS$ 
&
\color{darkgray}
$0.5226$
&
$\mathbf{<0.0001}$
&
$\mathbf{<0.0001}$
&
$\mathbf{<0.0001}$
&
\color{darkgray}
$0.3424$
&
$\mathbf{<0.0001}$
&
$\mathbf{<0.0001}$
&
\color{darkgray}
$0.3342$
&
$\mathbf{<0.0001}$
&
$\mathbf{<0.0001}$
&
$\mathbf{<0.0001}$
\\
[0.05in]
\hline
\end{tabular}
}
\end{table*}
\renewcommand{\arraystretch}{1}

\renewcommand{\tabcolsep}{2pt}
\renewcommand{\arraystretch}{1.2}
\begin{table*}
\caption{\label{newp2}$p$-values for the $\Gscaled(G)$ values for social networks for $Y\in \left\{ \cD,\, L,\, L+M+S\right\}$.
In general, a $p$-value less than $0.05$ (shown in {\bf boldface}) is considered to be statistically significant,
and a $p$-value above $0.05$ is considered to be {\em not} statistically significant.
}
\scalebox{1}[1]
{
\begin{tabular}{c c c c c c c c c c c}
\\
\hline
& 
&
\multicolumn{9}{c}{
Network 
{\em id}
}
\\
[0.05in]
\hline
& 
&
\begin{tabular}{c}
\small 
{\bf 1}.  
\\
[-0.05in]
\small 
Dolphins	
\\
[-0.05in]
social 
\\
[-0.05in]
\small 
network 
\end{tabular}
& 
\begin{tabular}{c}
\small 
{\bf 2}.  
\\
[-0.05in]
\small 
American 
\\
[-0.05in]
\small 
College 
\\
[-0.05in]
\small 
Football 
\end{tabular}
&
\begin{tabular}{c}
\small 
{\bf 3}.  
\\
[-0.05in]
\small 
Zachary 
\\
[-0.05in]
\small 
Karate 
\\
[-0.05in]
\small 
Club 
\end{tabular}
&
\begin{tabular}{c}
\small 
{\bf 4}.  
\\
[-0.05in]
\small 
Books 
\\
[-0.05in]
\small 
about 
\\
[-0.05in]
\small 
US Politics 
\end{tabular}
&
\begin{tabular}{c}
\small 
{\bf 5}.  
\\
[-0.05in]
\small 
Sawmill 
\\
[-0.05in]
\small 
communication 
\end{tabular}
&
\begin{tabular}{c}
\small 
{\bf 6}.  
\\
[-0.05in]
\small 
Jazz 
\\
[-0.05in]
\small 
musician 
\end{tabular}
&
\begin{tabular}{c}
\small 
{\bf 7}.  
\\
[-0.05in]
\small 
Visiting ties 
\\
[-0.05in]
\small 
in San Juan 
\end{tabular}
&
\begin{tabular}{c}
\small 
{\bf 8}.  
\\
[-0.05in]
\small 
World Soccer 
\\
[-0.05in]
\small 
data, 
$1998$ 
\end{tabular}
&
\begin{tabular}{c}
\small 
{\bf 9}.  
\\
[-0.05in]
\small 
Les 
\\
[-0.05in]
\small 
Miserable 
\end{tabular}
\\
\hline
\multirow{3}{*}{
\begin{tabular}{c}
\small 
$p$ 
\\
[-0.05in]
\small
values
\end{tabular}
} 
&
$\Gdiam$ 
& $\mathbf{<0.0001}$ & $\mathbf{<0.0001}$ & $\mathbf{<0.0001}$ & $\mathbf{<0.0001}$ & $\mathbf{<0.0001}$ & $\mathbf{<0.0001}$ & $\mathbf{<0.0001}$ & $\mathbf{<0.0001}$ & $\mathbf{<0.0001}$
\\
[0.05in]
\cline{2-11}
&
$\GL$ 
& $\mathbf{<0.0001}$ & $\mathbf{<0.0001}$ & $\mathbf{<0.0001}$ & $\mathbf{<0.0001}$ & $\mathbf{<0.0001}$ & $\mathbf{<0.0001}$ & \color{darkgray}$0.0779$ & $\mathbf{<0.0001}$ & $\mathbf{<0.0001}$
\\
[0.05in]
\cline{2-11}
&
$\GLMS$ 
& $\mathbf{<0.0001}$ & $\mathbf{<0.0001}$ & $\mathbf{<0.0001}$ & $\mathbf{<0.0001}$ & $\mathbf{<0.0001}$ & $\mathbf{<0.0001}$ & $\mathbf{<0.0001}$ & $\mathbf{<0.0001}$ & $\mathbf{<0.0001}$
\\
[0.05in]
\hline
\end{tabular}
}
\end{table*}
\renewcommand{\arraystretch}{1}

We next tested the statistical significance of the $\Gscaled(G)$ values by computing the statistical significance values (commonly 
called $p$-values) of these $\Gscaled(G)$ values for each network $G$ with respect to a null hypothesis model of the networks. 
We use a standard method used in the network science literature (\EG, see~\cite{albert-et-al-2011,net1}) for such purpose. 
For each network $G$, we generated $100$ randomized versions of the network  using a Markov-chain algorithm~\cite{KTV99} by swapping the endpoints of randomly selected pairs of
edges until $20\%$ of the edges was changed. We 
computed the values of
$\Gscaled\left(G_{\mathrm{rand}_1}\right)$, $\Gscaled\left(G_{\mathrm{rand}_2}\right)$, $\dots$, $\Gscaled\left(G_{\mathrm{rand}_{100}}\right)$.
We then used an (unpaired) one-sample student's {\sf t}-test to determine the probability that 
$\Gscaled(G)$ belongs to the same distribution as
$\Gscaled\left(G_{\mathrm{rand}_1}\right)$, $\Gscaled\left(G_{\mathrm{rand}_2}\right)$, $\dots$, $\Gscaled\left(G_{\mathrm{rand}_{100}}\right)$.

The $p$-values, tabulated in Table~\ref{newp1} and Table~\ref{newp2}, clearly show that all social networks and all except two biological networks 
can be classified as hyperbolic in a statistically significant manner, 
implying that the topologies of these networks are close to a ``tree topology''.
Indeed, for biological networks, the assumption of chain-like or tree-like topology
is frequently made in the traditional molecular biology literature~\cite{Alberts-1994}. 
Independent current observations also provide evidence of tree-like topologies for various biological networks, \EG, the average in/out degree of transcriptional 
regulatory networks~\cite{net1,Lee-et-al-2002} and of a mammalian signal transduction network~\cite{net2} is close to $1$, so cycles are very rare.


\begin{figure}
\epsfig{file=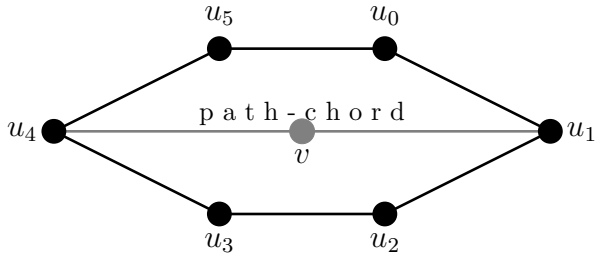}
\caption{\label{ch1}Path-chord of a cycle $\cC=\left(u_0.u_1,u_2,u_3,u_4,u_5,u_0\right)$.}
\end{figure}

\subsection{Hyperbolicity and crosstalk in regulatory networks}
\label{corrected-sec}

Let $\cC=\left(u_0,u_1,\dots,u_{k-1},u_0\right)$ be a cycle of $k\geq 4$ nodes. A {\em path-chord} of $\cC$ is defined to be a
path $u_i\!\stackrel{\cP}{\leftrightsquigarrow}\! u_j$ between two distinct nodes $u_i,u_j\in\cC$ such that the length of $\cP$ is 
less than $(i-j) \pmod{k}$ (see \FI{ch1}).
A path-chord of length $1$ is simply called a chord.

We find that large cycles without a path-chord imply large lower bounds on  hyperbolicity (see Theorem~\ref{kk1} in 
Section~\ref{kk1-sec} of the appendix). In particular, 
$G$ does not have a cycle of more than $4\,\gadd_{\mathrm{worst}}(G)$ nodes that does not have a path-chord.
Thus, for example, if $\gadd_{\mathrm{worst}}(G)<1$ then $G$ has no chordless cycle, \IE, $G$ is a chordal graph.
The intuition behind the proof of Theorem~\ref{kk1} is that if $G$ contains a long cycle without a 
path-chord then we can select four {\em almost} equidistant nodes on the cycle and these nodes give a large hyperbolicity value.
This general result has the following implications for regulatory networks:
\begin{itemize}
\item
If a node regulates itself through a long feedback loop (\EG, of length at least $6$ if 
$\gadd_{\mathrm{worst}}(G)=\nicefrac{3}{2}$) then this loop {\em must} have a path-chord. Thus it follows
that there exists a {\em shorter} feedback cycle through the same node. 

\item
A chord or short path-chord can be interpreted as {\em crosstalk} between two paths between a pair of nodes. 
With this interpretation, the following conclusion follows.
If one node in a regulatory network regulates another node through two sufficiently long paths, then there must be a crosstalk path between these
two paths. For example, assuming 
$\gadd_{\mathrm{worst}}(G)=\nicefrac{3}{2}$,  there must be a crosstalk path if the sum of lengths of the two paths is at least $6$. 
In general, the number of crosstalk paths between two paths increases {\em at least} linearly with the total length of the two paths. 
The general conclusion that can be 
drawn is that independent linear pathways that connect a signal to the same output node (\EG, transcription factor) are rare, and if multiple pathways exist then they are
interconnected through cross-talks.
\end{itemize}

\subsection{Shortest-path triangles and crosstalk paths in regulatory networks}
\label{sec1}

\noindent
{\bf ({\it a}) Result related to triplets of shortest paths} 
Originally, the hyperbolicity measure was introduced for infinite continuous metric spaces with negative curvature via the
concept of the ``thin'' and ``slim'' triangles (\EG, see~\cite{book}). For finite discrete metric spaces as induced by an undirected graph, 
one can analogously define a {\em shortest-path triangle} (or, simply a {\em triangle}) 
$\Delta_{\left\{u_0,u_1,u_2\right\}}$ as a set of three distinct nodes $u_0,u_1,u_2$ with a set of three shortest paths 
$\cP_{\Delta}\left(u_0,u_1\right)$, $\cP_{\Delta}\left(u_0,u_2\right)$, $\cP_{\Delta}\left(u_1,u_2\right)$
between $u_0$ and $u_1$, $u_0$ and $u_2$, and $u_1$ and $u_2$, respectively.
As illustrated on \FI{f4-old}, in hyperbolic networks we are guaranteed to find short 
paths\footnote{By a short path here, we mean a path whose length is at most a constant times 
$\gadd_{ \Delta_{ \{u_0,u_1,u_2 \} } }$ (note that $\gadd_{ \Delta_{ \{u_0,u_1,u_2 \} } } \leq \gadd_{\mathrm{worst}}(G)$).} between 
the nodes that make up $\cP_{\Delta}\left(u_0,u_1\right)$, $\cP_{\Delta}\left(u_0,u_2\right)$, $\cP_{\Delta}\left(u_1,u_2\right)$. This 
is formally stated in Theorem~\ref{lem1} in Section~\ref{lem1-sec} of the appendix. Moreover, as Corollary~\ref{lem1-cor} 
(in Section~\ref{lem1-sec} of the appendix)
states, we can have a small
Hausdorff distance between these shortest paths.
This result is a {\em proper} generalization of our previous result on path-cords. Indeed, in the special case when 
$u_1$ and $u_2$ are the same node the triangle becomes a shortest-path cycle involving the shortest paths between $u_0$ and $u_1$ and 
the short-cord result is obtained.

A proof of Theorem~\ref{lem1} is obtained by appropriate modification of a known similar bound 
for infinite continuous metric spaces.

The implications of this result for regulatory networks can be summarized as follows:

\begin{quote}
If we consider a feedback loop (cycle) or feed-forward loop formed by the shortest paths among three nodes, we can expect short cross-talk paths between 
these shortest paths. Consequently, the feedback or feed-forward loop will be {\em nested} with ``additional'' feed-back or feed-forward loops in which one of the paths will
be slightly longer. 
\end{quote}

\noindent
The above finding is empirically supported by the observation that network motifs (\EG, feed-forward or feed-back loops composed of three nodes and three edges) 
are often nested~\cite{albert-review}.

\begin{figure}
\epsfig{file=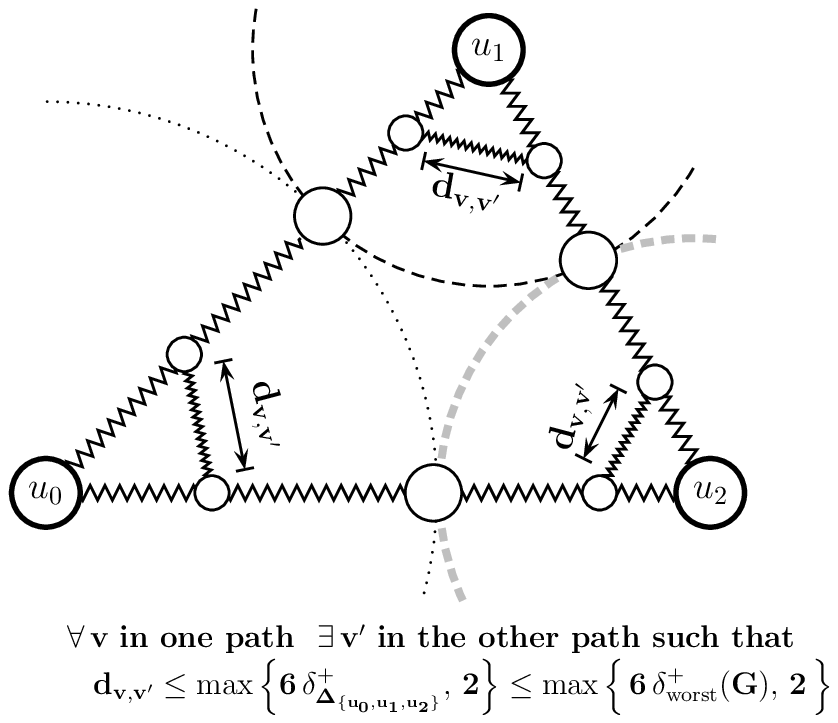}
\caption{\label{f4-old}An informal and simplified pictorial illustration of the claims in Section~\ref{sec1}(a).}
\end{figure}
%

\begin{figure}
\epsfig{file=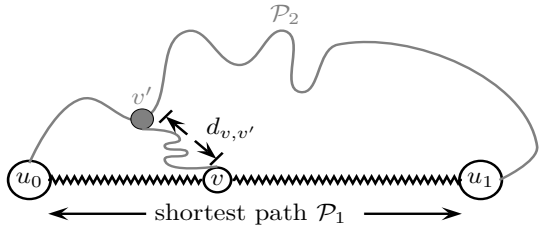}
\caption{\label{f8-old}An informal and simplified pictorial illustration of the claims in Section~\ref{sec1}(b).}
\end{figure}
%
%

\vspace*{0.1in}
\noindent
{\bf ({\it b}) Results related to the distance between two exact or approximate shortest paths between the same pair of 
nodes}
It is reasonable to assume that, when up- or down-regulation of a target node is mediated by two or more short 
paths~\footnote{Here by short paths we mean either a shortest path or an approximately shortest path whose length is not too much above the length of
a shortest path, \IE, a $\mu$-approximate short path or a $\eps$-additive-approximate short path,  
as defined in the subsequent ``Formal Justifications and Intuitions'' subsection, 
for small $\mu$ or small $\eps$, respectively.} 
starting from the {\em same} regulator node, additional
very long paths between the same regulator and target node do {\em not} contribute significantly to the target node's regulation. We refer to the short paths as relevant, and
to the long paths as irrelevant. Then, our finding can be summarized by saying that:

\begin{quote}
{\em almost all} relevant paths between two nodes have crosstalk paths between each other.
\end{quote}

\vspace*{0.1in}
\noindent
{\bf Formal Justifications and Intuitions} 
(see Theorem~\ref{lem2} and Corollary~\ref{cor1} in Section~\ref{lem2-sec} and 
Theorem~\ref{lem2-1} and Corollary~\ref{cor2-2} in Section~\ref{lem2-1-sec} of the appendix)

\vspace*{0.1in}
\noindent
We use the following two quantifications of ``approximately'' short paths:
\begin{itemize}
\item
A path $u_0\!\stackrel{\cP}{\leftrightsquigarrow}\! u_k =\big(u_0,u_1,\dots,u_k\big)$ is $\mu$-approximate short
provided $\ell\big(u_i\!\stackrel{\cP}{\leftrightsquigarrow}\! u_j\big)\leq \mu \, d_{u_i,u_j}$ for all $0\leq i<j\leq k$,

\item
A path $u_0\!\stackrel{\cP}{\leftrightsquigarrow}\! u_k$ is $\eps$-additive-approximate short provided
$\ell\left(\cP\right)\leq d_{u_0,u_k}+\eps$.
\end{itemize}
A mathematical justification for the claim then is provided by two separate theorems and their corollaries:
\begin{itemize}
\item
Let $\cP_1$ and $\cP_2$ be a shortest path and an arbitrary path, respectively, between two nodes $u_0$ and $u_1$.
Then, Theorem~\ref{lem2} and Corollary~\ref{cor1} implies that,  for every node $v$ on $\cP_1$, there exists a node $v'$ on $\cP_2$ such that 
$d_{v,v'}$ depends linearly on $\gadd_{\mathrm{worst}}(G)$, only logarithmically on the length of $\cP_2$ and 
does {\em not} depend on the size or any other parameter of the network. 

To obtain this type of bound, one needs to apply Theorem~\ref{lem1} on $u_0$, $u_1$ and the middle node of
the path $\cP_2$ and then use the same approach recursively on a part of the path $\cP_2$ containing at most $\left\lceil\nicefrac{\left(\cP_2\right)}{2}\right\rceil$ edges. 
The depth of the level of recursion provides the logarithmic 
factor in the bound.

\item
If $\cP_1$ and $\cP_2$ are two short paths between $u_0$ and $u_1$ then 
Theorem~\ref{lem2-1} and Corollary~\ref{cor2-2} imply that 
the Hausdorff distance between $\cP_1$ and $\cP_2$ 
depends on $\gadd_{\mathrm{worst}}(G)$ only and does {\em not} depend on the size or any other parameter of the network. 

Intuitively, Theorem~\ref{lem2-1} and Corollary~\ref{cor2-2} can be thought of as generalizing and improving the bound in Theorem~\ref{lem2} for approximately short paths. 
\end{itemize}
%

%
%
\begin{figure}
\epsfig{file=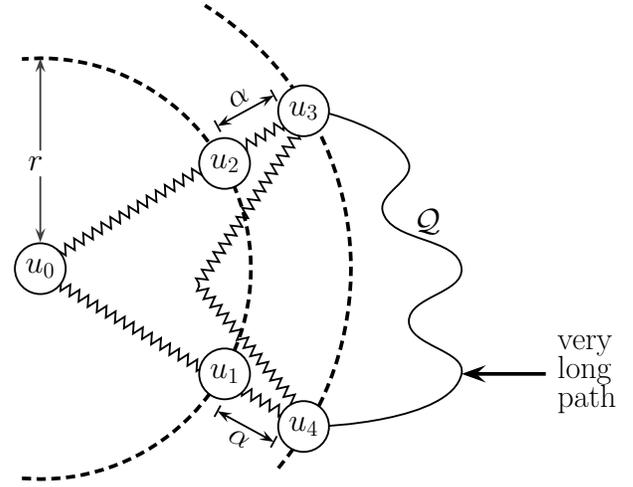}
\caption{\label{f11-old}An informal and simplified pictorial illustration of claim $(\star)$ in Section~\ref{essen-sec}. As the
nodes $u_3$ and $u_4$ move further away from the center node $u_0$, the shortest path between them bends more towards $u_0$
and any path between them that does not involve a node in the ball $\cup_{r'\leq r}B_{r'}\left(u_0\right)$ is long enough.}
\end{figure}
%

\subsection{Identifying essential edges in the regulation between two nodes}
\label{essen-sec}

For a given $\xi>0$ and a node $u$, let $\mathcal{B}_{\xi}\,(u)=\left\{ \, v \, | \, d_{u,v}=\xi \, \right\}$ denote the 
``boundary of the $\xi$-neighborhood'' of $u$, \IE, the set of all nodes at a distance of precisely $\xi$ from $u$. 
Our two findings in the present context are as stated in {\bf (I)} and {\bf (II)} below.

\vspace*{0.1in}
\noindent
{\bf (I) Identifying relevant paths between a source and a target node}
Suppose that we pick a node $v$ and consider the {\em strict} $\xi$-neighborhood 
of $v$
\[
\displaystyle N^+_{\xi}(v)=\bigcup_{r\leq \xi}\mathcal{B}_{r'}\left(v\right)\setminus \big\{ u \, | \, \text{degree of $u$ is one} \big\}
\]
(\IE, the set of all nodes, excluding nodes of degree $1$, that are at a distance at most $\xi$ from $u$)
for a sufficiently large $\xi$.
Consider two nodes $u_1$ and $u_2$ on the boundary of this neighborhood, \IE, at a distance $\xi$ from $v$. 
Then, the following holds:

\begin{quote}
($\star$)
the relevant (short) regulatory paths
between $u_1$ and $u_2$ do {\em not} leave the neighborhood, \IE, {\em all} the edges in the relevant regulatory paths are in the neighborhood. 
\end{quote}

\noindent
Thus, {\em only} the edges inside the neighborhood are relevant to the regulation among this pair of nodes.

This result can be adapted to find the most relevant paths between the input node
$u_{\mathrm{source}}$
and output node 
$u_{\mathrm{target}}$
of a signal transduction network. In many situations, for example
when the signal transduction network is inferred from undirected protein-protein interaction data, a large number of paths can potentially be included in the 
signal transduction network as the protein-protein interaction network has a large connected component with a small average path length~\cite{albert-review}. 
There is usually {\em no} prior knowledge on which of the existing paths are relevant to the signal transduction network. A hyperbolicity-based method is to first find a central node 
$u_{\mathrm{central}}$
which is at equal distance between 
$u_{\mathrm{source}}$
and 
$u_{\mathrm{target}}$, 
and is on the shortest, or close to shortest, path between 
$u_{\mathrm{source}}$
and 
$u_{\mathrm{target}}$.
Then one constructs the neighborhood around 
$u_{\mathrm{central}}$
such that 
$u_{\mathrm{source}}$
and 
$u_{\mathrm{target}}$
are on the boundary of this neighborhood. 
Applying this result, the paths relevant to the signal transduction network are inside the neighborhood, and the paths that go out of the neighborhood are irrelevant.
See \FI{f11-old} for a pictorial illustration of this implication.

%
%
\begin{figure}
\epsfig{file=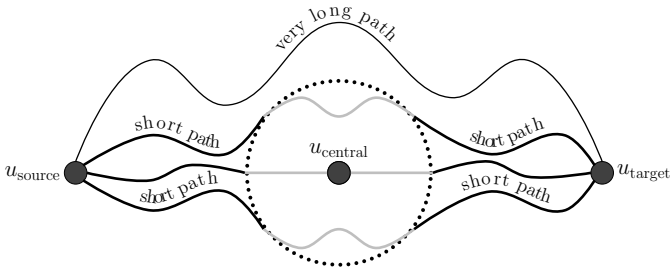}
\caption{\label{fig-rel}An informal and simplified pictorial illustration of claim $(\star\!\star)$ in Section~\ref{essen-sec}.
Knocking out the nodes in a small neighborhood of $u_{\mathrm{central}}$ 
cuts off all relevant (short) regulation between $u_{\mathrm{source}}$ and $u_{\mathrm{target}}$.}
\end{figure}
%

\vspace*{0.1in}
\noindent
{\bf (II) Finding essential nodes}
Again, consider an input node 
$u_{\mathrm{source}}$ and output node 
$u_{\mathrm{target}}$
of a signal transduction network, 
and let 
$u_{\mathrm{central}}$ be a central node 
which is on the shortest path between them and at approximately equal distance between 
$u_{\mathrm{source}}$
and 
$u_{\mathrm{target}}$.
Our results show that\footnote{O and $\Omega$ are the standard notations used 
in analyzing asymptotic upper and lower bounds in the computer science literature: given
two functions $f(n)$ and $g(n)$ of a variable $n$, $f(n)=\mathrm{O}(g(n))$ (respectively, $f(n)=\Omega(g(n))$ provided there exists two constants $n_0,c>0$ such that
$f(n)\leq c\,g(n)$ (respectively, $f(n)\geq c\,g(n)$) for $n\geq n_0$.}

\begin{quote}
($\star\!\star$)
if one constructs a small $\xi$-neighbourhood around $u_{\mathrm{central}}$ 
with $\xi=\text{O}\left(\gadd_{\mathrm{worst}}(G)\right)$,  
then {\em all} relevant (short or approximately short) paths 
between 
$u_{\mathrm{source}}$
and 
$u_{\mathrm{target}}$
must include a {\em node} in this $\xi$-neighborhood. Therefore, ``knocking out'' the nodes in this $\xi$-neighborhood cuts off
all relevant regulatory paths between 
$u_{\mathrm{source}}$
and 
$u_{\mathrm{target}}$. 
\end{quote}

\noindent
See \FI{fig-rel} for a pictorial illustration of this implication.
Note that the size $\xi$ of the neighborhood depends only on $\gadd_{\mathrm{worst}}(G)$ which, as our empirical results indicate, is usually
a small constant for real networks.

\vspace*{0.1in}
\noindent
{\bf Formal Justifications and Intuitions for $(\star)$ and $(\star\!\star)$} 
(see Theorem~\ref{lem3} and Corollary~\ref{cor2} in Section~\ref{lem3-sec} of the appendix)

\vspace*{0.1in}
\noindent
Suppose that we are given the following:
\begin{itemize}
\item
three integers $\kappa\geq 4$, $\alpha>0$, 
\\
$r>\left(\frac{\kappa}{2}-1\right)\left( 6\,\gadd_{\mathrm{worst}}(G) + 2 \right)$,

\item
five nodes $u_0,u_1,u_2,u_3,u_4$ such that
\begin{itemize}
\item
$u_1,u_2\in B_r\left(u_0\right)$ with $d_{u_1,u_2}\geq \frac{\kappa}{2} \, \left( 6\,\gadd_{\mathrm{worst}}(G) + 2 \right)$,

\item
$d_{u_1,u_4}=d_{u_2,u_3}=\alpha$. 
\end{itemize}
\end{itemize}
Then, $(\star)$ and $(\star\!\star)$ are implied by 
following type of {\em asymptotic} bounds provided by Theorem~\ref{lem3} and Corollary~\ref{cor2}:

\begin{quote}
For a suitable positive value $\lambda=\mathrm{O}\big(\gadd_{\mathrm{worst}}(G)\,\big)$, if $d_{u_1,u_4}=d_{u_2,u_3}=\alpha>\lambda$ then  
one of the following is true for any path $\cQ$ between $u_3$ and $u_4$ that does not involve a node in $\cup_{r'\leq r}\mathcal{B}_{r'}\left(u_0\right)$:
\begin{itemize}
\item
$\cQ$ does not exist (\IE, $\ell(\cQ)\geq n$), or 

\item
$\cQ$ is much longer than a shortest path between the two nodes, \IE, if 
$\cQ$ is a $\mu$-approximate short path or a $\eps$-additive-approximate short path then 
$\mu$ or $\eps$ is large.
\end{itemize}
A pessimistic estimate shows that a value of $\lambda$ that is about $6\,\gadd_{\mathrm{worst}}(G) + 2$ suffices.
As we subsequently observe, for real networks the bound is much better, about 
$\lambda\approx\gadd_{\mathrm{worst}}(G)$.
\end{quote}

\begin{table*}
\renewcommand{\tabcolsep}{3pt}
\renewcommand{\arraystretch}{1}
\caption{\label{KO-prev}Effect of the prescribed neighborhood in claim ($\star$) on all edges in relevant paths.}
\scalebox{1}[1]
{
\begin{tabular} { l   c   c   c   c c   c   c   c }
\\
[-0.05in]
\multicolumn{9}{c}{
                    \begin{tabular}{r c l}
                             \small $\mathcal{S}\mathcal{P}$ & \small : & \small shortest path between $u_{\mathrm{source}}$ and $u_{\mathrm{target}}$
                                \\
                             \small $\mathcal{S}\mathcal{P}^{+1}$& \small : & \small paths between $u_{\mathrm{source}}$ and $u_{\mathrm{target}}$ with one extra edge than $\mathcal{S}\mathcal{P}$ 
                              ($1$-additive-approximate short path)
                                \\
                             \small $\mathcal{S}\mathcal{P}^{+2}$& \small : & \small paths between $u_{\mathrm{source}}$ and $u_{\mathrm{target}}$ with two extra edges than $\mathcal{S}\mathcal{P}$
                              ($2$-additive-approximate short path)
                                \\
                             \small $N^+_{\xi} \left(u_{\,\mathrm{central}}\right)$& 
                                   \small : & \small strict $\xi=d_{u_{\,\mathrm{source}}\,,\,u_{\,\mathrm{target}}}$ neighborhood of $u_{\mathrm{central}}$ 
                                \\
                              \small $n$ & : & \small size (number of nodes) of the network
                                \\
                             \large $\nicefrac{N^+_{\xi} \left(u_{\,\mathrm{central}}\right)}{n}$& 
                                   \small : & \small fraction of strict $\xi=d_{u_{\,\mathrm{source}}\,,\,u_{\,\mathrm{target}}}$ neighborhood of $u_{\mathrm{central}}$ with respect to the size of the network
                    \end{tabular}
                   }
\\
[-0.1in]
\\
\hline
\\
[-0.1in]
\multicolumn{1}{c}{
\begin{tabular}{c}
\small
Network 
\\
\small
name 
\end{tabular}
}
& 
\large$u_{\mathrm{source}}$ &  \large$u_{\mathrm{target}}$ & $d_{u_{\,\mathrm{source}}\,,\,u_{\,\mathrm{target}}}$ & \large$u_{\mathrm{central}}$ & 
        $\dfrac{N^+_{\xi} \left(u_{\,\mathrm{central}}\right)}{n}$ 
       &
         \begin{tabular}{c} 
                                  \small \% of $\mathcal{S}\mathcal{P}$ 
                                      \\
                                   \small with every 
                                      \\
                                   \small edge in the 
                                     \\
                                   \small neighborhood 
                                      \\
                                   \small of
                                   \small claim ($\star$)
                               \end{tabular}
&
         {  \begin{tabular}{c} 
                                  \small \% of $\mathcal{S}\mathcal{P}^{+1}$
                                      \\
                                   \small with every 
                                      \\
                                   \small edge in the 
                                      \\
                                   \small neighborhood 
                                      \\
                                   \small of 
                                   \small claim ($\star$)
                               \end{tabular}
                           }
&
         {  \begin{tabular}{c} 
                                  \small \% of $\mathcal{S}\mathcal{P}^{+2}$
                                      \\
                                   \small with every
                                      \\
                                   \small edge in the 
                                      \\
                                   \small neighborhood 
                                      \\
                                   \small of
                                   \small claim ($\star$)
                               \end{tabular}
                           }
%
\\
\hline
\multirow{4}{*}{  \begin{tabular}{c} 
                       \small Network 1:  
                         \\
                       \small {\em E. coli} 
                         \\
                       \small transcriptional
                  \end{tabular}
               }
&
\multirow{2}{*}{ \small fliAZY  } 
&
\multirow{2}{*}{ \small arcA } 
&
\multirow{2}{*}{ \small $4$ } 
&
\small CaiF & \small 0.20 & \small 100\% & \small 100\% & \small 18\% 
\\
\cline{5-9}
& & & &
\small crp & \small 0.27 & \small 100\% & \small 100\% & \small 70\% 
\\
\cline{2-9}
&
\multirow{2}{*}{ \small fecA  } 
&
\multirow{2}{*}{ \small aspA } 
&
\multirow{2}{*}{ \small $6$ } 
&
\small crp & \small 0.43 & \small 100\% & \small 100\% & \small 100\% 
\\
\cline{5-9}
& & & & 
\small sodA & \small 0.28 & \small 100\% & \small 100\% & \small 62\% 
\\
\hline
\multirow{5}{*}{  \begin{tabular}{c} 
                       \small Network 4: 
                         \\
                        \small T-LGL 
                         \\
                        \small signaling
                  \end{tabular}
               }
&
{ \small IL15 } 
&
{ \small Apoptosis } 
&
\small $4$ 
&
\small GZMB & \small 0.37 & \small 100\% & \small 66\% & \small 40\% 
\\
\cline{2-9}
&
\multirow{3}{*}{ \small PDGF } 
&
\multirow{3}{*}{ \small Apoptosis } 
&
\multirow{3}{*}{ \small $6$ } 
&
\small IL2, NKFB & \small $\,\,\,$0.72,0.59 & \small 100\% & \small 100\% & \small 100\% 
\\
\cline{5-9}
& & & &
\small Ceramide & \small 0.60 & \small 80\% & \small 64\% & \small 36\% 
\\
\cline{5-9}
& & & & 
\small MCL1 & \small 0.59 & \small 80\% & \small 88\% & \small 93\% 
\\
\cline{2-9}
&
{ \small stimuli } 
&
{ \small Apoptosis } 
&
\small $4$
&
\small GZMB & \small 0.37 & \small 100\% & \small 100\% & \small 100\% 
\\
\hline
\end{tabular}
\renewcommand{\arraystretch}{1}
\renewcommand{\tabcolsep}{6pt}
}
\end{table*}

\vspace*{0.1in}
\noindent
{\bf Empirical evaluation of $(\star)$}

\vspace*{0.1in}
\noindent
We empirically investigated the claim in ($\star$) on relevant paths passing through a neighborhood of a central node for 
the following two biological networks:

\begin{quote}
\begin{description}
\item[Network 1:]  
{\em E. coli} transcriptional, and
\item[Network 4:] 
T-LGL signaling.
\end{description}
\end{quote}

\noindent
For each network we selected a few biologically relevant source-target pairs. 
For each such pair $u_{\mathrm{source}}$ and $u_{\mathrm{target}}$, 
we found the shortest path(s) between them. For each such shortest path, a central node $u_{\mathrm{central}}$ was identified. 
We then considered the $\xi$-neighborhood of $u_{\mathrm{central}}$ such that both 
both $u_{\mathrm{source}}$ and $u_{\mathrm{target}}$ are on the boundary of the neighborhood, and for each such neighborhood 
we determined what percentage of shortest or approximately short path (with one or two extra edges compared to shortest paths) 
between $u_{\mathrm{source}}$ and $u_{\mathrm{target}}$
had {\em all} edges in this neighborhood. The results, tabulated in Table~\ref{KO-prev}, support ($\star$).

\begin{table*}
\renewcommand{\tabcolsep}{3pt}
\renewcommand{\arraystretch}{1}
\caption{\label{KO}The effect of the size of the neighborhood in mediating short paths.}
\scalebox{0.97}[1]
{
\begin{tabular} { l   c   c   c   l   lr   lr   lr }
\multicolumn{11}{c}{
                    	\begin{tabular}{l}
                             \small $\mathcal{S}\mathcal{P}$ : shortest path between $u_{\mathrm{source}}$ and $u_{\mathrm{target}}$
                                \\
                             \small $\mathcal{S}\mathcal{P}^{+1}$: paths between $u_{\mathrm{source}}$ and $u_{\mathrm{target}}$ with one extra edge than $\mathcal{S}\mathcal{P}$
                              ($1$-additive-approximate short path)
                                \\
                             \small $\mathcal{S}\mathcal{P}^{+2}$: paths between $u_{\mathrm{source}}$ and $u_{\mathrm{target}}$ with two extra edges than $\mathcal{S}\mathcal{P}$
                              ($2$-additive-approximate short path)
                           \end{tabular}
                   }
\\
[-0.1in]
\\
\hline
\multicolumn{1}{c}{
\begin{tabular}{c}
\small Network 
\\
\small name 
\end{tabular}
}
& 
\large$u_{\mathrm{source}}$ &  \large$u_{\mathrm{target}}$ &  $d_{u_{\,\mathrm{source}}\,,\,u_{\,\mathrm{target}}}$ & \large$u_{\mathrm{central}}$ & 
         \multicolumn{2}{c}{  \begin{tabular}{c} 
                                  \small \% of $\mathcal{S}\mathcal{P}$ 
                                   \small with a node
                                      \\
                                   \small in $\xi$-neighborhood
                               \end{tabular}
                           }
&
         \multicolumn{2}{c}{  \begin{tabular}{c} 
                                  \small \% of $\mathcal{S}\mathcal{P}^{+1}$ 
                                   \small with a node
                                      \\
                                   \small in $\xi$-neighborhood
                               \end{tabular}
                           }
&
         \multicolumn{2}{c}{  \begin{tabular}{c} 
                                  \small \% of $\mathcal{S}\mathcal{P}^{+2}$ 
                                   \small with a node
                                      \\
                                   \small in $\xi$-neighborhood
                               \end{tabular}
                           }
%
\\
\hline
\multirow{4}{*}{  \begin{tabular}{c} 
                       \small Network 1:  
                         \\
                       \small {\em E. coli} 
                         \\
                       \small transcriptional
                         \\
                       \small $\gadd_{\mathrm{worst}}(G)=2$
                  \end{tabular}
               }
&
\multirow{2}{*}{ \small fliAZY  } 
&
\multirow{2}{*}{ \small arcA } 
&
\multirow{2}{*}{ \small $4$ } 
&
{ \small CaiF } 
          & \small $\,\,\,\,\,\,\,\,\xi=1$ & \small 100\%$\,\,\,\,\,\,\,\,\,\,$ & \small $\,\,\,\,\,\,\,\,\,\,\,\,\,\xi=1$ & \small $\!\!\!$71\%$\,\,\,\,\,\,\,\,\,\,$ & \small $\,\,\,\,\,\,\,\,\,\,\,\,\,\xi=1$ & \small 59\%$\,\,\,\,\,\,\,\,\,\,$ 
\\
\cline{5-11}
& & & & { \small crp } 
          & \small $\,\,\,\,\,\,\,\,\xi=1$ & \small 100\%$\,\,\,\,\,\,\,\,\,\,$ & \small $\,\,\,\,\,\,\,\,\,\,\,\,\,\xi=1$ & \small 100\%$\,\,\,\,\,\,\,\,\,\,$ & \small $\,\,\,\,\,\,\,\,\,\,\,\,\,\xi=1$ & \small 100\%$\,\,\,\,\,\,\,\,\,\,$ 
\\
\cline{2-11}
&
\multirow{2}{*}{ \small fecA  } 
&
\multirow{2}{*}{ \small aspA } 
&
\multirow{2}{*}{ \small $6$ } 
&
{ \small crp } 
          & \small $\,\,\,\,\,\,\,\,\xi=1$ & \small 100\%$\,\,\,\,\,\,\,\,\,\,$ & \small $\,\,\,\,\,\,\,\,\,\,\,\,\,\xi=1$ & \small 100\%$\,\,\,\,\,\,\,\,\,\,$ & \small $\,\,\,\,\,\,\,\,\,\,\,\,\,\xi=1$ & \small 100\%$\,\,\,\,\,\,\,\,\,\,$ 
\\
\cline{5-11}
& & &     &           { \small sodA } 
          & \small $\,\,\,\,\,\,\,\,\xi=1$ & \small 100\%$\,\,\,\,\,\,\,\,\,\,$ & \small $\,\,\,\,\,\,\,\,\,\,\,\,\,\xi=1$ & \small 100\%$\,\,\,\,\,\,\,\,\,\,$ & \small $\,\,\,\,\,\,\,\,\,\,\,\,\,\xi=1$ & \small 100\%$\,\,\,\,\,\,\,\,\,\,$ 
\\
\hline
\multirow{10}{*}{  \begin{tabular}{c} 
                       \small Network 4: 
                         \\
                        \small T-LGL 
                         \\
                        \small signaling
                         \\
                       \small $\gadd_{\mathrm{worst}}(G)=2$
                  \end{tabular}
               }
&
{ \small IL15 } 
&
{ \small apoptosis } 
&
\small $4$
&
{ \small GZMB } 
          & \small $\,\,\,\,\,\,\,\,\xi=1$ & \small 100\%$\,\,\,\,\,\,\,\,\,\,$ & \small $\,\,\,\,\,\,\,\,\,\,\,\,\,\xi=1$ & \small 100\%$\,\,\,\,\,\,\,\,\,\,$ & \small $\,\,\,\,\,\,\,\,\,\,\,\,\,\xi=1$ & \small 100\%$\,\,\,\,\,\,\,\,\,\,$ 
\\
\cline{2-11}
&
\multirow{8}{*}{ \small PDGF } 
&
\multirow{8}{*}{ \small apoptosis } 
&
\multirow{8}{*}{ \small $6$ } 
&
\multirow{2}{*}{ \small IL2 } 
          & \small $\,\,\,\,\,\,\,\,\xi=1$ & \small 80\%$\,\,\,\,\,\,\,\,\,\,$ & \small $\,\,\,\,\,\,\,\,\,\,\,\,\,\xi=1$ & \small 82\%$\,\,\,\,\,\,\,\,\,\,$ & \small $\,\,\,\,\,\,\,\,\,\,\,\,\,\xi=1$ & \small 93\%$\,\,\,\,\,\,\,\,\,\,$ 
\\
& & & &
          & \small $\,\,\,\,\,\,\,\,\xi=2$ & \small 100\%$\,\,\,\,\,\,\,\,\,\,$ & \small $\,\,\,\,\,\,\,\,\,\,\,\,\,\xi=2$ & \small 100\%$\,\,\,\,\,\,\,\,\,\,$ & \small $\,\,\,\,\,\,\,\,\,\,\,\,\,\xi=2$ & \small 100\%$\,\,\,\,\,\,\,\,\,\,$ 
\\
\cline{5-11}
& & & &
\multirow{2}{*}{ \small NFKB } 
          & \small $\,\,\,\,\,\,\,\,\xi=1$ & \small 80\%$\,\,\,\,\,\,\,\,\,\,$ & \small $\,\,\,\,\,\,\,\,\,\,\,\,\,\xi=1$ & \small 86\%$\,\,\,\,\,\,\,\,\,\,$ & \small $\,\,\,\,\,\,\,\,\,\,\,\,\,\xi=1$ & \small 76\%$\,\,\,\,\,\,\,\,\,\,$ 
\\
& & & &
          & \small $\,\,\,\,\,\,\,\,\xi=2$ & \small 100\%$\,\,\,\,\,\,\,\,\,\,$ & \small $\,\,\,\,\,\,\,\,\,\,\,\,\,\xi=2$ & \small 100\%$\,\,\,\,\,\,\,\,\,\,$ & \small $\,\,\,\,\,\,\,\,\,\,\,\,\,\xi=2$ & \small 100\%$\,\,\,\,\,\,\,\,\,\,$ 
\\
\cline{5-11}
& & & & 
\multirow{2}{*}{ \small Ceramide } 
          & \small $\,\,\,\,\,\,\,\,\xi=1$ & \small 40\%$\,\,\,\,\,\,\,\,\,\,$ & \small $\,\,\,\,\,\,\,\,\,\,\,\,\,\xi=1$ & \small 23\%$\,\,\,\,\,\,\,\,\,\,$ & \small $\,\,\,\,\,\,\,\,\,\,\,\,\,\xi=1$ & \small 40\%$\,\,\,\,\,\,\,\,\,\,$ 
\\
& & & & 
          & \small $\,\,\,\,\,\,\,\,\xi=2$ & \small 100\%$\,\,\,\,\,\,\,\,\,\,$ & \small $\,\,\,\,\,\,\,\,\,\,\,\,\,\xi=2$ & \small 100\%$\,\,\,\,\,\,\,\,\,\,$ & \small $\,\,\,\,\,\,\,\,\,\,\,\,\,\xi=2$ & \small 100\%$\,\,\,\,\,\,\,\,\,\,$ 
\\
\cline{5-11}
& & & & 
\multirow{2}{*}{ \small MCL1 } 
          & \small $\,\,\,\,\,\,\,\,\xi=1$ & \small 60\%$\,\,\,\,\,\,\,\,\,\,$ & \small $\,\,\,\,\,\,\,\,\,\,\,\,\,\xi=1$ & \small 47\%$\,\,\,\,\,\,\,\,\,\,$ & \small $\,\,\,\,\,\,\,\,\,\,\,\,\,\xi=1$ & \small 73\%$\,\,\,\,\,\,\,\,\,\,$ 
\\
& & & & 
          & \small $\,\,\,\,\,\,\,\,\xi=2$ & \small 100\%$\,\,\,\,\,\,\,\,\,\,$ & \small $\,\,\,\,\,\,\,\,\,\,\,\,\,\xi=2$ & \small 100\%$\,\,\,\,\,\,\,\,\,\,$ & \small $\,\,\,\,\,\,\,\,\,\,\,\,\,\xi=2$ & \small 100\%$\,\,\,\,\,\,\,\,\,\,$ 
\\
\cline{2-11}
&
{ \small Stimuli } 
&
{ \small apoptosis } 
&
\small $4$
&
{ \small GZMB } 
          & \small $\,\,\,\,\,\,\,\,\xi=1$ & \small 100\%$\,\,\,\,\,\,\,\,\,\,$ & \small $\,\,\,\,\,\,\,\,\,\,\,\,\,\xi=1$ & \small 100\%$\,\,\,\,\,\,\,\,\,\,$ & \small $\,\,\,\,\,\,\,\,\,\,\,\,\,\xi=1$ & \small 100\%$\,\,\,\,\,\,\,\,\,\,$ 
\\
\hline
\end{tabular}
}
\end{table*}
\renewcommand{\arraystretch}{1}
\renewcommand{\tabcolsep}{6pt}
%

\vspace*{0.1in}
\noindent
{\bf Empirical evaluation of $(\star\!\star)$} 

\vspace*{0.1in}
\noindent
We empirically investigated the size $\xi$ of the neighborhood in claim ($\star\!\star$) for 
the same two biological networks and the same combinations of source, target and central nodes as in claim ($\star$).
We considered the $\xi$-neighborhood of $u_{\mathrm{central}}$ for $\xi=1,2,\dots$, and for each such neighborhood 
we determined what percentage of shortest or approximately short path (with one or two extra edges compared to shortest paths) 
between $u_{\mathrm{source}}$ and $u_{\mathrm{target}}$
involved a node in this neighborhood
(not counting $u_{\mathrm{source}}$ and $u_{\mathrm{target}}$).
The results, tabulated in Table~\ref{KO}, show that
removing the nodes in a
$\xi\leq\gadd_{\mathrm{worst}}(G)$ 
neighborhood around the central nodes disrupts all the relevant paths of the selected networks.
As $\gadd_{\mathrm{worst}}(G)$ is a small constant for
all of our biological networks, this implies that the central node and its neighbors within a small distance are the essential nodes in the 
signal propagation between $u_{\mathrm{source}}$ and $u_{\mathrm{target}}$.


\subsection{Effect of hyperbolicity on structural holes in social networks}
\label{socialnet-sec}

For a node $u\in V$, let $\Nbr(u)=\left\{ \, v \,|\,\{u,v\}\in E \, \right\}$ be the set of neighbors of (\IE, nodes adjacent to) $u$.
To quantify the useful information in a social network, Ron Burt in~\cite{Burt95} defined a measure of the {\em structural holes} of a network. 
For an undirected unweighted connected graph $G=(V,E)$ and a node $u\in V$ with degree larger than $1$, this measure
$\mathfrak{M}_u$ of the structural hole at $u$ is defined as~\cite{Burt95,Borg97}:

\begin{multline*}
\mathfrak{M}_u
\stackrel{\mathrm{def}}{=\joinrel=}
\sum_{v\in V} 
\,
\left(
\,
\dfrac
{
a_{u,v}+a_{v,u}
}
{
\displaystyle
\max_{x\neq u} 
\Big\{ \, a_{u,x}+a_{x,u} \, \Big\}
}
\text{
\raisebox{-1.35ex} { \scalebox{1}[4] {$[$}  }
}
\!\!
1 \,- 
\right.
\\
\left.
\left.
\!\!
\text
{
\footnotesize
$
\displaystyle
\sum_{ \substack{y\in V \\ y\neq u,v} } 
$
}
\!\!
\left(
\frac
{
a_{u,y}+a_{y,u}
}
{
\!\!
\text
{
\footnotesize
$
\displaystyle
\sum_{x\neq u} 
$
}
\!\!
\left( a_{u,x}+a_{x,u} \right)
}
\right)
\,
\left(
\frac
{
a_{v,y}+a_{y,v}
}
{
\!\!
\text
{
\footnotesize
$
\displaystyle
\max_{z\neq y} 
$
}
\!\!
\left\{ a_{v,z}+a_{z,v} \right\}
}
\right)
\,
\,
\right]
\,
\,
\right)
\end{multline*}
where 
$
a_{p,q}=\left\{
\begin{array}{ll}
1, & \mbox{if $\{p,q\}\in E$}
\\
0, & \mbox{otherwise}
\end{array}
\right.
$
are the entries in the standard adjacency matrix of $G$.
By observing that $a_{p,q}=a_{q,p}$ and 
$\displaystyle\max_{x\neq u} \big\{ a_{u,x}+a_{x,u} \big\}= \max_{z\neq y} \big\{ a_{v,z}+a_{z,v} \big\}=2$, 
the above equation for 
$\mathfrak{M}_u$
can be simplified to
\begin{gather}
\mathfrak{M}_u
=
\big|\,\Nbr(u) \,\big|
\,-
\dfrac
{
\displaystyle
\sum_{ 
v,y\,\in\, \Nbr(u)
} 
\!\!
\!\!
a_{v,y}
}
{
\big|\,\Nbr(u) \,\big|
}
\label{sh2}
\end{gather}
Thus high-degree nodes whose neighbors are not connected to each other have high $\mathfrak{M}_u$ values.
For an intuitive interpretation and generalization of \eqref{sh2}, the following definition of weak and strong dominance will prove 
useful ({\em cf}. dominating set problem for graphs~\cite{GJ79} and point domination problems in geometry~\cite{BDG1}).
A pair of distinct nodes $v,y$ is weakly $(\rho,\lambda)$-dominated (respectively, {\bf strongly} $(\rho,\lambda)$-dominated) 
by a node $u$ provided (see {\rm\FI{domin-fig}}):
\begin{enumerate}
\item[(a)]
$\rho< d_{u,v},d_{u,y}\leq\rho+\lambda$, and 

\item[(b)]
for at least one shortest path $\cP$ (respectively, {\bf for every shortest path} $\cP$)
between $v$ and $y$, $\cP$ contains a node $z$ such that $d_{u,z}\leq\rho$.
\end{enumerate}
%

%
\begin{figure}
\epsfig{file=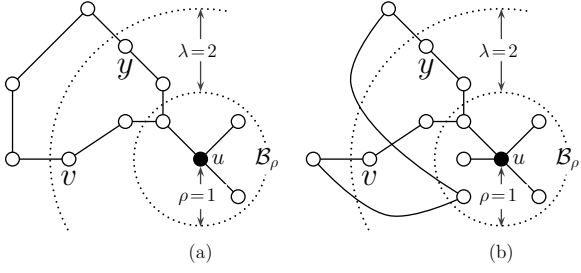}
\caption{\label{domin-fig}Illustration of weak and strong domination. 
(a) $v,y$ is weakly $(\rho,\lambda)$-dominated by $u$ since only one shortest path between $v$ and $y$ intersects $\mathcal{B}_{\rho}(u)$.
(b) $v,y$ is strongly $(\rho,\lambda)$-dominated by $u$ since all the shortest path between $v$ and $y$ intersect $\mathcal{B}_{\rho}(u)$.
}
\end{figure}
%

\noindent
Let 
$\mathbf{\{v,y\} \prec_{\,\mathrm{weak}}^{\,\rho,\lambda} u} \mbox{ (respectively, $\mathbf{\{v,y\} \prec_{\,\mathrm{strong}}^{\,\rho,\lambda} u}$) }$
\[
\hspace*{0.7in}
=
\left\{
\begin{array}{ll}
1, & 
\!\!
\begin{array}{l}
\mbox{\small if $v,y$ is weakly (respectively, {\bf strongly})}
\\
\,\,\,\,
\mbox{\small $(\rho,\lambda)$-dominated by $u$}
\end{array}
\\
[0.1in]
0, & \mbox{\small otherwise}
\end{array}
\right.
\]
Since $\mathcal{B}_1(u)=\bigcup_{0\,<\,j\,\leq\, 1} \mathcal{B}_j(u)=\Nbr(u)$, it follows that 
\[
\mathfrak{M}_u
=
\left| 
\,
\cup_{
0\,<\,j\,\leq\, 1} 
\mathcal{B}_j(u)
\,\right|
\,\,-\,\,
\frac
{
\sum_{ 
v,y\,\in\, 
\bigcup_{\,0\,<\,j\,\leq\, 1} \mathcal{B}_j(u)
} 
\left( 1 - \mathbf{\{v,y\} \prec_{\,\mathrm{weak}}^{\,0,1} u} \right)
}
{
\left| \, \cup_{0\,<\,j\,\leq\, 1} \mathcal{B}_j(u) \, \right|
}
\]
\vspace*{-10pt}
\[
=
\eX{
\left.
\hspace*{-0.15in}
\text{
\begin{tabular}{p{1.4in}}
\small 
number of pairs of nodes $v,y$ such that $v,y$ is {\bf weakly} $(0,1)$-dominated by $u$
\end{tabular}
}
\hspace*{-0.1in}
\right| 
\hspace*{-0.1in}
\text{
\begin{tabular}{p{1.4in}}
\small 
$v$ is selected uniformly randomly from $\bigcup_{0\,<\,j\,\leq\, 1} \mathcal{B}_j(u)$
\end{tabular}
}
\hspace*{-0.1in}
}
\]
\vspace*{-10pt}
\[
\geq
\eX{
\left.
\hspace*{-0.15in}
\text{
\begin{tabular}{p{1.4in}}
\small 
number of pairs of nodes $v,y$ such that $v,y$ is {\bf strongly} $(0,1)$-dominated by $u$
\end{tabular}
}
\hspace*{-0.1in}
\right| 
\hspace*{-0.1in}
\text{
\begin{tabular}{p{1.4in}}
\small 
$v$ is selected uniformly randomly from $\bigcup_{0\,<\,j\,\leq\, 1} \mathcal{B}_j(u)$
\end{tabular}
}
\hspace*{-0.1in}
}
\]
and a generalization of $\mathfrak{M}_u$ is given by (replacing $0,1$ by $\rho,\lambda$):
\[
\mathfrak{M}_{u,\rho,\lambda}
=
\left| 
\,
\cup_{
\rho\,<\,j\,\leq\, \lambda} 
\mathcal{B}_j(u)
\,\right|
\,\,-\,\,
\frac
{
\sum_{ 
v,y\,\in\, 
\bigcup_{\,\rho\,<\,j\,\leq\, \lambda} \mathcal{B}_j(u)
} 
\left( 1 - \mathbf{\{v,y\} \prec_{\,\mathrm{weak}}^{\,\rho,\lambda} u} \right)
}
{
\left| \, \bigcup_{\rho\,<\,j\,\leq\, \lambda} \mathcal{B}_j(u) \, \right|
}
\]
\vspace*{-10pt}
\[
=
\eX{
\hspace*{-0.15in}
\text{
\begin{tabular}{p{1.4in}}
\small 
number of pairs of nodes $v,y$ such that $v,y$ is {\bf weakly} $(\rho,\lambda)$-dominated by $u$
\end{tabular}
}
\hspace*{-0.1in}
\Bigg| 
\hspace*{-0.1in}
\text{
\begin{tabular}{p{1.4in}}
\small 
$v$ is selected uniformly randomly from $\cup_{\rho\,<\,j\,\leq\, \lambda} \mathcal{B}_j(u)$
\end{tabular}
}
\hspace*{-0.1in}
}
\]
\vspace*{-10pt}
\[
\geq
\eX{
\hspace*{-0.15in}
\text{
\begin{tabular}{p{1.4in}}
\small 
number of pairs of nodes $v,y$ such that $v,y$ is {\bf strongly} $(\rho,\lambda)$-dominated by $u$
\end{tabular}
}
\hspace*{-0.1in}
\Bigg| 
\hspace*{-0.1in}
\text{
\begin{tabular}{p{1.4in}}
\small 
$v$ is selected uniformly randomly from $\cup_{\rho\,<\,j\,\leq\, \lambda} \mathcal{B}_j(u)$
\end{tabular}
}
\hspace*{-0.1in}
}
\]
When the graph is hyperbolic (\IE, $\gadd_{\mathrm{worst}}(G)$ is a constant),  
for moderately large $\lambda$, weak and strong dominance are essentially identical and therefore weak domination has a much stronger implication.
Recall that $n$ denotes the number of nodes in the graph $G$.

%
\begin{figure}
\epsfig{file=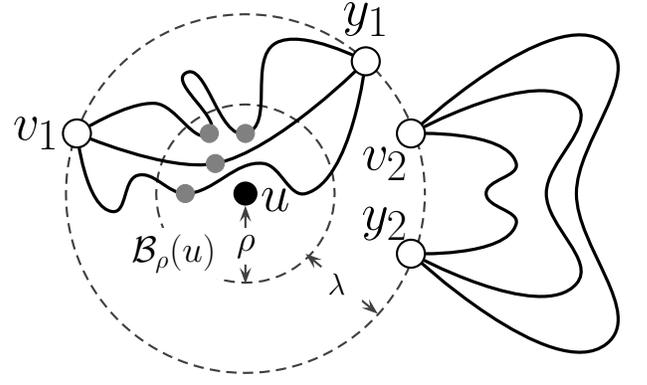}
\caption{\label{f12-prev}Visual illustration: either all the shortest paths are completely inside or all
the shortest paths are completely outside of $\mathcal{B}_{\rho+\lambda}(u)$.}
\end{figure}

Our finding can be succinctly summarized as (see \FI{f12-prev} for a visual illustration):

\begin{quote}
($\star\!\!\!\star\!\!\!\star$)
If $\lambda\geq \left( 6\,\gadd_{\mathrm{worst}}(G)+2 \right) \log_2 n$ then, assuming 
$v$ is selected uniformly randomly from $\cup_{\rho\,<\,j\,\leq\, \lambda} \mathcal{B}_j(u)$ for any node $u$, 
the expected number of pair of nodes $v,y$ that are {\bf weakly} $(\rho,\lambda)$-dominated by $u$
is precisely the same as 
the expected number of pair of nodes that are {\bf strongly} $(\rho,\lambda)$-dominated by $u$.
\end{quote}

\begin{figure}
\epsfig{file=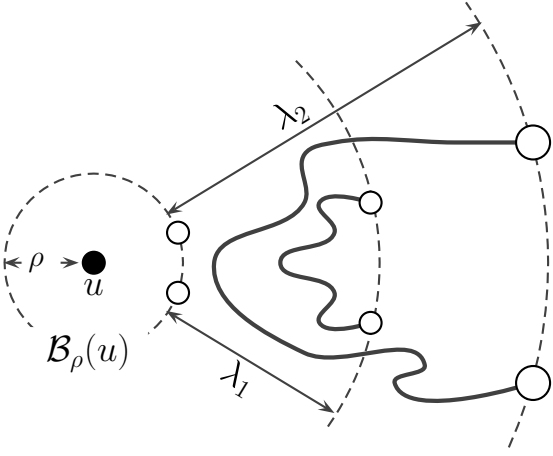}
\caption{\label{f12}For hyperbolic graphs, the further we move from the central (black) node, the more a shortest path bends inward towards the central node.}
\end{figure}


\noindent
A mathematical justification for the claim 
($\star\!\!\star\!\!\star$)
is provided by Lemma~\ref{lem4} in Section~\ref{lem4-sec} of the appendix.

\vspace*{0.1in}
\noindent
{\bf An implication of ($\star\!\!\star\!\!\star$)}

\begin{quote}
If $\lambda\geq \left( 6\,\gadd_{\mathrm{worst}}(G)+2 \right) \log_2 n$ and 
$\mathfrak{M}_{u,\rho,\lambda}\approx \big|\,\mathcal{B}_{\rho+\lambda}(u) \,\big|$, then {\em almost all} pairs of nodes 
are strongly $(\rho,\lambda)$-dominated by $u$, \IE, 
for almost all pairs of nodes $v,y\in\,\mathcal{B}_{\rho+\lambda}(u)$, every shortest path between $v$ and $y$
contains a node in $\mathcal{B}_{\rho}(u)$. 
\end{quote}

\noindent
A visual illustration of this implication is in \FI{f12} showing that 
{\em as $\lambda$ increases the shortest paths tend to 
bend more and more towards the central node $u$ for a hyperbolic network}.

%
\renewcommand{\arraystretch}{1}
\renewcommand{\tabcolsep}{4pt}
\begin{table}
\caption[]{\label{social-table1}Weak domination leads to strong domination for social networks.
$u$ is the index of the central node and 
\\
\hspace*{0.4in}
$
\nu=
\dfrac{n_2}{n_1}
=
\dfrac
{
\left|\,
\left\{
(v,y) 
\in
\mathcal{B}_{\rho+\lambda}(u)
\,\big|\,
\mathbf{\{v,y\} \prec_{\,\mathrm{strong}}^{\,\rho,\lambda} u}=1
\right\}
\,\right|
}
{
\left|\,
\left\{
(v,y) 
\in
\mathcal{B}_{\rho+\lambda}(u)
\,\big|\,
\mathbf{\{v,y\} \prec_{\,\mathrm{weak}}^{\,\rho,\lambda} u}=1
\right\}
\,\right|
}
$
}
\scalebox{0.95}[0.95]
{
\begin{tabular} { l r c c c r  }
\\
\hline
\multicolumn{1}{c}{\small Network name} &  \multicolumn{1}{c}{\small $u$} & \small $\rho$ & \small $\lambda$ & \small $\left| \, \mathcal{B}_{\rho+\lambda}(u)\, \right|$ & \multicolumn{1}{c}{\small $\nu$}
\\
\hline
\multirow{2}{*}{\small Network 1: Dolphin social network} & \small 14 & \small 4 & \small 1 & \small 5 & {\bf \small 80\%}
\\
& \small 37 & \small 4 & \small 1 & \small 3 & {\bf \small 100\%}
\\
\hline
\multirow{2}{*}{\small Network 4: Books about US politics} &  \small 8 & \small 4 & \small 1 & \small 4 & {\bf \small 83\%}
\\
&  \small 3 & \small 3 & \small 1 & \small 5 & {\bf \small 90\%}
\\
\hline
\multirow{2}{*}{\small Network 7: Visiting ties in San Juan} &  \small 34 & \small 4 & \small 1 & \small 4 & {\bf \small 50\%}
\\
&  \small 9 & \small 3 & \small 1 & \small 5 & {\bf \small 90\%}
\\
\hline
\end{tabular}
}
\end{table}
\renewcommand{\arraystretch}{1}
\renewcommand{\tabcolsep}{6pt}
%

\vspace*{0.1in}
\noindent
{\bf Empirical verification of 
($\star\!\!\star\!\!\star$)
}

\vspace*{0.1in}
\noindent
We empirically investigated the claim in 
($\star\!\!\star\!\!\star$)
for the following three social networks from Table~\ref{L4}:

\begin{quote}
\begin{description}
\item[Network 1]  
Dolphin social network,

\item[Network 4] 
Books about US politics, and  

\item[Network 7] 
Visiting ties in San Juan.
\end{description}
\end{quote}

\noindent
For each network we selected a (central) node $u$ such that there are sufficiently many nodes in the boundary of the $\xi$-neighborhood $\mathcal{B}_{\xi}\,(u)$ of $u$ 
for an appropriate $\xi=\rho+\lambda$. We then set $\lambda$ to a very small value of $1$, and calculated the following quantities.
\begin{itemize}
\item
We computed the number $n_1$ of all pairs of nodes from $\mathcal{B}_{\xi}\,(u)$ that are weakly $(\rho,\lambda)$-dominated by $u$.

\item
We computed the number $n_2$ of all pairs of nodes from $\mathcal{B}_{\xi}\,(u)$ that are strongly $(\rho,\lambda)$-dominated by $u$.
\end{itemize}
Table~\ref{social-table1} tabulates the ratio $\nu=\nicefrac{n_2}{n_1}$, and shows that 
a large percentage of the pair of nodes that were weakly dominated were
also strongly dominated by $u$.

\section{Conclusion}

In this paper we demonstrated a number of interesting properties of the shortest and approximately shortest paths in hyperbolic networks. We established
the relevance of these results in the context of biological and social networks by empirically finding that a variety of such networks have close-to-tree-like 
topologies. Our results have important implications to a general class of directed networks which we refer to as regulatory networks. For example,
our results imply that cross-talk edges or paths are frequent in these networks. Based on our theoretical results we proposed methodologies to determine
relevant paths between a source and a target node in a signal transduction network, and to identify the most important nodes that mediate these paths.
Our investigation shows that the hyperbolicity measure captures non-trivial topological properties that is not fully reflected in other network measures, and therefore 
the hyperbolicity measure should be more widely used.

\appendix

\section{Theorem~\ref{kk1}}
\label{kk1-sec}

\begin{theorem}\label{kk1}
Suppose that $G$ has a cycle of $k\geq 4$ nodes which has no path-chord. Then, $\gadd_{\mathrm{worst}}(G) \geq \left\lceil \nicefrac{k}{4}\right\rceil $.
\end{theorem}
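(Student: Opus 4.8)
The plan is to exhibit, inside the given $k$-cycle $\cC=(u_0,u_1,\dots,u_{k-1},u_0)$, an explicit quadruple of nodes whose four-point hyperbolicity is at least $\lceil k/4\rceil$; since $\gadd_{\mathrm{worst}}(G)$ is the maximum of $\gadd_{u_1,u_2,u_3,u_4}$ over all quadruples, this lower-bounds $\gadd_{\mathrm{worst}}(G)$ directly. The whole argument rests on one structural observation that converts the ``no path-chord'' hypothesis into usable distance information.

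First I would prove the reduction lemma: if $\cC$ has no path-chord then $\cC$ is isometrically embedded in $G$, i.e.\ for every pair $u_i,u_j$ one has $d_{u_i,u_j}=\min\{(i-j)\bmod k,\ (j-i)\bmod k\}$, the length of the shorter of the two arcs of $\cC$. The bound $d_{u_i,u_j}\le$ (cyclic distance) is immediate, since an arc is a genuine $u_i$--$u_j$ path. For the reverse, suppose some pair were joined by a path $\cP$ strictly shorter than its cyclic distance; then $\ell(\cP)$ is strictly smaller than both arc lengths, so $\cP$ is not an arc of $\cC$, and in particular $\ell(\cP)<(i-j)\bmod k$, which exhibits a genuine path-chord and contradicts the hypothesis. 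This step is the conceptual heart of the argument, and the thing to get right is the bookkeeping in the definition of path-chord (excluding the cycle's own arcs from consideration); everything afterwards is arithmetic with cyclic distances.

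Next I would choose four nodes $w_1,w_2,w_3,w_4$ that cut $\cC$ into four arcs of lengths $g_1,g_2,g_3,g_4$ as equal as possible, with each $g_t\in\{\lfloor k/4\rfloor,\lceil k/4\rceil\}$ and $g_1+g_2+g_3+g_4=k$ (possible since $k\ge4$). Because each $g_t\le\lceil k/4\rceil\le k/2$, the consecutive distances are $d_{w_1,w_2}=g_1$, $d_{w_2,w_3}=g_2$, $d_{w_3,w_4}=g_3$, $d_{w_4,w_1}=g_4$, while the two ``diagonal'' distances are $d_{w_1,w_3}=\min\{g_1+g_2,\ g_3+g_4\}$ and $d_{w_2,w_4}=\min\{g_2+g_3,\ g_4+g_1\}$. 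The three pairings then produce the values $\sigma_1=g_1+g_3$ and $\sigma_3=g_2+g_4$ (note $\sigma_1+\sigma_3=k$) together with the diagonal sum $\sigma_2=d_{w_1,w_3}+d_{w_2,w_4}$. For the equally spaced choice the diagonal pairing is the largest, so it plays the role of $L$ and the larger of $\sigma_1,\sigma_3$ is the median $M$, whence $\gadd_{w_1,w_2,w_3,w_4}=(\sigma_2-\max\{\sigma_1,\sigma_3\})/2$. In the clean case $4\mid k$ every $g_t=k/4$, giving $\sigma_2=k$ and $\max\{\sigma_1,\sigma_3\}=k/2$, so $\gadd=k/4=\lceil k/4\rceil$ exactly.

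Finally I would close the argument with a short case analysis on $k \bmod 4$ to recover $\lceil k/4\rceil$ when $k$ is not a multiple of $4$: one distributes the one, two, or three surplus edges among the four arcs so as to keep $\max\{\sigma_1,\sigma_3\}$ as small as possible while keeping both diagonal half-sums close to $k/2$, and then evaluates $(\sigma_2-M)/2$. I expect this floor/ceiling bookkeeping across the residue classes to be the fussiest and most delicate part of the write-up, since the relative placement of the longer arcs is exactly what decides whether $\sigma_2$ and $M$ round in the favourable direction; by contrast, the isometry reduction and the identification of the diagonal pairing as $L$ are the conceptual core and are robust.
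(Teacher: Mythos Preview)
Your overall plan---reduce ``no path-chord'' to an isometric embedding of the cycle, then pick four nearly equidistant nodes and read off the three pairings---is sound and is essentially the same idea the paper uses (the paper routes the final inequality through an auxiliary four-point lemma rather than computing the pairings directly, but the four chosen nodes and the underlying arithmetic are the same).

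The trouble is not with your method but with the target: the case analysis you flag as ``fussy'' will \emph{not} close to $\lceil k/4\rceil$ when $4\nmid k$, and no argument can, because the stated bound is false. Take $G=C_5$: it has no path-chord, yet every quadruple gives $L=4$, $M=3$, so $\gadd_{\mathrm{worst}}(C_5)=\tfrac12<2=\lceil 5/4\rceil$. Writing $k=4r+r'$ with $0\le r'\le 3$, your construction yields exactly $\gadd=r$ for $r'\in\{0,2,3\}$ and $\gadd=r-\tfrac12$ for $r'=1$ (e.g.\ for arcs $r{+}1,r,r,r$ one gets $\sigma_2=4r$, $M=2r{+}1$, hence $(4r-2r-1)/2=r-\tfrac12$). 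These are in fact the \emph{exact} values of $\gadd_{\mathrm{worst}}(C_k)$, so your four points are optimal; the ceiling in the statement should at least be a floor, and even $\lfloor k/4\rfloor$ is off by $\tfrac12$ when $k\equiv 1\pmod 4$.

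The paper's own proof has the same shortfall, compounded by two visible slips: it records $d_{u_0,u_{2r+\lfloor\cdots\rfloor}}$ as the forward arc length rather than the shorter arc (for $r'=1$ this turns $2r$ into $2r{+}1$), and in the last line it identifies $r$ with $\lceil k/4\rceil$ when of course $r=\lfloor k/4\rfloor$. So proceed with your plan, but aim for $\lfloor k/4\rfloor$ (with the $k\equiv 1\pmod 4$ caveat), not the ceiling; the ``delicate'' residue bookkeeping you anticipated is genuinely where the stated claim breaks, not merely where it gets tedious.
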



\begin{figure}[b]
\epsfig{file=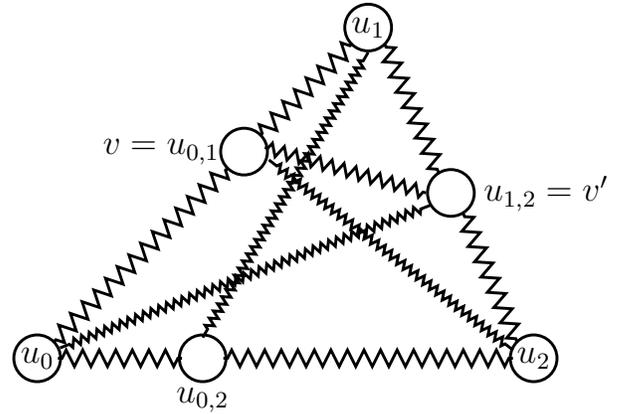}
\caption{\label{f5}Case 1 of Theorem~\ref{lem1}: $v=u_{0,1}$, $v'=u_{1,2}$.}
\end{figure}

\begin{proof}
In our proofs we will use the consequences of the $4$-node condition when the $4$ nodes are chosen in a specific manner as stated below
in Lemma~\ref{pre1}.

\begin{lemma}\label{pre1}
Let $u_0,u_1,u_2,u_3$ be four nodes such that $u_3$ is on a shortest path between $u_1$ and $u_2$. Suppose also that all the inter-node 
distances are {\em strictly} positive except for $d_{u_1,u_3}$ and 
$
d_{u_1,u_3}
=
\left
\lceil
\frac{ d_{u_1,u_2} +  d_{u_0,u_1} -  d_{u_0,u_2}}{2}
\right
\rceil
$.
Then, 
\begin{multline*}
\left\lceil \frac{ d_{u_0,u_1}+d_{u_0,u_2}+d_{u_1,u_2} }{2} \right\rceil
\leq
d_{u_0,u_3}+d_{u_1,u_2}
\\
\leq
\left\lceil \frac{ d_{u_0,u_1}+d_{u_0,u_2}+d_{u_1,u_2} }{2} \right\rceil
+
2\,\gadd_{u_0,u_1,u_2,u_3}
\end{multline*}
\end{lemma}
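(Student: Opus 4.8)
The plan is to rewrite everything in terms of the three ``opposite-pair'' sums that define $\gadd_{u_0,u_1,u_2,u_3}$, and then to identify $d_{u_0,u_3}+d_{u_1,u_2}$ as the \emph{largest} of these three sums. Abbreviate $a=d_{u_0,u_1}$, $b=d_{u_0,u_2}$, $c=d_{u_1,u_2}$, and $p=d_{u_1,u_3}=\left\lceil (c+a-b)/2\right\rceil$. Because $u_3$ lies on a shortest $u_1$--$u_2$ path, we have $d_{u_2,u_3}=c-p$. The three pairings of the four nodes then read $S_1=d_{u_0,u_1}+d_{u_2,u_3}=a+c-p$, $S_2=d_{u_0,u_2}+d_{u_1,u_3}=b+p$, and $S_3=d_{u_0,u_3}+d_{u_1,u_2}=d_{u_0,u_3}+c$, this last being the quantity the lemma is about.

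First I would evaluate $S_1$ and $S_2$ by substituting the explicit value of $p$, splitting on the parity of $c+a-b$ (equivalently of $a+b+c$, since these differ by $2b$). A one-line computation gives, in the even case, $S_1=S_2=(a+b+c)/2$, and in the odd case $S_2=(a+b+c+1)/2$ while $S_1=(a+b+c-1)/2$. In both cases $\max\{S_1,S_2\}=\left\lceil (a+b+c)/2\right\rceil$, and both $S_1$ and $S_2$ are at most $\left\lceil (a+b+c)/2\right\rceil$; the ceiling in the definition of $p$ is exactly what makes this match on the nose.

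Next I would show that $S_3$ dominates the other two sums, using only the triangle inequality together with $u_3$ lying on the geodesic. Indeed $S_3-S_1=d_{u_0,u_3}+p-a=d_{u_0,u_3}+d_{u_1,u_3}-d_{u_0,u_1}\ge 0$, and $S_3-S_2=d_{u_0,u_3}+(c-p)-b=d_{u_0,u_3}+d_{u_2,u_3}-d_{u_0,u_2}\ge 0$. Hence $S_3$ is the largest of the three sums, so in the notation of the hyperbolicity definition we have $L=S_3$ and the median value is $M=\max\{S_1,S_2\}=\left\lceil (a+b+c)/2\right\rceil$. Since $2\gadd_{u_0,u_1,u_2,u_3}=L-M$ by definition, I conclude $d_{u_0,u_3}+d_{u_1,u_2}=S_3=\left\lceil (a+b+c)/2\right\rceil+2\gadd_{u_0,u_1,u_2,u_3}$. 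This single identity yields both inequalities in the lemma at once (the lower bound following because $\gadd_{u_0,u_1,u_2,u_3}\ge 0$).

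The remaining points are bookkeeping rather than conceptual obstacles. The main thing to get right is the parity argument confirming $\max\{S_1,S_2\}=\left\lceil (a+b+c)/2\right\rceil$ exactly, and checking that the prescribed $p$ is a legal geodesic parameter, i.e.\ $0\le p\le c$, which follows from the triangle inequalities $|a-b|\le c\le a+b$. I would also record the degenerate possibility $u_3=u_1$ (forced when $c+a=b$, so that $p=0$): here $d_{u_1,u_3}$ vanishes while the other distances stay positive, and the four-point condition still applies with $\gadd_{u_0,u_1,u_2,u_3}=0$, so the identity degenerates consistently. Since no step needs anything beyond the triangle inequality and the definition of $\gadd$, there is no substantive difficulty past this accounting.
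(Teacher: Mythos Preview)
Your proof is correct and follows essentially the same approach as the paper: compute the three opposite-pair sums, show that $d_{u_0,u_3}+d_{u_1,u_2}$ is the largest via the triangle inequality, and read off the bounds from the definition of $\gadd_{u_0,u_1,u_2,u_3}$. Your presentation is in fact slightly cleaner than the paper's---you observe that the conclusion is actually an \emph{equality} $d_{u_0,u_3}+d_{u_1,u_2}=\left\lceil (a+b+c)/2\right\rceil+2\gadd_{u_0,u_1,u_2,u_3}$, whereas the paper proves the lower bound separately (via $2S_3\ge a+b+c$) and then the upper bound from the four-point condition; and you correctly note that the case $u_3=u_2$ is excluded by the hypothesis on strictly positive distances, so the paper's treatment of $d_{u_1,u_3}=d_{u_1,u_2}$ is not needed.
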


\begin{proof}
Note that due to triangle inequality
$
0 \leq 
\left
\lceil
\frac{ d_{u_1,u_2} +  d_{u_0,u_1} -  d_{u_0,u_2}}{2}
\right
\rceil
\leq d_{u_1,u_2} 
$
and thus node $u_3$ always exists.

First, consider the case when $0<d_{u_1,u_3}<d_{u_1,u_2}$.
Consider the three quantities involved in the $4$-node condition for the nodes $u_0,u_1,u_2,u_3$, namely the quantities
$
d_{u_0,u_3}+d_{u_1,u_2}
$,
$
d_{u_0,u_2}+d_{u_1,u_3}
$ 
and
$
d_{u_0,u_1}+d_{u_2,u_3}
$.
Note that 
\begin{multline*}
2 \left(d_{u_0,u_3}+d_{u_1,u_2} \right)
=
\left( d_{u_0,u_3} + d_{u_1,u_3} \right) 
+
\left( d_{u_0,u_3} + d_{u_2,u_3} \right) 
+
d_{u_1,u_2}
\\
\geq
d_{u_0,u_1}+d_{u_0,u_2}+d_{u_1,u_2}
\\
\Rightarrow 
\,
d_{u_0,u_3}+d_{u_1,u_2}
\geq
\left\lceil \frac{
d_{u_0,u_1}+d_{u_0,u_2}+d_{u_1,u_2}
}{2} \right\rceil
\end{multline*}
%
%
\begin{multline*}
d_{u_0,u_2}+d_{u_1,u_3}
=
d_{u_0,u_2}
+
\left\lfloor \frac{ d_{u_1,u_2} +  d_{u_0,u_1} -  d_{u_0,u_2}}{2} \right\rfloor
\\
=
\left\lfloor \frac{
d_{u_0,u_1}+d_{u_0,u_2}+d_{u_1,u_2}
}{2} \right\rfloor
\end{multline*}
%
%
\begin{multline*}
d_{u_0,u_1}+d_{u_2,u_3}
=
d_{u_0,u_1}+
\left\lceil
\frac{ d_{u_1,u_2} +  d_{u_0,u_2} -  d_{u_0,u_1} }{2}
\right\rceil
\\
=
\left\lceil
\frac{
d_{u_0,u_1}+d_{u_0,u_2}+d_{u_1,u_2}
}{2} 
\right\rceil
\end{multline*}
Thus, 
$
d_{u_0,u_3}+d_{u_1,u_2} \geq \max \big\{ \,d_{u_0,u_2}+d_{u_1,u_3}, \, d_{u_0,u_1}+d_{u_2,u_3} \,\big\} 
$
and using the definition of $\gadd_{u_0,u_1,u_2,u_3}$ we have 
\begin{multline*}
\left\lceil \frac{ d_{u_0,u_1}+d_{u_0,u_2}+d_{u_1,u_2} }{2} \right\rceil
\leq
d_{u_0,u_3}+d_{u_1,u_2}
\\
\leq
\left\lceil \frac{ d_{u_0,u_1}+d_{u_0,u_2}+d_{u_1,u_2} }{2} \right\rceil
+
2\,\gadd_{u_0,u_1,u_2,u_3}
\end{multline*}
Next, consider the case when $d_{u_1,u_3}=0$.
This implies 
\begin{multline*}
d_{u_0,u_1} + d_{u_1,u_3}
=
d_{u_0,u_1} + d_{u_1,u_2}
=
d_{u_0,u_2}
\\
=
\dfrac{ d_{u_0,u_1}+d_{u_0,u_2}+d_{u_1,u_2} }{2}
\leq
\left\lceil \frac{ d_{u_0,u_1}+d_{u_0,u_2}+d_{u_1,u_2} }{2} \right\rceil
\end{multline*}
Finally, consider the case when $d_{u_1,u_3}=d_{u_1,u_2}$.
This implies 
\begin{multline*}
d_{u_1,u_2}
-
\frac{ d_{u_1,u_2} +  d_{u_0,u_1} -  d_{u_0,u_2}}{2}
< 1
\\
\equiv \,
d_{u_0,u_2} + d_{u_1,u_2} = d_{u_0,u_1} + 2 - 2\,\eps\,\,\, \mbox{ for some $0<\eps\leq 1$} 
\end{multline*}
Thus, it easily follows that
\begin{multline*}
d_{u_0,u_3}+d_{u_1,u_2}
=
d_{u_0,u_2}+d_{u_1,u_2}
=
\frac
{
d_{u_0,u_2}+d_{u_1,u_2}
+
d_{u_0,u_1} + 2 - 2\,\eps
}
{2}
\\
=
\frac
{
d_{u_0,u_2}+d_{u_1,u_2} + d_{u_0,u_1} 
}
{2}
+ 1 - \eps
\\
\Rightarrow \, 
d_{u_0,u_3}+d_{u_1,u_2}
\leq
\left\lceil \frac{ d_{u_0,u_1}+d_{u_0,u_2}+d_{u_1,u_2} }{2} \right\rceil
\end{multline*}
\end{proof}

We can now prove Theorem~\ref{kk1} as follows.
Let $\cC=\big( u_0, u_1, \dots, u_{k-1},u_0 \big)$ be the cycle of $k=4r+r'$ nodes for some integers $r$ and $0\leq r'<4$. 
Consider the four nodes $u_0,u_{r+\left\lceil\nicefrac{r'}{2}\right\rceil},u_{2r+\left\lfloor \, \left( \, r'+ \left\lceil\nicefrac{r'}{2}\right\rceil  \, \right) \, / \, 2\,\right\rfloor}$ and
$u_{3r+r'}$. Since $\cC$ has no path-chord, we have 
$d_{u_0,u_{r+\left\lceil\nicefrac{r'}{2}\right\rceil}}=r+\left\lceil\nicefrac{r'}{2}\right\rceil$,
$d_{u_0,u_{2r+  \left\lfloor \, \left( \, r'+ \left\lceil\nicefrac{r'}{2}\right\rceil  \, \right) \, / \, 2\,\right\rfloor} }=2r+ \left\lfloor \, \frac{ r'+ \left\lceil\nicefrac{r'}{2}\right\rceil  } { 2 } \,\right\rfloor d_{u_{r+\left\lceil r' / 2 \right\rceil},u_{3r+r'}}= 2r + r' - \left\lceil\nicefrac{r'}{2}\right\rceil \leq 2r+ \left\lceil\nicefrac{r'}{2}\right\rceil$,
$d_{u_0,u_{3r+r'}} = r$
and $u_{2r+  \left\lfloor \, \left( \, r'+ \left\lceil\nicefrac{r'}{2}\right\rceil  \, \right) \, / \, 2\,\right\rfloor  }$ is on a shortest path between $u_r$ and $u_{3r+r'}$. 
Thus, applying the bound of Lemma~\ref{pre1}, we get
\begin{multline*}
\gadd_{\mathrm{worst}}(G) 
\geq 
\gadd_{
u_0,\,u_{r+\left\lceil\nicefrac{r'}{2}\right\rceil},\,u_{2r+\left\lfloor \, \left( \, r'+ \left\lceil\nicefrac{r'}{2}\right\rceil  \, \right) \, / \, 2\,\right\rfloor},\, 
u_{3r+r'}
}
\\
\geq 
\frac
{
d_{u_0,u_{2r+  \left\lfloor \, \frac{r'+ \left\lceil\nicefrac{r'}{2}\right\rceil } {2}\right\rfloor          }}+d_{u_{r+\left\lceil\frac{r'}{2}\right\rceil},u_{3r+r}}
-
\left\lceil \frac{ d_{u_0,u_{r+\left\lceil\frac{r'}{2}\right\rceil}}+d_{u_{r+\left\lceil\frac{r'}{2}\right\rceil},u_{3r+r'}}+d_{u_{3r+r},u_0} }{2} \right\rceil
}
{2}
\\
=
\frac {4r + 
\left\lfloor \, \frac{ r'+ \left\lceil\nicefrac{r'}{2}\right\rceil  } { 2 } \,\right\rfloor
- r' + \left\lceil\nicefrac{r'}{2}\right\rceil  - \left\lceil \dfrac{ 4r+r'}{2} \right\rceil } {2}
=
r+ \frac {
\left\lfloor \, \frac{ r'+ \left\lceil\nicefrac{r'}{2}\right\rceil  } { 2 } \,\right\rfloor
- r'
} {2}
\\
\geq
r - \nicefrac{1}{4}
\,\,\Rightarrow \,\,
\gadd_{\mathrm{worst}}(G) 
\geq 
r = \left\lceil \nicefrac{k}{4} \right\rceil
\end{multline*}
\end{proof}



\section{Theorem~\ref{lem1} and Corollary~\ref{lem1-cor}}
\label{lem1-sec}

\begin{figure}
\epsfig{file=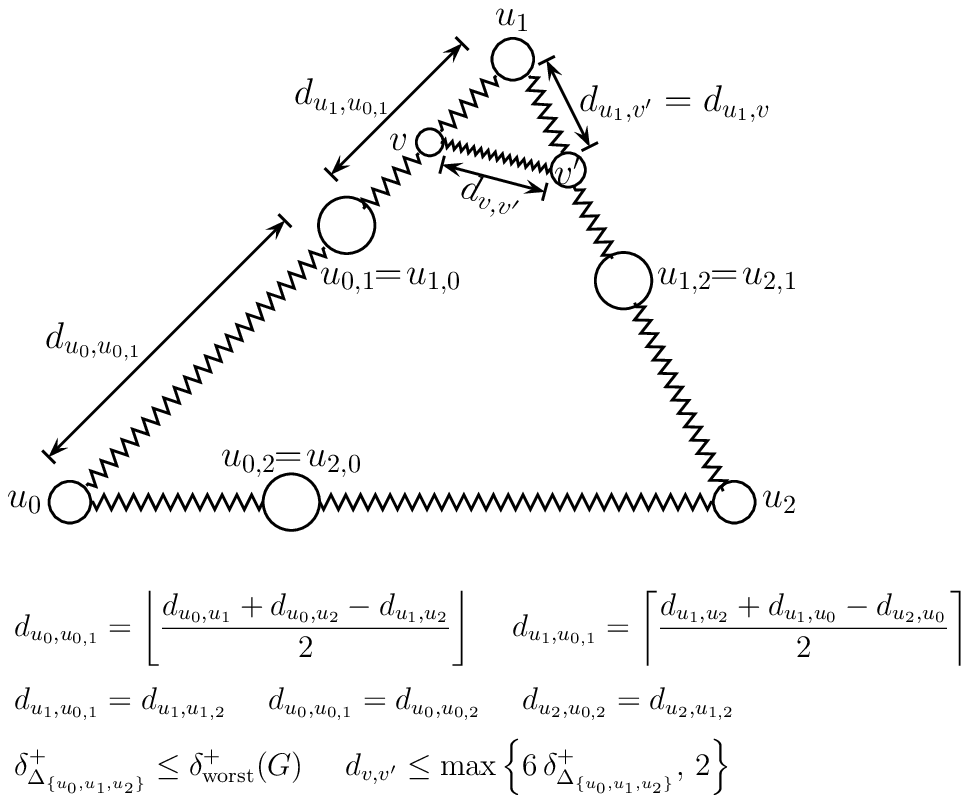,scale=0.9}
\caption{\label{f4}A pictorial illustration of the claim in Theorem~\ref{lem1}.}
\end{figure}
%
%
%
%
%

The Gromov product nodes $u_{0,1},u_{0,2},u_{1,2}$ of a shortest-path triangle $\Delta_{\left\{u_0,u_1,u_2\right\}}$ 
are three nodes satisfying the following\footnote{To simplify exposition, we assume that $d_{u_0,u_1}+d_{u_1,u_2}+d_{u_0,u_2}$ is 
an even number. Otherwise, the definition will require minor changes.}:
\begin{itemize}
\item
$u_{0,1}$, $u_{0,2}$ and $u_{1,2}$ 
are located on the paths 
$\cP_{\Delta}\left(u_0,u_1\right)$, 
$\cP_{\Delta}\left(u_0,u_2\right)$ and 
$\cP_{\Delta}\left(u_1,u_2\right)$, respectively, and 

\item
the distances of these three nodes from $u_0,u_1$ and $u_2$ satisfy the following constraints:
\begin{gather*}
d_{u_0,u_{0,1}}+d_{u_1,u_{0,1}}=d_{u_0,u_1}, 
\,\,\,\,
d_{u_0,u_{0,2}}+d_{u_2,u_{0,2}}=d_{u_0,u_2}
\\
d_{u_1,u_{1,2}}+d_{u_2,u_{1,2}}=d_{u_1,u_2},
\,\,\,\,
d_{u_1,u_{0,1}}= d_{u_1,u_{1,2}}
\\
d_{u_0,u_{0,1}}=d_{u_0,u_{0,2}}=\left\lfloor\dfrac{d_{u_0,u_1}+d_{u_0,u_2}-d_{u_1,u_2}}{2}\right\rfloor
\end{gather*}
\end{itemize}
It is not difficult to see that a set of such three nodes always exists.
For convenience, the nodes $u_{1,0}$, $u_{2,0}$ and $u_{2,1}$ are assumed to be the same as
the nodes $u_{0,1}$, $u_{0,2}$ and $u_{1,2}$, respectively. 

\begin{theorem}[see \FI{f4} for a visual illustration]
\label{lem1}
For a shortest-path triangle $\Delta_{\left\{u_0,u_1,u_2\right\}}$ and for $0\leq i\leq 2$,
let $v$ and $v'$ be two nodes on the paths 
$u_i \!\!\!\stackrel{ \cP_{\Delta}\left(u_i,u_{i+2 \pmod{3}}\right) } {\text{\scalebox{5}[1.2]{$\leftrightsquigarrow$} }  }\!\!\! u_{i,\,i+2 \pmod{3}}$ 
and 
$u_i \!\!\!\stackrel{ \cP_{\Delta}\left(u_i,u_{i+1 \pmod{3}}\right) } {\text{\scalebox{5}[1.2]{$\leftrightsquigarrow$} }  }\!\!\! u_{i,\,i+1 \pmod{3}}$, 
respectively, such that $d_{u_i,v}=d_{u_i,v'}$.
Then, 
\[
d_{v,v'}\leq 6\,\gadd_{ \Delta_{ \{u_0,u_1,u_2\} }} + 2
\]
where
$\gadd_{ \Delta_{ \{u_0,u_1,u_2 \} } } \leq \gadd_{\mathrm{worst}}(G)$ 
is the largest worst-case hyperbolicity among all combinations of four nodes in the three shortest paths defining the triangle. 
\end{theorem}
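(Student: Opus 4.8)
The plan is to reduce the claim to a single lower bound on a Gromov product at the apex $u_i$ and then to establish that bound by applying the four-point condition twice. Throughout write $x=u_i$, $y=u_{i+2\bmod 3}$, $z=u_{i+1\bmod 3}$, abbreviate $\gadd_{\Delta}$ for $\gadd_{\Delta_{\{u_0,u_1,u_2\}}}$, and for any three nodes use the Gromov product $(a\,|\,b)_x=\frac{1}{2}\left(d_{x,a}+d_{x,b}-d_{a,b}\right)$. By hypothesis $v$ lies on $\cP_{\Delta}(x,y)$, $v'$ lies on $\cP_{\Delta}(x,z)$, and $d_{x,v}=d_{x,v'}=t$ with $t$ at most the common distance $g_i=d_{x,u_{i,i+2}}=d_{x,u_{i,i+1}}=\left\lfloor\frac{d_{x,y}+d_{x,z}-d_{y,z}}{2}\right\rfloor$ of the two Gromov-product nodes from the apex. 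Since $(v\,|\,v')_x=t-\frac{1}{2}\,d_{v,v'}$, the target inequality $d_{v,v'}\le 6\,\gadd_{\Delta}+2$ is equivalent to $(v\,|\,v')_x\ge t-3\,\gadd_{\Delta}-1$, so the whole statement becomes one lower bound on $(v\,|\,v')_x$.

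The first ingredient is that two of the relevant products come for free: because $v$ and $v'$ sit on shortest paths issuing from $x$, the identities $d_{x,v}+d_{v,y}=d_{x,y}$ and $d_{x,v'}+d_{v',z}=d_{x,z}$ give $(v\,|\,y)_x=(v'\,|\,z)_x=t$, while the assumption $t\le g_i\le (y\,|\,z)_x$ says precisely that $t$ does not exceed the apex product. The second ingredient is the four-point condition in Gromov-product form, namely $(a\,|\,b)_x\ge\min\{(a\,|\,c)_x,(b\,|\,c)_x\}-\gadd_{x,a,b,c}$; I would derive this from the paper's definition by the short computation that, among the three pairing sums of $x,a,b,c$, the largest exceeds the median by exactly $2\,\gadd_{x,a,b,c}=L-M$, which is what the displayed inequality amounts to after clearing denominators. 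Applying it to the quadruple $(x;v,z,y)$ gives $(v\,|\,z)_x\ge\min\{(v\,|\,y)_x,(y\,|\,z)_x\}-\gadd_{\Delta}=t-\gadd_{\Delta}$, and applying it to $(x;v,v',z)$ gives $(v\,|\,v')_x\ge\min\{(v\,|\,z)_x,(v'\,|\,z)_x\}-\gadd_{\Delta}\ge t-2\,\gadd_{\Delta}$. Substituting back yields $d_{v,v'}\le 4\,\gadd_{\Delta}$, comfortably inside the stated bound; both auxiliary quadruples consist only of nodes lying on the three defining shortest paths, so each is governed by $\gadd_{\Delta}$ as required.

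The main obstacle is purely the discrete bookkeeping, and it is also the source of the slack between the clean $4\,\gadd_{\Delta}$ above and the safe $6\,\gadd_{\Delta}+2$ stated here. Graph distances are integers while the products $(\,\cdot\,|\,\cdot\,)_x$ and the apex value $(y\,|\,z)_x$ are half-integers, and the Gromov-product nodes are pinned by a floor, so if one prefers to argue through the already-proved discrete estimate of Lemma~\ref{pre1} rather than the idealized inequality above, each of the two applications carries a $\lceil\cdot\rceil$ correction and $t$ may fall one unit short of $g_i$. I would track these rounding terms once and bound their total by the additive $2$ together with the inflated coefficient $6$, which is exactly the margin the statement leaves.

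Finally I would organize the write-up along the two cases indicated by Figure~\ref{f5}: the extremal case $t=g_i$, where $v=u_{i,i+2}$ and $v'=u_{i,i+1}$ are the Gromov-product nodes themselves and the bound is tight, and the interior case $t<g_i$, which follows from the identical two-step argument once one notes that shortening both legs to length $t$ cannot increase the controlling products. As a consistency check I would specialize to $u_1=u_2$, where the triangle collapses to a shortest-path cycle and the estimate reproduces the path-chord phenomenon of Theorem~\ref{kk1}, confirming that the constants are of the right order.
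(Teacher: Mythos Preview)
Your argument is correct and in fact cleaner than the paper's own proof, yielding the sharper bound $d_{v,v'}\le 4\,\gadd_{\Delta}$ directly; the hedging about rounding in your third paragraph is unnecessary, since the Gromov-product inequalities hold exactly over half-integers and no floors intervene once you observe $t\le g_i\le (y\,|\,z)_x$.

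The paper takes a different and noticeably longer route. It first treats the extremal case $v=u_{0,1}$, $v'=u_{1,2}$ by working entirely with raw distance sums: it invokes Lemma~\ref{pre1} twice (for the quadruples $u_0,u_1,u_2,u_{1,2}$ and $u_0,u_1,u_2,u_{0,1}$) to control $|d_{u_0,u_{1,2}}-d_{u_0,u_{0,2}}|$ and $|d_{u_2,u_{0,1}}-d_{u_2,u_{0,2}}|$, then applies the four-point condition a third time to $u_0,u_2,u_{0,1},u_{1,2}$, which is why the coefficient comes out as $6$ rather than $4$, and the parity of the perimeter contributes the additive $+2$. For the interior case $t<g_i$ the paper does \emph{not} reuse the same inequality; instead it runs a sweep of auxiliary triangles $\Delta_{\{u_0,u_1,v_j\}}$ along the far side, shows their Gromov-product nodes advance by at most one step at a time, and thereby finds a sub-triangle in which $v,v'$ are themselves the product nodes, reducing to Case~1. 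Your approach replaces all of this with two applications of the four-point condition in Gromov-product form, buying a shorter proof and a better constant; the paper's approach stays closer to the additive formulation it set up in Lemma~\ref{pre1} and avoids introducing the product notation, at the cost of the third application and the sweep argument.
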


\begin{corollary}[Hausdorff distance between shortest paths]\label{lem1-cor}
Suppose that $\cP_1$ and $\cP_2$ are two shortest paths between two nodes $u_0$ and $u_1$. Then, 
the Hausdorff distance $d_H\left(\cP_1,\cP_2\right)$ between these two paths can be bounded as:
\begin{multline*}
\!\!\!\!\!
\!\!
d_H\left(\cP_1,\cP_2\right)
\eqdef
\max 
\left\{ \,
\max_{v_1\,\in \,\cP_1} \min_{v_2\,\in \,\cP_2} \Big\{\, d_{v_1,v_2} \,\Big\}
, \, 
\max_{v_2\,\in \,\cP_2} \min_{v_1\,\in \,\cP_1} \Big\{\, d_{v_1,v_2} \,\Big\}
\, \right\}
\\
\leq
6\,\gadd_{ \Delta_{ \{u_0,u_1,u_2\} }} + 2
\end{multline*}
{where $u_2$ is any node on the path $\cP_2$.}
\end{corollary}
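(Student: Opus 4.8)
The plan is to realize $\cP_2$ as two sides of a degenerate shortest-path triangle and then invoke Theorem~\ref{lem1} twice, once at each of the common endpoints $u_0$ and $u_1$. Concretely, take $u_2$ to be any node on $\cP_2$; since a subpath of a shortest path is itself a shortest path, the portion of $\cP_2$ from $u_0$ to $u_2$ and the portion from $u_2$ to $u_1$ are both shortest paths. I would therefore form the shortest-path triangle $\Delta_{\{u_0,u_1,u_2\}}$ whose three sides are $\cP_\Delta(u_0,u_1)=\cP_1$, $\cP_\Delta(u_0,u_2)=$ (the $u_0$--$u_2$ part of $\cP_2$), and $\cP_\Delta(u_1,u_2)=$ (the $u_2$--$u_1$ part of $\cP_2$). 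The crucial structural observation is that because $u_2$ lies on the shortest path $\cP_2$, we have $d_{u_0,u_1}=d_{u_0,u_2}+d_{u_1,u_2}$; that is, the triangle is collinear/degenerate.

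I would then exploit this degeneracy to locate the Gromov product nodes. Substituting $d_{u_0,u_1}=d_{u_0,u_2}+d_{u_1,u_2}$ into $d_{u_0,u_{0,1}}=\lfloor (d_{u_0,u_1}+d_{u_0,u_2}-d_{u_1,u_2})/2\rfloor$ gives $d_{u_0,u_{0,1}}=d_{u_0,u_2}$, so the Gromov node $u_{0,1}$ sits on $\cP_1$ at distance $d_{u_0,u_2}$ from $u_0$, while the Gromov nodes $u_{0,2}$ and $u_{1,2}$ on the two sides making up $\cP_2$ both coincide with $u_2$ itself. As a free bonus, the parity hypothesis in the definition of the Gromov product nodes is automatic here, since $d_{u_0,u_1}+d_{u_0,u_2}+d_{u_1,u_2}=2\,d_{u_0,u_1}$ is even. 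Consequently each of the two sides forming $\cP_2$ lies entirely on the ``endpoint side'' of its Gromov node: $\cP_\Delta(u_0,u_2)$ runs from $u_0$ all the way to its Gromov node $u_{0,2}=u_2$, and likewise for the $u_1$ side.

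With the geometry pinned down, I would apply Theorem~\ref{lem1} at $i=0$ and at $i=1$. At $i=0$, every node $v$ on the $u_0$--$u_{0,1}$ part of $\cP_1$ is matched to a node $v'$ on $\cP_\Delta(u_0,u_2)\subseteq\cP_2$ with $d_{u_0,v}=d_{u_0,v'}$ and $d_{v,v'}\le 6\,\gadd_{\Delta_{\{u_0,u_1,u_2\}}}+2$; at $i=1$, every node on the $u_1$--$u_{0,1}$ part of $\cP_1$ is matched to a node on $\cP_\Delta(u_1,u_2)\subseteq\cP_2$ within the same bound. Since these two parts cover all of $\cP_1$, this bounds $\max_{v_1\in\cP_1}\min_{v_2\in\cP_2} d_{v_1,v_2}$. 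Reading the same two applications in reverse---every node of $\cP_\Delta(u_0,u_2)$ lies in its $u_0$-half and maps back to $\cP_1$ via $u_0$, every node of $\cP_\Delta(u_1,u_2)$ maps back via $u_1$---I bound $\max_{v_2\in\cP_2}\min_{v_1\in\cP_1} d_{v_1,v_2}$. Taking the maximum of the two quantities yields $d_H(\cP_1,\cP_2)\le 6\,\gadd_{\Delta_{\{u_0,u_1,u_2\}}}+2$.

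The step I expect to require the most care is verifying that the distance-matching supplied by Theorem~\ref{lem1} is genuinely exhaustive---that indexing by ``distance from the common endpoint'' hits every node of each path and that the two halves agree consistently at the split point $u_{0,1}$---together with confirming that the degenerate triangle does not break the definitions of the Gromov product nodes or of $\gadd_{\Delta_{\{u_0,u_1,u_2\}}}$. The parity computation above already removes the main irritant, so the remainder is bookkeeping: both $\cP_1$ and $\cP_2$ have the same length $d_{u_0,u_1}$, the common endpoints $u_0,u_1$ supply the matching coordinate, and $u_{0,1}$ at distance $d_{u_0,u_2}$ partitions $\cP_1$ into a $u_0$-portion of length $d_{u_0,u_2}$ and a $u_1$-portion of length $d_{u_1,u_2}$ whose lengths exactly match those of $\cP_\Delta(u_0,u_2)$ and $\cP_\Delta(u_1,u_2)$, so the coverage is complete in both directions.
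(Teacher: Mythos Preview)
Your proposal is correct and is precisely the argument the paper implicitly expects: the corollary is stated immediately after Theorem~\ref{lem1} without a separate proof, and your construction of the degenerate triangle $\Delta_{\{u_0,u_1,u_2\}}$ with $u_2\in\cP_2$, together with the observation that $u_{0,2}=u_{1,2}=u_2$ and that the two applications of Theorem~\ref{lem1} at $i=0$ and $i=1$ cover both paths exhaustively, is exactly how the corollary follows. Your parity check and the matching-by-distance bookkeeping are the right details to verify; the only minor caveat is that $u_2$ should be taken distinct from $u_0,u_1$ so that the triangle is defined, which is harmless since the case $d_{u_0,u_1}\le 1$ is trivial.
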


\noindent
{\bf Proof of Theorem~\ref{lem1}.}
To simplify exposition, we assume that $d_{u_0,u_1}+d_{u_1,u_2}+d_{u_0,u_2}$ is 
even and prove a slightly improve bound of 
$d_{v,v'}\leq 6\,\gadd_{ \Delta_{ \{u_0,u_1,u_2\} }} + 1$.
It is easy to modify the proof to show that
$d_{v,v'}\leq 6\,\gadd_{ \Delta_{ \{u_0,u_1,u_2\} }} + 2$
if $d_{u_0,u_1}+d_{u_1,u_2}+d_{u_0,u_2}$ is odd.

We will prove the result for $i=1$ only; similar arguments will hold for $i=0$ and $i=2$.
If $d_{u_1,u_{0,1}}=0$ then $v=v'=u_1$ and the claim holds trivially, Thus, we assume that $d_{u_1,u_{0,1}}>0$.

\vspace*{0.1in}
\noindent
{\bf Case 1: ${v=u_{0,1}}$ and ${v'=u_{1,2}}$}. 
In this case we need to prove that $d_{u_{0,1},u_{1,2}}\leq 6\,\gadd_{\Delta_{\{u_0,u_1,u_2\}}}+1$ (see \FI{f5}).
Assume that $d_{u_{0,1},u_{1,2}}>0$ since otherwise the claim is trivially true.
Using Lemma~\ref{pre1} for the four nodes $u_0,u_1,u_2,u_{1,2}$, we get 
\begin{gather}
d_{u_0,u_{1,2}}+d_{u_1,u_2} 
\leq 
\left\lceil \frac{
d_{u_0,u_1}+d_{u_1,u_2}+d_{u_0,u_2}
}{2} \right\rceil
+ 2\,\gadd_{  u_0,u_1,u_2,u_{1,2} }
\label{j6}
\end{gather}
Now, we note that
\begin{multline}
d_{u_1,u_2}+d_{u_0,u_{0,2}}
=
d_{u_1,u_2}+
\left\lfloor \frac{ d_{u_0,u_1} +  d_{u_0,u_2} -  d_{u_1,u_2} }{2} \right\rfloor
\\
=
\left\lfloor \frac{ d_{u_0,u_1} +  d_{u_0,u_2} +  d_{u_1,u_2} }{2} \right\rfloor
\label{j7}
\end{multline}
which in turn implies 
\begin{multline}
\big| \, d_{u_0,u_{1,2}} - d_{u_0,u_{0,2}} \, \big| 
=
\left| \, \left( d_{u_0,u_{1,2}}+d_{u_1,u_2} \right) - \left( d_{u_1,u_2}+d_{u_0,u_{0,2}} \right) \, \right| 
\\
\leq 
\Bigg| \,\, 
\underbrace{  
\left\lceil \frac{
d_{u_0,u_1}+d_{u_1,u_2}+d_{u_0,u_2}
}{2} \right\rceil
+ 2\,\gadd_{  u_0,u_1,u_2,u_{1,2} }
}_{ \text{(by inequality \eqref{j6})} }
\\
\hspace*{1.3in}
\,-\,
\underbrace{ 
\left\lfloor \frac{ d_{u_0,u_1} +  d_{u_0,u_2} +  d_{u_1,u_2} }{2} \right\rfloor
}_{\text{(by equality \eqref{j7})} }
\,\, \Bigg|
\\
\leq 
2\,\gadd_{  u_0,u_1,u_2,u_{1,2} }
+1
\label{eq1}
\end{multline}
In a similar manner, we can prove the following analog of inequality~\eqref{eq1}:
\begin{gather}
\big| \, d_{u_2,u_{0,1}} - d_{u_2,u_{0,2}} \, \big| 
\leq 
2\,\gadd_{  u_0,u_1,u_2,u_{0,1} }
\label{eq2}
\end{gather}
Using inequalities~\eqref{eq1} and \eqref{eq2}, it follows that
\begin{multline}
\hspace*{-0.1in}
\left| \, \left( d_{u_0,u_{1,2}} + d_{u_2,u_{0,1}} \right) - d_{u_0,u_2} \, \right|
\\
=
\left| \, \left( d_{u_0,u_{1,2}} + d_{u_2,u_{0,1}} \right) - \left( d_{u_0,u_{0,2}} + d_{u_2,u_{0,2}} \right) \, \right|
\\
=
\left| \, \left( d_{u_0,u_{1,2}} - d_{u_0,u_{0,2}} \right) + \left( d_{u_2,u_{0,1}} - d_{u_2,u_{0,2}} \right) \, \right| 
\\
\leq
\big| \, d_{u_0,u_{1,2}} - d_{u_0,u_{0,2}} \, \big| \, + \, \big| \, d_{u_2,u_{0,1}} - d_{u_2,u_{0,2}} \, \big| 
\\
\leq
2\,\gadd_{  u_0,u_1,u_2,u_{1,2} }
+
2\,\gadd_{  u_0,u_1,u_2,u_{0,1} }
+1
\label{eq3}
\end{multline}
Now, consider the three quantities involved in the $4$-node condition for the nodes $u_0,u_2,u_{0,1},u_{1,2}$, namely the quantities:
$d_{u_0,u_2}+d_{u_{0,1},u_{1,2}}$, $d_{u_0,u_{1,2}}+d_{u_{0,1},u_2}$ and $d_{u_0,u_{0,1}}+d_{u_2,u_{1,2}}$.
Note that 
\begin{gather}
d_{u_0,u_{0,1}}\!\!  +d_{u_2,u_{1,2}}
\! =
d_{u_0,u_{0,2}}\!\!  +d_{u_2,u_{0,2}}
\! =
d_{u_0,u_2}
\! <
d_{u_0,u_2}\!\!  +d_{u_{0,1},u_{1,2}}
\label{j8}
\end{gather}
If 
$d_{u_0,u_{1,2}}+d_{u_{0,1},u_2}\leq d_{u_0,u_{0,1}}+d_{u_2,u_{1,2}}$ then by the definition of 
$\gadd_{  u_0,u_2,u_{0,1},u_{1,2} }$
we have 
\begin{multline*}
d_{u_{0,1},u_{1,2}}
=
\left( d_{u_0,u_2}+d_{u_{0,1},u_{1,2}} \right) - d_{u_0,u_2}
\\
=
\left( d_{u_0,u_2}+d_{u_{0,1},u_{1,2}} \right) - 
\left( d_{u_0,u_{0,1}}+d_{u_2,u_{1,2}} \right) 
\leq 
2\,\gadd_{  u_0,u_2,u_{0,1},u_{1,2} }
\end{multline*}
Otherwise, 
$d_{u_0,u_{1,2}}+d_{u_{0,1},u_2}>d_{u_0,u_{0,1}}+d_{u_2,u_{1,2}}$ and then again by the definition of 
$2\,\gadd_{  u_0,u_2,u_{0,1},u_{1,2} }$
we have 
\[
\left| \,
d_{u_0,u_{1,2}}+d_{u_{0,1},u_2}- d_{u_0,u_2} -d_{u_{0,1},u_{1,2}}
\, \right|
\leq 
2\,\gadd_{  u_0,u_2,u_{0,1},u_{1,2} }
\]
and now using inequality~\eqref{eq3} gives
\begin{multline*}
\!\!\!\!\!
\!\!\!\!\!
d_{u_{0,1},u_{1,2}}
=
\Big( d_{u_0,u_{1,2}} + d_{u_2,u_{0,1}} - d_{u_0,u_2} \Big)
-
\Big( d_{u_0,u_{1,2}}+d_{u_{0,1},u_2}- d_{u_0,u_2} -d_{u_{0,1},u_{1,2}} \Big)
\\
\leq 
\Big| d_{u_0,u_{1,2}} + d_{u_2,u_{0,1}} - d_{u_0,u_2} \Big|
+
\Big| d_{u_0,u_{1,2}}+d_{u_{0,1},u_2}- d_{u_0,u_2} -d_{u_{0,1},u_{1,2}} \Big|
\\
\leq
2\,\gadd_{  u_0,u_1,u_2,u_{1,2} }
+
2\,\gadd_{  u_0,u_1,u_2,u_{0,1} }
+
2\,\gadd_{  u_0,u_2,u_{0,1},u_{1,2} }
+1 
\leq
6\,\gadd_{\Delta_{\{u_0,u_1,u_2\}}}
+1
\end{multline*}
%

\begin{figure}
\epsfig{file=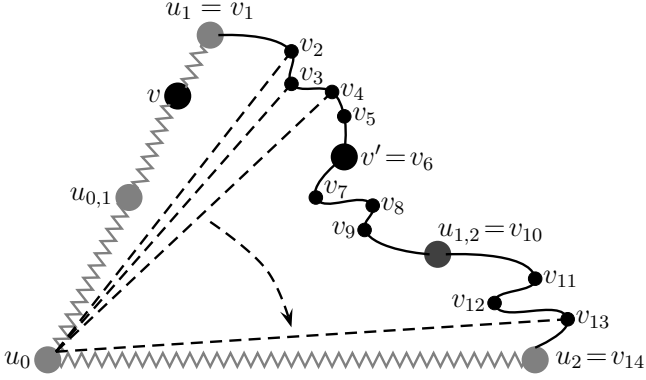}
\caption{\label{f6}Case 2 of Theorem~\ref{lem1}: $v\neq u_{0,1}$, $v'\neq u_{1,2}$.}
\end{figure}


\vspace*{0.1in}
\noindent
{\bf Case 2: ${v\neq u_{0,1}}$ and ${v'\neq u_{1,2}}$}. 
The claim trivially holds if $d_{v,v'}\leq 1$, thus we assume that $d_{v,v'}>1$.
Let $\big(v_1=u_1,v_2=u_3,v_3,\dots,v_{h}=v',\dots,v_s=u_{1,2},\dots,v_r=u_2\big)$ be the {\em ordered} sequence of nodes in the given 
shortest path from $u_1$ to $u_2$ (see \FI{f6}). 
Consider the sequence of shortest-path triangles $\Delta_{\left\{u_0,u_1,v_2\right\}},\Delta_{\left\{u_0,u_1,v_3\right\}},\dots,\Delta_{\left\{u_0,u_1,v_r\right\}}$, 
where each such triangle $\Delta_{\left\{u_0,u_1,v_j\right\}}$ is obtained by taking the shortest path 
$\cP_{\Delta}\left(u_0,u_1\right)$, 
the sub-path  
$\cP_{\Delta}\big(u_1,v_j\big)$
of the shortest path 
$\cP_{\Delta}\left(u_1,u_2\right)$, 
from $u_1$ to $v_j$, 
and a shortest path $u_0\!\stackrel{\mathfrak{s}}{\leftrightsquigarrow}\! v_j$ from $u_0$ to $v_j$.
Let $v_{1,j}$ be the Gromov product node on the side (shortest path) 
$\cP_{\Delta}\big(u_1,v_j\big)$
for the shortest-path triangle $\Delta_{\left\{u_0,u_1,v_j\right\}}$.

We claim that if $v_{1,j}=v_p$ and $v_{1,j+1}=v_q$ then $q$ is either $p$ or $p+1$. 
Indeed, if 
$
d_{u_1,v_p}
=
\left\lfloor
\frac { d_{u_0,u_1} + d_{u_1,u_j} - d_{u_0,v_j} } { 2 } 
\right\rfloor
$
and 
$
d_{u_1,v_q}
=
\left\lfloor
\frac { d_{u_0,u_1} + d_{u_1,u_{j+1}} - d_{u_0,v_{j+1}} } { 2 } 
\right\rfloor
$
then 
\begin{multline*}
\!\!\!\!\!
\!\!\!\!\!
d_{u_1,v_q}
-
d_{u_1,v_p}
=
\left\lfloor
\frac { d_{u_0,u_1} + d_{u_1,v_{j+1}} - d_{u_0,v_{j+1}} } { 2 } 
\right\rfloor
-
\left\lfloor
\frac { d_{u_0,u_1} + d_{u_1,v_j} - d_{u_0,v_j} } { 2 } 
\right\rfloor
\\
\leq
\left\lfloor
\frac { d_{u_0,u_1} + \left( 1+ d_{u_1,v_j} \right) - \left( d_{u_0,v_{j+1}} -1 \right) } { 2 } 
\right\rfloor
-
\left\lfloor
\frac { d_{u_0,u_1} + d_{u_1,v_j} - d_{u_0,v_j} } { 2 } 
\right\rfloor
\\
=
\left\lfloor
\frac { d_{u_0,u_1} + d_{u_1,v_j} - d_{u_0,v_j} } { 2 } +1
\right\rfloor
-
\left\lfloor
\frac { d_{u_0,u_1} + d_{u_1,v_j} - d_{u_0,v_j} } { 2 } 
\right\rfloor
\leq 1
\end{multline*}
and a similar proof of 
$d_{u_1,v_q} - d_{u_1,v_p} \leq 1$ can be obtained if 
$
d_{u_1,v_p}
=
\left\lceil
\frac { d_{u_0,u_1} + d_{u_1,u_j} - d_{u_0,v_j} } { 2 } 
\right\rceil
$
and 
$
d_{u_1,v_q}
=
\left\lceil
\frac { d_{u_0,u_1} + d_{u_1,u_{j+1}} - d_{u_0,v_{j+1}} } { 2 } 
\right\rceil
$.
Thus, the ordered sequence of nodes 
$v_{1,1},v_{1,2},\dots,v_{1,r}$ 
cover the ordered sequence of nodes 
$v_2,v_3,\dots,v_s$ in a {\em consecutive manner} without skipping over any node.
Since $v_{1,1}$ is either $v_1$ or $v_2$, 
and $v_{1,r}=v_s=u_{1,2}$, there must be an index $t$ such that $v_{1,t}=v'=v_h$.
Since $d_{u_1,v}=d_{u_1,v'}$, 
$v$ and $v'$ are the two Gromov product nodes for the shortest-path triangle $\Delta_{\left\{u_0,u_1,v_t\right\}}$
and thus applying Case~{\bf 1.1} on $\Delta_{\{u_0,u_1,v_t\}}$ we have 
$d_{v,v'}\leq 6\,\gadd_{ \Delta_{\{u_0,u_1,u_2\}} } +1$.
{\hfill{\Pisymbol{pzd}{113}}\vspace{0.1in}}

\begin{figure}
\epsfig{file=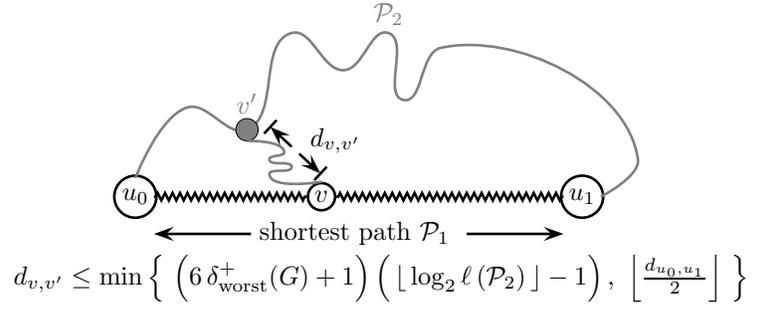,scale=0.9}
\caption{\label{f8}Illustration of the bound in Theorem~\ref{lem2}.}
\end{figure}
%
%

\section{Theorem~\ref{lem2} and Corollary~\ref{cor1}}
\label{lem2-sec}

\begin{theorem}[see \FI{f8} for a visual illustration]\label{lem2}
Let $\cP_1\equiv u_0\!\stackrel{\mathfrak{s}}{\leftrightsquigarrow}\! u_1$ 
and $\cP_2$ be a shortest path and an arbitrary path, respectively, between two nodes $u_0$ and $u_1$.
Then, for every node $v$ on $\cP_1$, there exists a node $v'$ on $\cP_2$ such that 
\[
\begin{array}{lcl}
d_{v,v'} & \leq  & 
\min \left\{ \,\, \Big( 6\,\gadd_{\mathrm{worst}}(G) + 2 \Big) \, \Big( \left\lfloor \, \log_2 \ell \left(\cP_2\right) \, \right\rfloor -1 \Big)\,,\,\, 
\left\lfloor \frac{d_{u_0,u_1}}{2} \right\rfloor \,\, \right\}
\\
[0.05in]
& = & { \mathrm{O} \Big( \, \gadd_{\mathrm{worst}}(G) \, \log \ell \left(\cP_2\right) \Big) }
\end{array}
\]
Since $\ell \left(\cP_2\right)\leq n$, the above bound also implies that
\[
d_{v,v'}\leq \Big( 6\,\gadd_{\mathrm{worst}}(G) + 2 \Big) \, \Big( \left\lfloor \, \log_2 n \, \right\rfloor -1 \Big)={ \mathrm{O}\Big( \, \gadd_{\mathrm{worst}}(G) \, \log n\Big) }
\]
\end{theorem}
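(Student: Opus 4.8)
The plan is to prove the first (main) displayed bound; the final $n$-version then follows at once by substituting the trivial inequality $\ell(\cP_2)\leq n$ (a simple path visits at most $n$ nodes, so $\lfloor\log_2\ell(\cP_2)\rfloor\leq\lfloor\log_2 n\rfloor$). So I would concentrate on producing, for each $v$ on $\cP_1$, a node $v'$ on $\cP_2$ with $d_{v,v'}=\mathrm{O}\big(\gadd_{\mathrm{worst}}(G)\,\log\ell(\cP_2)\big)$, together with the cap $\lfloor d_{u_0,u_1}/2\rfloor$. The cap is the easy half: since $\cP_1$ is a shortest path, any $v$ on it satisfies $d_{u_0,v}+d_{v,u_1}=d_{u_0,u_1}$, hence $v$ is within $\lfloor d_{u_0,u_1}/2\rfloor$ of the nearer of $u_0,u_1$; as both endpoints lie on $\cP_2$, taking $v'$ to be that endpoint gives $d_{v,v'}\leq\lfloor d_{u_0,u_1}/2\rfloor$.

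For the logarithmic bound I would recurse on $\ell(\cP_2)$, as indicated in the ``Formal Justifications'' subsection. Let $w$ split $\cP_2$ into subpaths $\cP_2^{\mathrm{left}}$ (from $u_0$ to $w$) and $\cP_2^{\mathrm{right}}$ (from $w$ to $u_1$), each of length at most $\lceil\ell(\cP_2)/2\rceil$. Form the shortest-path triangle $\Delta_{\{u_0,u_1,w\}}$ whose sides are $\cP_1$, a shortest $u_0$--$w$ path $S_{\mathrm{left}}$, and a shortest $w$--$u_1$ path $S_{\mathrm{right}}$. The Gromov-product node on the side $\cP_1$ partitions it so that every $v$ on $\cP_1$ has $d_{u_0,v}$ bounded by that of the product node, or symmetrically lies closer to $u_1$. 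In the first case Theorem~\ref{lem1}, applied at the vertex $u_0$, yields a node $v''$ on $S_{\mathrm{left}}$ with $d_{u_0,v''}=d_{u_0,v}$ and $d_{v,v''}\leq 6\,\gadd_{\mathrm{worst}}(G)+2$; in the second case, applied at $u_1$, it yields such a $v''$ on $S_{\mathrm{right}}$.

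Now I would recurse. If $v''$ landed on $S_{\mathrm{left}}$, I reapply the whole argument to the shortest path $S_{\mathrm{left}}$ and the arbitrary path $\cP_2^{\mathrm{left}}$ between the same endpoints $u_0,w$, with $v''$ playing the role of $v$; the decisive point is that the arbitrary path of the recursive call is an honest sub-path of $\cP_2$, so the node ultimately returned genuinely lies on $\cP_2$. Each level adds at most $6\,\gadd_{\mathrm{worst}}(G)+2$ to the accumulated distance (triangle inequality) and at least halves the length of the arbitrary path; the recursion bottoms out once the subpath has length at most $2$, where the matched node is within the endpoint bound (at most $1$) of an endpoint of that subpath, which is itself a node of $\cP_2$. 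Summing one $6\,\gadd_{\mathrm{worst}}(G)+2$ term per level over the $\lfloor\log_2\ell(\cP_2)\rfloor-1$ levels needed to shrink the length below the base-case threshold gives the claimed factor, and combining with the easy cap yields the full $\min$.

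The main obstacle I anticipate is the bookkeeping that produces the exact factor $\lfloor\log_2\ell(\cP_2)\rfloor-1$ rather than a looser $\lceil\log_2\ell(\cP_2)\rceil$: one must fix the base case carefully (stopping at subpath length at most $2$, where no further invocation of Theorem~\ref{lem1} is needed and the endpoint bound costs at most $1$) and track the floors through the halving $m\mapsto\lceil m/2\rceil$. A secondary point needing care is to confirm that the Gromov-product node always permits invoking Theorem~\ref{lem1} at one of the two endpoints for the given $v$ — i.e. that the equal-distance condition $d_{u_0,v}=d_{u_0,v''}$ (or its symmetric counterpart) is realizable within the product-node range — and that the recursive arbitrary path remains a sub-path of $\cP_2$ at every level, so that the final $v'$ is genuinely a node of $\cP_2$.
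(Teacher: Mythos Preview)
Your proposal is correct and follows essentially the same approach as the paper: pick the midpoint $w$ of $\cP_2$, apply Theorem~\ref{lem1} to the triangle $\Delta_{\{u_0,u_1,w\}}$ to hop from $v$ on $\cP_1$ to a node $v''$ on one of the two new shortest-path sides at cost $6\,\gadd_{\mathrm{worst}}(G)+2$, then recurse on that side against the corresponding half of $\cP_2$, with the base case at length $2$ and the recurrence $D(y)\leq D(\lceil y/2\rceil)+6\,\gadd_{\mathrm{worst}}(G)+2$. Your write-up is in fact more explicit than the paper's about why the recursion is well-posed (the new ``$\cP_1$'' is $S_{\mathrm{left}}$ or $S_{\mathrm{right}}$ and the new ``$\cP_2$'' is a genuine sub-path of the original), and your worry about the $\lfloor\cdot\rfloor$ versus $\lceil\cdot\rceil$ in the final count is well-placed---the paper's own proof actually arrives at $\lceil\log_2\ell(\cP_2)\rceil-1$ rather than the $\lfloor\log_2\ell(\cP_2)\rfloor-1$ stated in the theorem.
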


\begin{corollary}\label{cor1}
Suppose that there exists a node $v$ on the shortest path between $u_0$ and $u_1$ such that $\min_{v'\in\cP_2} \left\{d_{v,v'} \right\}\geq\gamma$.
Then, $\ell \left( \cP_2 \right) \geq 2^{^{ \textstyle \frac{\gamma}{6\,\gadd_{\mathrm{worst}}(G) + 2} +1 } }-1=\Omega \Big( 2^{^{ \textstyle \gamma\,/\,\gadd_{\mathrm{worst}}(G) } }\Big)$.
\end{corollary}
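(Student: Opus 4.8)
The plan is to combine the shortest-path-triangle estimate of Theorem~\ref{lem1} with a divide-and-conquer recursion on the length of the arbitrary path $\cP_2$, exactly as foreshadowed in the main text. Two bounds must be produced and then intersected. The trivial bound $\left\lfloor d_{u_0,u_1}/2\right\rfloor$ is immediate: since $u_0$ and $u_1$ are the shared endpoints of both paths they lie on $\cP_2$, and since $v$ sits on the shortest path $\cP_1$ we have $d_{v,u_0}+d_{v,u_1}=d_{u_0,u_1}$, so $\min\{d_{v,u_0},d_{v,u_1}\}\leq\left\lfloor d_{u_0,u_1}/2\right\rfloor$; taking $v'$ to be the nearer endpoint settles that part of the minimum.

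For the logarithmic bound I would argue by induction on $\ell(\cP_2)$. First I would let $w$ be the midpoint node of $\cP_2$, splitting it into subpaths $\cP_2^{(0)}$ from $u_0$ to $w$ and $\cP_2^{(1)}$ from $w$ to $u_1$, each of length at most $\left\lceil \ell(\cP_2)/2\right\rceil$. Then I would build the shortest-path triangle $\Delta_{\{u_0,u_1,w\}}$ whose sides are $\cP_1$ together with a shortest $u_0$-to-$w$ path $\cQ_0$ and a shortest $u_1$-to-$w$ path $\cQ_1$. Depending on whether $v$ lies between $u_0$ and the Gromov product node on $\cP_1$ or beyond it toward $u_1$, Theorem~\ref{lem1} applied at vertex $u_0$ (respectively $u_1$) produces a node $v'$ on $\cQ_0$ (respectively $\cQ_1$) at the same distance from that vertex, with $d_{v,v'}\leq 6\,\gadd_{\mathrm{worst}}(G)+2$.

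Now the recursion engages: $v'$ lies on the \emph{shortest} path $\cQ_0$ between $u_0$ and $w$, while $\cP_2^{(0)}$ is an \emph{arbitrary} path between the same pair of length at most $\left\lceil \ell(\cP_2)/2\right\rceil$, so the inductive hypothesis yields a node $v''$ on $\cP_2^{(0)}\subseteq\cP_2$ within the bound for the halved length. The triangle inequality then gives $d_{v,v''}\leq \big(6\,\gadd_{\mathrm{worst}}(G)+2\big)+T\!\left(\left\lceil \ell(\cP_2)/2\right\rceil\right)$, where $T(\cdot)$ denotes the claimed bound. Unrolling the recurrence $T(\ell)\leq \big(6\,\gadd_{\mathrm{worst}}(G)+2\big)+T(\lceil \ell/2\rceil)$ accumulates one factor of $6\,\gadd_{\mathrm{worst}}(G)+2$ per halving, the depth of recursion being $\left\lfloor\log_2\ell(\cP_2)\right\rfloor-1$ once the base case (where $\cQ_0$ is short enough that its nodes already lie on $\cP_2$) is reached. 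Corollary~\ref{cor1} is then a contrapositive rearrangement: if no node of $\cP_2$ comes within $\gamma$ of $v$, the logarithmic bound must exceed $\gamma$, so solving $\big(6\,\gadd_{\mathrm{worst}}(G)+2\big)\big(\lfloor\log_2\ell(\cP_2)\rfloor-1\big)\geq\gamma$ for $\ell(\cP_2)$ yields the stated exponential lower bound.

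The hard part will be the bookkeeping that makes the recursion legitimate rather than the geometry, which Theorem~\ref{lem1} already supplies. Specifically, one must verify that the node $v'$ delivered by Theorem~\ref{lem1} always lands on a \emph{shortest} path to the midpoint $w$, so that the recursive call inherits the correct ``shortest path versus arbitrary path'' structure, and that the side of the Gromov product node on which $v$ falls correctly dictates recursing into the matching half $\cP_2^{(0)}$ or $\cP_2^{(1)}$. Pinning down the exact additive constant and the precise recursion depth $\lfloor\log_2\ell(\cP_2)\rfloor-1$ --- including the off-by-one saving that produces the ``$-1$'' and the interplay with the trivial bound in the minimum for very short $\cP_2$ --- is the only genuinely fiddly point, but it is routine once the recurrence is in place.
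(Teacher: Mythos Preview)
Your proposal is correct and mirrors the paper's argument exactly: the paper proves Theorem~\ref{lem2} by taking the midpoint of $\cP_2$, applying Theorem~\ref{lem1} to the resulting shortest-path triangle, and recursing on the appropriate half-path to obtain the recurrence $D(y)\leq D(\lceil y/2\rceil)+6\,\gadd_{\mathrm{worst}}(G)+2$ with base case $D(2)=1$; Corollary~\ref{cor1} is then precisely the contrapositive rearrangement you describe, and the paper gives no separate proof for it.
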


\noindent
{\bf Proof of Theorem~\ref{lem2}.}
First, note that by selecting $v'$ to be one of $u_0$ or $u_1$ appropriately we have 
$
d_{v,v'}\leq \left\lfloor \nicefrac{d_{u_0,u_1}}{2} \right\rfloor
$.
Now, assume that $\ell \left( \cP_2 \right) >2$. 
Let $u_2$ be the node on the path $\cP_2$ such that 
$\ell \big( u_0 \!\stackrel{\cP_2}{\leftrightsquigarrow}\! u_2 \big)= \left\lceil\,\nicefrac{\ell\left(\cP_2\right)}{2}\,\right\rceil$. 
and consider the shortest-path triangle $\Delta_{\left\{u_0,u_1,u_2\right\}}$.
By Theorem~\ref{lem1} there exists a node $v'$ either on a shortest path between $u_0$ and $u_2$ or on a shortest path between $u_1$ and $u_2$ such that 
$d_{v,v'}\leq 6\,\gadd_{\mathrm{worst}}(G) + 2$. We move from $v$ to $v'$ and recursively solve the problem of finding a shortest path from $v'$ to a node on a part of the path $\cP_2$ containing 
at most $\left\lceil\nicefrac{\left(\cP_2\right)}{2}\right\rceil$ edges. 
Let $D(y)$ denote the minimum distance from $v$ to a node in a path of length $y$ between $u_0$ and $u_1$.
Thus, the worst-case recurrence for $D(y)$ is given by 
\[
\begin{array}{ll}
D(y) \leq D\left( \, \left\lceil\frac{y}{2}\right\rceil \, \right) + 6\,\gadd_{\mathrm{worst}}(G) + 2, & \mbox{if $y>2$}
\\[0.1in]
D(2) = 1  & 
\end{array}
\]
A solution to the above recurrence satisfies 
$
D \left(\ell \left(\cP_2 \right) \, \right)\leq \Big( 6\,\gadd_{\mathrm{worst}}(G) + 2 \Big) \, \Big( \left\lceil \, \log_2 \ell \left(\cP_2\right) \, \right\rceil -1 \Big)
$.
{\hfill{\Pisymbol{pzd}{113}}\vspace{0.1in}}



\section{Theorem~\ref{lem2-1} and Corollary~\ref{cor2-2}}
\label{lem2-1-sec}

\em
For easy of display of long mathematical equations, we will denote
$\gadd_{\mathrm{worst}}(G)$ simply as $\gadd$.

\begin{theorem}\label{lem2-1}
Let $\cP_1$ and $\cP_2$ be a shortest path and another path, respectively, between two nodes.
Define $\eta_{\cP_1,\cP_2}$ as 
{\small
\[
\begin{array}{l}
           \textstyle
	    \eta_{\,\cP_1,\cP_2}
	    \\
 \text{\footnotesize
	    $\textstyle =
           \Bigg( 6\,\gadd + 2 \Bigg) 
           \, 
           \log_2 \Bigg( \, 
           \bigg( 6\,\mu +2 \bigg) 
           \, 
           \bigg( 6\,\gadd + 2 \bigg) 
           \,
           \log_2 \bigg[ 
           \left( 6\,\gadd + 2 \right) 
           \, 
           \big( 3\,\mu +1 \big) 
           \,\mu \bigg] 
           \, +\mu 
           \Bigg)$
   }
\\
=
           \mathrm{O} \left(
           \gadd
           \,\log \left( \, 
               \mu\, \gadd
           \,\right)
           \, \right), 
           \text{ if $\cP_2$ is $\mu$-approximate short}
\\
\\
           \textstyle
           \eta_{\,\cP_1,\cP_2}
\\
	   =
           \textstyle
           \Bigg( 6\,\gadd + 2 \Bigg)  \, 
           \log_2 \Bigg( 
           8 \, \bigg( 6\,\gadd + 2 \bigg)  \, 
               \log_2 \bigg[ \left( 6\,\gadd + 2 \right)  
                               \, \left( 4+ 2\eps \right) \bigg] 
           \, + 1+\dfrac{\eps}{2} 
           \Bigg)
	   \\
           =
           \mathrm{O} \left(
           \gadd
           \log \Big( \, 
               \eps+
               \gadd \,
                 \log \eps 
           \,\Big)
           \, \right), 
           \text{ if $\cP_2$ is $\eps$-additive-approximate short}
\end{array}
\vspace*{5pt}
\]
}

\noindent
Then, the following statements are true.

\vspace*{0.1in}
\noindent
{\bf (a)} 
For every node $v$ on $\cP_1$, there exists a node $v'$ on $\cP_2$ such that 
$
d_{v,v'}\leq 
\left
\lfloor
\eta_{\cP_1,\cP_2}
\,
\right
\rfloor
$.

\vspace*{0.1in}
\noindent
{\bf (b)} 
For every node $v'$ on $\cP_2$, there exists a node $v$ on $\cP_1$ such that 
$d_{v,v'} \leq \zeta_{\cP_1,\cP_2}$ where 
{\small
\[
\zeta_{\cP_1,\cP_2}
=
\left\{
\begin{array}{l}
\,\,
\,\,
\min \left\{ \,\,
\left \lfloor \big( \, \mu + 1 \, \big) \,\eta_{\cP_1,\cP_2} + \dfrac {\mu}{2} \right\rfloor, 
\, \left\lfloor \dfrac{\mu\,d_{u_0,u_1} }{2} \right\rfloor \,\, \right\}
\\
[0.1in]
=
\mathrm{O}
\left( \mu \, 
           \gadd
           \,\log \left( \, 
               \mu\, \gadd
           \,\right)
\,\right), 
\text{ if $\cP_2$ is $\mu$-approximate short}
\\
[0.2in]
\,\,
\,\,
\min \left\{ \,
\left \lfloor 2\,\eta_{\cP_1,\cP_2} + \dfrac{1+\eps}{2} \right\rfloor,
\, \left\lfloor \dfrac{d_{u_0,u_1}+\eps }{2} \right\rfloor
\, \right\}
\\
[0.1in]
=
\mathrm{O} \left(
\eps + 
           \gadd
           \log \left( \, 
               \eps+
               \gadd \,
                 \log \eps 
           \,\right)
\,\right),
\\
[0.1in]
\hspace*{0.5in}
\text{ if $\cP_2$ is $\eps$-additive-approximate short}
\end{array}
\right.
\]
}
\end{theorem}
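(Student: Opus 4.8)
The plan is to read Theorem~\ref{lem2-1} as a discrete Morse stability lemma for quasi-geodesics and to bootstrap the logarithmic bound of Theorem~\ref{lem2} (more conveniently, its contrapositive Corollary~\ref{cor1}) against the approximate-shortness hypothesis so as to eliminate the dependence on $\ell(\cP_2)$. Part~\textbf{(a)} is the crux; part~\textbf{(b)} will follow from it by a short covering argument. For part~\textbf{(a)} I would first set $D=\max_{v\in\cP_1}\min_{v'\in\cP_2} d_{v,v'}$ and fix a node $v^*$ on $\cP_1$ attaining it, so that every node of $\cP_2$ is at distance at least $D$ from $v^*$. Next I would pick nodes $y,z$ on $\cP_1$ at distance $2D$ from $v^*$ on either side, or the endpoint $u_0$ (respectively $u_1$) if it is nearer; since $u_0,u_1$ themselves lie on $\cP_2$, this choice is harmless and disposes of the boundary cases. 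Letting $y',z'$ be nodes of $\cP_2$ nearest to $y,z$ (so $d_{y,y'},d_{z,z'}\le D$), I splice the shortest paths from $y$ to $y'$ and from $z'$ to $z$ onto the subpath $\sigma$ of $\cP_2$ between $y'$ and $z'$, obtaining a path $Q$ between $y$ and $z$.

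The key observation is that $v^*$ lies on the \emph{geodesic} subpath of $\cP_1$ from $y$ to $z$, while every node of $Q$ is at distance $\ge D$ from $v^*$: nodes of $\sigma$ because they sit on $\cP_2$, and nodes of the two spliced shortest paths because they are within $D$ of $y$ or $z$, each of which is at distance $2D$ from $v^*$. Applying Corollary~\ref{cor1} to this geodesic and the path $Q$ gives the exponential lower bound $\ell(Q)\ge 2^{\,D/(6\gadd+2)+1}-1$. On the other hand, approximate shortness yields $\ell(\sigma)\le\mu\,d_{y',z'}\le\mu(d_{y',y}+d_{y,z}+d_{z,z'})\le 6\mu D$, hence $\ell(Q)\le (6\mu+2)D$. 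Combining the two estimates produces the self-referential inequality $D < (6\gadd+2)\log_2\!\big((6\mu+2)D\big)$. Solving this for $D$ — deriving a first crude bound and substituting it back once to remove the $D$ inside the logarithm — gives exactly the closed form $\eta_{\cP_1,\cP_2}$; the nested logarithm and the additive $\mu$ are precisely the residue of this single back-substitution, with $(6\mu+2)$ recording the splicing and $(6\gadd+2)(3\mu+1)\mu$ the crude bound. Since graph distances are integral, $d_{v,v'}\le\lfloor\eta_{\cP_1,\cP_2}\rfloor$ for every $v$. The $\eps$-additive case is identical, except that shortness is used additively, $\ell(\sigma)\le d_{y',z'}+\eps$, which replaces the multiplicative $\mu$ by the slack $\eps$ and yields the second displayed formula for $\eta_{\cP_1,\cP_2}$.

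For part~\textbf{(b)} I would use part~\textbf{(a)} to assign to each node $w$ of $\cP_1$ a node $\phi(w)$ of $\cP_2$ with $d_{w,\phi(w)}\le\eta_{\cP_1,\cP_2}$, and argue as in Case~2 of Theorem~\ref{lem1} that the images $\phi(w)$ sweep $\cP_2$ from $u_0$ to $u_1$ in a consecutive manner without skipping nodes. Then any node $v'$ on $\cP_2$ is straddled by the images $\phi(w_0),\phi(w_1)$ of two consecutive nodes $w_0,w_1$ of $\cP_1$; by approximate shortness the subpath of $\cP_2$ between these images has length at most $\mu\,d_{\phi(w_0),\phi(w_1)}\le\mu(2\eta_{\cP_1,\cP_2}+1)$, so $v'$ is within half of this, $\mu\,\eta_{\cP_1,\cP_2}+\nicefrac{\mu}{2}$, of the nearer image and hence within $(\mu+1)\eta_{\cP_1,\cP_2}+\nicefrac{\mu}{2}$ of $\cP_1$. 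This is the first branch of $\zeta_{\cP_1,\cP_2}$; the alternative $\lfloor\mu\,d_{u_0,u_1}/2\rfloor$ is the trivial bound through the nearer endpoint, and the $\eps$-additive branch follows by the same computation with the additive length bound.

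I expect the main obstacle to be the part~\textbf{(a)} localization: one must arrange the auxiliary nodes $y,z$ and the spliced path $Q$ so that $v^*$ stays on a genuine geodesic while every node of $Q$ remains at distance $\ge D$ from $v^*$, since this is exactly what lets Corollary~\ref{cor1} convert the approximate-shortness length estimate into a length-independent bound. Matching the discretization (the floors and ceilings and the $+1/+2$ additive terms) to the exact constants appearing in $\eta_{\cP_1,\cP_2}$ and $\zeta_{\cP_1,\cP_2}$ is the fussy part, but it does not affect the asymptotics $\mathrm{O}\!\big(\gadd\log(\mu\gadd)\big)$ and $\mathrm{O}\!\big(\mu\gadd\log(\mu\gadd)\big)$.
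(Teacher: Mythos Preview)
Your proposal is correct and follows essentially the same route as the paper: for part~\textbf{(a)} you localize to a window of radius $\sim 2D$ around the worst node $v^*$ on $\cP_1$, splice nearest-point projections to $\cP_2$ onto the corresponding subpath, invoke Theorem~\ref{lem2}/Corollary~\ref{cor1} against the approximate-shortness length bound to obtain a self-referential inequality in $D$, and bootstrap once; the paper does exactly this (with $2\alpha+1$ in place of your $2D$, which accounts for the extra $+\mu$ or $+1+\eps/2$ inside the logarithm). For part~\textbf{(b)} your straddling argument is again the paper's: project each node of $\cP_1$ to its nearest point on $\cP_2$, observe these projections avoid the far region by part~\textbf{(a)}, and use an intermediate-value step to find adjacent nodes of $\cP_1$ whose projections flank any given $v'$. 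The only imprecision is your appeal to Case~2 of Theorem~\ref{lem1} for the claim that the images $\phi(w)$ ``sweep $\cP_2$ in a consecutive manner without skipping nodes'': that monotone step-by-step property is special to Gromov-product points and need not hold for nearest-point projections. You do not actually need it---the weaker statement that some pair of consecutive $w_0,w_1$ has $\phi(w_0),\phi(w_1)$ on opposite sides of $v'$ follows from the intermediate-value argument (since $\phi(u_0)=u_0$ and $\phi(u_1)=u_1$), and that is what both you and the paper use.
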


\begin{corollary}
\label{cor2-2}
{\bf (Hausdorff distance between approximate short paths)}
Suppose that $\cP_1$ and $\cP_2$ are two paths between two nodes. 
Then, 
the Hausdorff distance $d_H\left(\cP_1,\cP_2\right)$ between these two paths can be bounded as follows: 
\begin{multline*}
\hspace*{-0.1in}
d_H\left(\cP_1,\cP_2\right)
\eqdef
\max 
\left\{ \,
\max_{v_1\,\in \,\cP_1} \min_{v_2\,\in \,\cP_2} \Big\{\, d_{v_1,v_2} \,\Big\}
, \, 
\max_{v_2\,\in \,\cP_2} \min_{v_1\,\in \,\cP_1} \Big\{\, d_{v_1,v_2} \,\Big\}
\, \right\}
\\
\leq
\eta_{\cP_1, \textstyle u_0\stackrel{\mathfrak{s}}{\leftrightsquigarrow} u_1}
+ \zeta_{\cP_2, \textstyle u_0\stackrel{\mathfrak{s}}{\leftrightsquigarrow} u_1}
\end{multline*}
\end{corollary}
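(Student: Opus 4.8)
The plan is to use a fixed shortest path $\cP_s\equiv u_0\stackrel{\mathfrak{s}}{\leftrightsquigarrow} u_1$ between the common endpoints as a bridge, thereby reducing the comparison of the two arbitrary paths $\cP_1$ and $\cP_2$ to two applications of Theorem~\ref{lem2-1}. Unwinding the definition, $d_H(\cP_1,\cP_2)=\max\{A,B\}$ with $A=\max_{v_1\in\cP_1}\min_{v_2\in\cP_2} d_{v_1,v_2}$ and $B=\max_{v_2\in\cP_2}\min_{v_1\in\cP_1} d_{v_1,v_2}$, so it suffices to bound the two one-sided quantities $A$ and $B$ separately.

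To bound $B$, I would fix an arbitrary node $v_2\in\cP_2$ and route it to $\cP_1$ through $\cP_s$. Treating $\cP_s$ as the shortest path and $\cP_2$ as the approximate path, part (b) of Theorem~\ref{lem2-1} produces a node $w\in\cP_s$ with $d_{v_2,w}\le\zeta_{\cP_2,\cP_s}$; then, treating $\cP_s$ as the shortest path and $\cP_1$ as the approximate path, part (a) produces for this same $w$ a node $v_1\in\cP_1$ with $d_{w,v_1}\le\left\lfloor\eta_{\cP_1,\cP_s}\right\rfloor$. The triangle inequality for the shortest-path metric then gives $d_{v_2,v_1}\le\eta_{\cP_1,\cP_s}+\zeta_{\cP_2,\cP_s}$, and since $v_2$ was arbitrary this is exactly the claimed bound on $B$. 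The direction $A$ is handled by the mirror-image argument: drop $v_1\in\cP_1$ to $\cP_s$ using part (b) and then reach $\cP_2$ using part (a), which yields the companion expression with the roles of $\cP_1$ and $\cP_2$ exchanged.

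The step that I expect to demand the most care is the directional bookkeeping, i.e.\ matching each leg of the route to the correct one-sided estimate. Part (a) of Theorem~\ref{lem2-1} always supplies a nearby node of the \emph{approximate} path for a given node of the shortest path, whereas part (b) supplies a nearby node of the \emph{shortest} path for a given node of the approximate path; hence the ``shortest-to-approximate'' leg must be charged $\eta$ and the ``approximate-to-shortest'' leg must be charged $\zeta$, and swapping these would produce a false bound. A second point worth emphasizing is that it is genuinely necessary to bridge through the geodesic $\cP_s$ rather than through one of the two given paths: both $\cP_1$ and $\cP_2$ are only approximately short, so neither qualifies as the ``shortest path'' argument required by Theorem~\ref{lem2-1}, and only $\cP_s$ makes both invocations legitimate. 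Once these matchings are fixed the two directional bounds are immediate, and the displayed inequality is precisely the estimate arising from the $\cP_2\to\cP_1$ direction $B$.
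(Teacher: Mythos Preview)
Your approach is correct and is exactly the argument the paper has in mind: the corollary is stated without proof as an immediate consequence of Theorem~\ref{lem2-1}, and bridging through a fixed geodesic $\cP_s$ with part~(b) for the ``approximate~$\to$~shortest'' leg and part~(a) for the ``shortest~$\to$~approximate'' leg is the intended derivation. Your remark that the displayed bound is literally the $B$-direction estimate (with the $A$-direction yielding the companion $\eta_{\cP_2,\cP_s}+\zeta_{\cP_1,\cP_s}$) is well taken; the paper is implicitly content with either symmetric form, since both are of the same asymptotic order.
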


\begin{corollary}\label{cor2-1}
Suppose that there exists a node $v$ on the shortest path between $u_0$ and $u_1$ such that $\min_{\,v'\,\in\,\cP_2} \left\{d_{v,v'} \right\}\geq\gamma$.
Then, the following is true.

\vspace*{0.1in}
\noindent
$\bullet$
If $\cP_2$ is a $\mu$-approximate short path then 
{
\[
           \mu
      > 
\frac
{
   2^{
   \textstyle
   \frac{ \gamma }
   {
   6\,\gadd + 1
   }
   }
}
{
   { 12\,\gamma }
   - 
              \Big( \, 24 + \mathrm{o} (1) \, \Big)
              \,
              \Big( \, 6\,\gadd + 1 \, \Big)
}
- \frac { 1 } { 3 }
\,\Rightarrow\,
\mu
=
\Omega
\left(
\frac{
2^{ \textstyle
\nicefrac{\gamma}{\gadd}
}
}
{
\gamma
}
\right)
\]
}
$\bullet$
If $\cP_2$ is a $\eps$-additive-approximate short path then 
\[
\begin{array}{l}
\eps
\,>\, 
\dfrac{
2^{
\frac{
\gamma
}
{
6\,\gadd + 1
}
}
}
{
\Big( 48\,\gadd \,+ \frac{17}{2} \Big) 
}
\,-\,
\log_2 \left( 48\,\gadd + 8 \right) 
\\
\hspace*{0.4in}
\Rightarrow\,
\eps
=
\Omega
\left(
\dfrac{
2^{ \textstyle
\nicefrac{\gamma}{\gadd}
}
}
{
\gadd
}
\,-\,
\log \gadd
\right)
\end{array}
\]
In particular, assuming real world networks have small constant values of 
$\gadd$, 
the asymptotic dependence of $\mu$ and $\eps$ on $\gamma$ can be summarized as:
\[
\text{both $\mu$ and $\eps$ are } 
\Omega
\left(
2^{ 
\,c \: \gamma
}
\,
\right)
\text{ for some constant $0<c<1$} 
\]
\end{corollary}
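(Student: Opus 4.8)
The plan is to read Corollary~\ref{cor2-1} as the contrapositive of the existence bound in Theorem~\ref{lem2-1}(a). First I would invoke part (a): for the distinguished node $v$ on the shortest path $\cP_1$ between $u_0$ and $u_1$ there is some node $v'$ on $\cP_2$ with $d_{v,v'}\leq\lfloor\eta_{\cP_1,\cP_2}\rfloor\leq\eta_{\cP_1,\cP_2}$. Combining this with the hypothesis $\min_{v'\in\cP_2}\{d_{v,v'}\}\geq\gamma$ forces the single clean inequality $\gamma\leq\eta_{\cP_1,\cP_2}$. Everything after this point is pure algebra: solve this inequality for $\mu$ (respectively $\eps$), since $\eta_{\cP_1,\cP_2}$ is an explicit increasing function of $\mu$ (respectively $\eps$). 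The improved even-parity constant $6\,\gadd+1$ from Theorem~\ref{lem1} is what sharpens the exponent from $6\,\gadd+2$ to $6\,\gadd+1$ in the final bounds.

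For the $\mu$-approximate case I would exponentiate $\gamma\leq\eta_{\cP_1,\cP_2}$ base $2$ to obtain
\[
2^{\gamma/(6\gadd+2)} \leq (6\mu+2)(6\gadd+2)\log_2\big[(6\gadd+2)(3\mu+1)\mu\big]+\mu .
\]
The right-hand side is, to leading order, $12\,\mu\,\gamma$: the factor $(6\mu+2)(6\gadd+2)$ contributes $\approx 6\mu(6\gadd+2)$, while the inner logarithm $\log_2[(6\gadd+2)(3\mu+1)\mu]\approx 2\log_2\mu$ is pinned to $\approx 2\gamma/(6\gadd+2)$ because the same inequality already shows $\mu$ is essentially $2^{\gamma/(6\gadd+2)}$ up to lower-order factors. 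Multiplying these gives the $12\,\mu\,\gamma$ term; rearranging yields $\mu>2^{\gamma/(6\gadd+1)}\big/\big(12\gamma-(24+\mathrm{o}(1))(6\gadd+1)\big)-\tfrac13$, i.e. $\mu=\Omega(2^{\gamma/\gadd}/\gamma)$.

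For the $\eps$-additive-approximate case the same exponentiation gives
\[
2^{\gamma/(6\gadd+2)} \leq 8(6\gadd+2)\log_2\big[(6\gadd+2)(4+2\eps)\big]+1+\tfrac{\eps}{2},
\]
and I would again upper-bound the right-hand side by a simpler increasing function of $\eps$ (bounding the logarithm by its leading term and absorbing the additive constants into the $\log_2(48\gadd+8)$ correction) and invert to reach $\eps>2^{\gamma/(6\gadd+1)}\big/\big(48\gadd+\tfrac{17}{2}\big)-\log_2(48\gadd+8)=\Omega(2^{\gamma/\gadd}/\gadd-\log\gadd)$. Finally, specializing to constant $\gadd$ makes both denominators constant, so both lower bounds collapse to the uniform form $\Omega(2^{c\gamma})$ with $c=1/(6\gadd+1)\in(0,1)$.

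The main obstacle is the inversion in the two displayed inequalities: $\eta_{\cP_1,\cP_2}$ is a nested logarithm in which $\mu$ (respectively $\eps$) occurs both linearly and inside a logarithm, so there is no closed-form inverse. The device that makes the constants come out — and the step I expect to be most delicate — is the bootstrapping used to pin the inner logarithm: one first extracts the crude bound $\log_2\mu\lesssim\gamma/(6\gadd+2)$ from the inequality itself and then substitutes it back to evaluate $\log_2[(6\gadd+2)(3\mu+1)\mu]$, which is exactly what converts the coefficient of $\mu$ from $\Theta(\gadd\log\mu)$ into the clean $12\gamma$ and produces the $(24+\mathrm{o}(1))(6\gadd+1)$ lower-order correction.
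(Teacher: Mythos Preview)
Your approach is correct and is exactly what the paper intends: the paper does not write out a separate proof of Corollary~\ref{cor2-1}, so it is meant to follow from Theorem~\ref{lem2-1}(a) by the contrapositive inversion you describe, namely combining the hypothesis $\min_{v'\in\cP_2}d_{v,v'}\geq\gamma$ with the existence bound $d_{v,v'}\leq\eta_{\cP_1,\cP_2}$ to force $\gamma\leq\eta_{\cP_1,\cP_2}$ and then solving for $\mu$ (respectively $\eps$). Your bootstrapping of the inner logarithm to convert the coefficient into $12\gamma$ is the right device, and your observation about the $6\gadd+1$ versus $6\gadd+2$ constant tracks the even-perimeter sharpening noted in the proof of Theorem~\ref{lem1}.
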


\vspace*{0.1in}
\noindent
{\bf Proof of Theorem~\ref{lem2-1}.}
Let $\cP_1$ and $\cP_2$ be a shortest path and another path, respectively, between two nodes $u_0$ and $u_1$.
Note that any ``sub-path'' of a $\mu$-approximate short path is {\em also} a $\mu$-approximately short path, \IE, 
$u_i\!\stackrel{\cP}{\leftrightsquigarrow}\! u_j$ is also a $\mu$-approximate short path, and similarly 
any sub-path of a $\eps$-additive-approximate short path is {\em also} a $\eps$-additive-approximate short path.
$\mu$-approximate shortest paths also restrict the ``span'' of a path-chord of the path, \IE, 
{\em if $\big(u_0,u_1,\dots,u_k\big)$ is a $\mu$-approximate short path and $\left\{u_i,u_j\right\}\in E$ then $|j-i\,|\leq\mu$}.

\vspace*{0.1in}
\noindent
{\bf (a)}
Let $v$ and $v'$ be two nodes on $\cP_1$ and $\cP_2$, respectively, such that $\displaystyle\alpha=d_{v,v'}=\max_{v''\in\cP_1} \min_{v'''\in\cP_2} \left\{ d_{v'',v'''} \right\}$.
Let $v_\ell \in u_0\!\stackrel{\cP_1}{\leftrightsquigarrow}\! v$ and $v_r \in u_1\!\stackrel{\cP_1}{\leftrightsquigarrow}\! v$ be two nodes defined by 
\begin{gather*}
d_{v_\ell,v}=
\begin{array}{ll}
2\,\alpha+1, & \mbox{if $d_{u_0,v}>2\,\alpha+1$} 
\\
d_{u_0,v}, & \mbox{otherwise}
\end{array}
\\
d_{v_r,v}=
\begin{array}{ll}
2\,\alpha+1, & \mbox{if $d_{u_1,v}>2\,\alpha+1$} 
\\
d_{u_1,v}, & \mbox{otherwise}
\end{array}
\end{gather*}
By definition of $\alpha$, there exists two nodes 
$\widetilde{v_\ell}$ and $\widetilde{v_r}$ on the path $\cP_2$ such that 
$d_{v_\ell,\widetilde{v_\ell}},d_{v_r,\widetilde{v_r}}\leq\alpha$.
Consider the $\cP_3= \widetilde{v_\ell} \!\stackrel{\cP_2}{\leftrightsquigarrow}\!  \widetilde{v_r}$ that is the part of path $\cP_2$ from  
$\widetilde{v_\ell}$ to $\widetilde{v_r}$.
Note that 
\[
d_{\widetilde{v_\ell},\widetilde{v_r}}
\leq
d_{\widetilde{v_\ell},{v_\ell}}
+
d_{{v_\ell},{v_r}}
+
d_{{v_r},\widetilde{v_r}}
\leq
6\,\alpha+2
\]
Thus, we arrive at the following inequalities
\[
\ell \left( \cP_3 \right) \leq 
\begin{array}{ll}
\big( 6\,\alpha+2 \big)\,\mu, & \mbox{if $\cP_2$ is $\mu$-approximate short}
\\
[0.03in]
6\,\alpha+2 +\eps, & \mbox{if $\cP_2$ is $\eps$-additive-approximate short}
\end{array}
\]
Now consider the path 
$\cP_4=
v_\ell 
\!\stackrel{\mathfrak{s}}{\leftrightsquigarrow}\! 
\widetilde{v_\ell}
\!\stackrel{\cP_2}{\leftrightsquigarrow}\! 
\widetilde{v_r}
\!\stackrel{\mathfrak{s}}{\leftrightsquigarrow}\! 
v_r 
$
obtained by taking a shortest path from 
$v_\ell$ to 
$\widetilde{v_\ell}$
followed by the path $\cP_3$ followed by 
a shortest path from 
$v_r$ to 
$\widetilde{v_r}$.
Note that 
\[
\ell \left( \cP_4 \right) \leq 
\left\{
\begin{array}{l}
\big( 6\,\alpha+2 \big)\,\mu + 2\,\alpha, 
\mbox{ if $\cP_2$ is $\mu$-approximate short}
\\
[0.07in]
6\,\alpha+2 +\eps + 2\,\alpha=8\,\alpha+2+\eps, 
\\
\hspace*{0.4in}
\mbox{if $\cP_2$ is $\eps$-additive-approximate short}
\end{array}
\right.
\]
We claim that $\min_{\,\widetilde{v}\,\in\,\cP_4} \{ d_{v,\widetilde{v}} \} =\alpha$.
Indeed, if $\widetilde{v}\in\cP_3$ then, by definition of $\alpha$,  
$\min_{\,\widetilde{v}} \, \{ d_{v,\widetilde{v}} \} =\alpha$. 
Otherwise, if
$\widetilde{v}\in
v_\ell 
\!\stackrel{\mathfrak{s}}{\leftrightsquigarrow}\! 
\widetilde{v_\ell}
$, then by triangle inequality   
$
d_{v_\ell,v} \leq d_{v,\widetilde{v}} + d_{\,\widetilde{v},v_\ell}
\, \Rightarrow \,
d_{v,\widetilde{v}} \geq 
2\,\alpha + 1 - d_{\,\widetilde{v},v_\ell} > \alpha
$.
Similarly, if
$\widetilde{v}\in
\widetilde{v_r}
\!\stackrel{\mathfrak{s}}{\leftrightsquigarrow}\! 
v_r 
$, then by triangle inequality   
$
d_{v_r,v} \leq d_{v,\widetilde{v}} + d_{\,\widetilde{v},v_r}
\, \Rightarrow \,
d_{v,\widetilde{v}} \geq 
2\,\alpha + 1 - d_{\,\widetilde{v},v_r} > \alpha
$.
Since 
$
v_\ell
\!\stackrel{\cP_1}{\leftrightsquigarrow}\! 
v_r
$
is a shortest path between $v_\ell$ and $v_r$ and $v$ is a node on this path,
by Theorem~\ref{lem2}, 
$
\alpha 
\leq
\left( 6\,\gadd + 2 \right) \, \left( \,\left\lfloor \, \log_2 \ell \left(\cP_4\right) \, \right\rfloor -1 \right)
$.
Thus, we have the following inequalities:

\vspace*{0.05in}
\noindent
$\bullet$ 
If $\cP_2$ is a $\mu$-approximate short path then 
\begin{gather}
\begin{array}{c l}
&
\ell \left(\cP_4 \right)
\\
\leq 
&
\big( 6\,\alpha+2 \big)\,\mu + 2\,\alpha
\\
[0.03in]
=
&
\big( 6\,\mu+2 \big)\,\alpha + 2\,\mu
\\
[0.03in]
\leq
&
\big( 6\,\mu+2 \big)\,
\left( 6\,\gadd + 2 \right) \, \left( \log_2 \ell \left(\cP_4\right) -1 \right)
+ 2\,\mu
\\
[0.03in]
\leq
&
\big( 6\,\mu+2 \big)\,
\left( 6\,\gadd + 2 \right) \, \left( \log_2 
\left( \big( 6\,\mu+2 \big)\,\alpha + 2\,\mu \right) \,
-1 \right)
+ 2\,\mu
\\
[0.03in]
\Rightarrow \,
&
\alpha
\leq 
\left( 6\,\gadd + 2 \right) \, \left( \log_2 
\left( \big( 3\,\mu+1 \big)\,\alpha + \mu \right) \,
\right)
\label{loglog1}
\end{array}
\end{gather}

\vspace*{0.05in}
\noindent
$\bullet$ 
If $\cP_2$ is a $\eps$-additive-approximate short path then 
\begin{gather}
\begin{array}{r c l}
\ell \left(\cP_4 \right)
&
\leq 
&
8\,\alpha + 2 + \eps
\\
[0.03in]
&
\leq
&
8\,
\left( 6\,\gadd + 2 \right) \, \left( \log_2 \ell \left(\cP_4\right) -1 \right)
+ 2 + \eps
\\
[0.03in]
&
\leq
&
8 \,
\left( 6\,\gadd + 2 \right) \, \left( \log_2 
\left( 8\,\alpha + 2 + \eps \right) \,
-1 \right)
+ 2 + \eps
\\
[0.03in]
&
\Rightarrow \,
&
8\,\alpha + 2 + \eps
\\
&
&
\,\,\,
\leq
8 \,
\left( 6\,\gadd + 2 \right) \, \left( \log_2 
\left( 8\,\alpha + 2 + \eps \right) \,
-1 \right)
+ 2 + \eps
\\
[0.03in]
&
\equiv\,
&
\alpha \leq 
\left( 6\,\gadd + 2 \right) \, \left( \log_2 
\left( 4\,\alpha + 1 + \frac{\eps}{2} \right) \,
\right)
\label{loglog2}
\end{array}
\end{gather}
Both ~\eqref{loglog1} and ~\eqref{loglog2} are of the form 
$
\alpha \leq a \log_2 \big( b\,\alpha + c \big)
\,\equiv\,
2^{ \nicefrac{\alpha}{a} }
\leq b\,\alpha+c
$
where 
\[
\begin{array}{lcl}
a & =  & 6\,\gadd + 2 \geq 1 
\text{ for both \eqref{loglog1} and \eqref{loglog2}}
\\
[-0.15in]
\end{array}
\]
\[
\begin{array}{lcllcl}
b & = & 
\left\{
\begin{array}{ll}
3\,\mu+1 \geq 4  & \text{for \eqref{loglog1}}
\\
4 & \text{for \eqref{loglog2}}
\end{array}
\right.
&
c & = & 
\left\{
\begin{array}{ll}
\mu \geq 1 & \text{for \eqref{loglog1}}
\\
1+\frac{\eps}{2} \geq 1 & \text{for \eqref{loglog2}}
\end{array}
\right.
\end{array}
\]
Thus, $\alpha$ is at most $z_0$ where $z_0$ is the largest positive integer value of $z$ that satisfies the equation:
\begin{gather*}
2^{ \nicefrac{z}{a} }
\leq b\,z+c
\end{gather*}
In the sequel, we will use the fact that $\log_2 \big(x\,y+1 \big) \geq \log_2 \big( x+y \big)$ for $x,y\geq 1$. This holds since
\[
x \geq 1 \,\,\&\,\, y \geq 1 
\, \Rightarrow \,
y\,(x-1) \geq x-1
\,\equiv\,
x\,y+1\geq x+y
\]
We claim that 
$z_0\leq \eta = a \, \log_2 \left( 2 \, a \, b \log_2 \big(a \, b\,c \big) \, +c \right)$.
This is verified by showing that $2^{ \nicefrac{\eta}{a} }\geq b\,\eta+c$ as follows:
\begin{gather*}
\begin{array}{r c l}
2^{\nicefrac{\eta}{a}} 
&
=
&
2^{ 
\log_2 \big( 2 \, a \, b \log_2 \big(a \, b\, c\big) \,  +c \big)
}
=2\,a\,b\,\log_2 \big( a \, b \, c \big) \,  + c
\\
b\,\eta + c 
&
= 
&
a \, b\, 
\left( 
\log_2 \big( 2 \, a \, b \log_2 \big(a \, b \, c \big) \, +c \big)
\, \right)
+ c
\end{array}
\end{gather*}
\vspace*{-0.15in}
\begin{multline*}
\begin{array}{r l}
& 
2^{\nicefrac{\eta}{a}} 
>
b\,\eta + c
\\
\equiv
& 
2\,a\,b\,\log_2 \big( a \, b \, c \big) \,  + c
\geq
a \, b\, 
\left( 
\log_2 \big( 2 \, a \, b \log_2 \big(a \, b \, c \big) \, +c \big)
\, \right)
+ c
\\
\equiv
&
2\,\log_2 \big( a \, b \, c \big) 
\, \geq \,
\log_2 \big( 2 \, a \, b \log_2 \big(a \, b \, c \big) \, +c \big)
\\
\Leftarrow
&
2\,\log_2 \big( a \, b \, c \big)
\, \geq \,
\log_2 \left( 2 \, a \, b\,c \log_2 \big(a \, b \, c \big) + 1 \right)
\\
& 
\,\,\,\,
\,\,\,\,
\text{\small since $2 \, a \, b \log_2 \big(a \, b \, c \big)\geq 1$ and $c\geq 1$}
\\
\equiv
&
\big( a \, b \, c \big)^2 
\, \geq \,
 2 \, a \, b\,c \log_2 \big(a \, b \, c \big) + 1
\\
\Leftarrow
& 
a \, b \, c 
\, \geq \,
\log_2 \big(a \, b \, c \big) + 1
\end{array}
\end{multline*}
and the very last inequality holds since $a\,b\,c\geq 4$.
Thus, we arrive at the at the following bounds:

\vspace*{0.1in}
\noindent
$\bullet$ 
If $\cP_2$ is a $\mu$-approximate short path then 
{\footnotesize
\[
\eta
=
\Bigg( 6\,\gadd + 2 \Bigg) 
\, 
\log_2 \Bigg( \, 
\bigg( 6\,\mu +2 \bigg) 
\, 
\bigg( 6\,\gadd + 2 \bigg) 
\,
\log_2 \bigg[
\left( 6\,\gadd + 2 \right) 
\, 
\big( 3\,\mu +1 \big) 
\,\mu \bigg] \, +\mu \Bigg)
\]
}
$\bullet$ 
If $\cP_2$ is a $\eps$-additive-approximate short path then 
{\footnotesize
\[
\eta
=
\Bigg( 6\,\gadd + 2 \Bigg)  \, 
\log_2 \Bigg(
8 \, \bigg( 6\,\gadd + 2 \bigg)  \, 
    \log_2 \bigg[ \left( 6\,\gadd + 2 \right)  
                    \, \left( 4+ 2\eps \right) \, \bigg] 
\, + 1+\frac{\eps}{2} 
\,\Bigg)
\]
}

\noindent
{\bf (b)}
Let the ordered sequence of nodes in the path $\cP_3=v_1\!\stackrel{\cP_2}{\leftrightsquigarrow}\! v_1'$ be a (length) {\em maximal} sequence of nodes such 
that: 
\[
\forall \, v' \in \cP_3 \colon 
\min_{v \,\in \,\cP_1} \left\{ \, d_{v,v'} \right\}  > Z_{\cP_1,\cP_2}
\]
Consider the following set of nodes belonging to the two paths 
$u_0\!\stackrel{\cP_2}{\leftrightsquigarrow}\! v_1$ and $v_1'\!\stackrel{\cP_2}{\leftrightsquigarrow}\! u_1$: 
\[
\begin{array}{l}
\cS_\ell=
\bigcup
\left\{
v' \! \in u_0 \! \stackrel{\cP_2}{\leftrightsquigarrow} \! v_1 \, \big| \, \exists \, v\in \cP_1 \colon d_{v,v'}= \! \min_{v'' \,\in \,\cP_2} \left\{ \, d_{v,v''} \right\}  
\right\}
\\
\cS_r=
\bigcup
\left\{
v' \! \in v_1' \! \stackrel{\cP_2}{\leftrightsquigarrow} \! u_1
\, \big| \, \exists \, v\in \cP_1 \colon d_{v,v'}= \! \min_{v'' \,\in \,\cP_2} \left\{ \, d_{v,v''} \right\}  
\right\}
\end{array}
\]
Since $u_0\in \cS_\ell$ and $u_1\in \cS_r$, it follows that $\cS_\ell\neq\emptyset$ and $\cS_r\neq\emptyset$. 
Note that 
\[
\hspace*{-0.0in}
\bigcup \left\{ v \in 
u_0 \! \stackrel{\cP_1}{\leftrightsquigarrow} \! u_1 \, \big| \,
\exists \, v' \in \cS_\ell \cup \cS_r \colon 
d_{v,v'}= \! \min_{v'' \,\in \,\cP_2} \left\{ \, d_{v,v''} \right\}
\right\} 
=
\!\!\!\!\!
\bigcup_{v \, \in \, u_0  \, \stackrel{\cP_1}{\leftrightsquigarrow}  \, u_1 } \!\!\!\!\! \Big\{ \, v\,  \Big\}
\]
Thus, there exists two adjacent nodes $v_4$ and $v_4'$ on $\cP_1$ such that both $d_{v_4,v_3}$ and $d_{v_4',v_3'}$ is at most $Z_{\cP_1,\cP_2}$.
Using triangle inequality it follows that 
\[
d_{v_3,v_3'} \leq d_{v_3,v_4} + d_{v_4,v_4'} + d_{v_4',v_3'} = 2\,Z_{\cP_1,\cP_2} + 1
\]
giving the following bounds 
\[
\ell \left( v_3 \! \stackrel{\cP_2}{\leftrightsquigarrow} \! v_3' \right) \leq 
\left\{
\begin{array}{l}
\mu\, d_{v_3,v_3'} \leq 2\,\mu\,Z_{\cP_1,\cP_2} + \mu, 
\\
   \hspace*{0.5in}
   \text{if $\cP_2$ is $\mu$-approximate short}
\\
[0.05in]
d_{v_3,v_3'} +\eps \leq 2\,Z_{\cP_1,\cP_2} + 1 + \eps,
\\
   \hspace*{0.5in}
   \text{if $\cP_2$ is $\eps$-additive-approximate short}
\end{array}
\right.
\]
For any node $v'$ on $\cP_3$, we can always use the following path to reach a node on $\cP_1$:
\begin{itemize}
\item
if $d_{v',v_3}\leq d_{v',v_3'}$ then we take the path 
$
v'\!\stackrel{\cP_2}{\leftrightsquigarrow}\! v_3
\!\stackrel{\mathfrak{s}}{\leftrightsquigarrow}\! v_4
$
of length
at most 
$\bigg\lfloor\frac{\ell \left( v_3 \stackrel{\cP_2}{\leftrightsquigarrow} v_3' \right)}{2}\bigg\rfloor
+ Z_{\cP_1,\cP_2}
$
to reach the node $v=v_4$ on $\cP_1$; 

\item
otherwise
we take the path 
$
v'\!\stackrel{\cP_2}{\leftrightsquigarrow}\! v_3'
\!\stackrel{\mathfrak{s}}{\leftrightsquigarrow}\! v_4'
$
of length
at most 
$\bigg\lfloor\frac{\ell \, \left( v_3 \stackrel{\cP_2}{\leftrightsquigarrow} v_3' \right)}{2}\bigg\rfloor
+ Z_{\cP_1,\cP_2}
$
to reach the node $v=v_4'$ on $\cP_1$.
\end{itemize}
This gives the following worst-case bounds for $d_{v,v'}$:
\[
d_{v,v'} \leq 
\left\{
\begin{array}{l}
\left \lfloor \big( \, \mu + 1 \, \big) \,Z_{\cP_1,\cP_2} + \frac {\mu}{2} \right\rfloor,
   \text{if $\cP_2$ is $\mu$-approximate short}
\\
[0.05in]
\left \lfloor 2\,Z_{\cP_1,\cP_2} + \frac{1+\eps}{2} \right\rfloor,
   \text{if $\cP_2$ is $\eps$-additive-approximate short}
\end{array}
\right.
\]
{\hfill{\Pisymbol{pzd}{113}}\vspace{0.1in}}

\begin{figure}
\epsfig{file=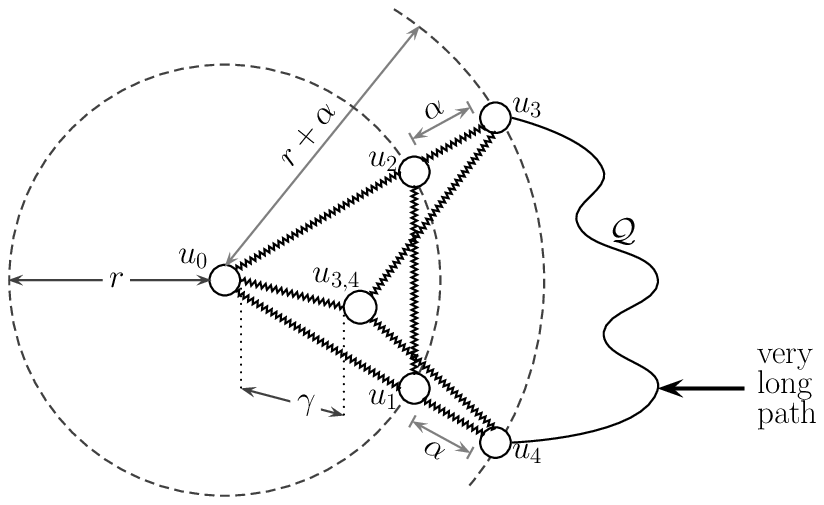}
\caption{\label{f11}Illustration of the claims in Theorem~\ref{lem3} and Corollary~\ref{cor2}.}
\end{figure}
%

\section{Theorem~\ref{lem3} and Corollary~\ref{cor2}}
\label{lem3-sec}

\begin{theorem}[see \FI{f11} for a visual illustration]
\label{lem3}
Suppose that we are given the following:

\noindent
$\bullet$
three integers $\kappa\geq 4$, $\alpha>0$, $r>\left(\frac{\kappa}{2}-1\right)\left( 6\,\gadd_{\mathrm{worst}}(G) + 2 \right)$,

\noindent
$\bullet$
five nodes $u_0,u_1,u_2,u_3,u_4$ such that
\begin{itemize}
\item
$u_1,u_2\in B_r\left(u_0\right)$ with $d_{u_1,u_2}\geq \frac{\kappa}{2} \, \left( 6\,\gadd_{\mathrm{worst}}(G) + 2 \right)$,

\item
$d_{u_1,u_4}=d_{u_2,u_3}=\alpha$. 
\end{itemize}
Then, the following statements are true for any shortest path $\cP$ between $u_3$ and $u_4$:

\vspace*{0.05in}
\noindent
{\bf ({\em a})}
there exists a node $v$ on $\cP$ such that 
\[
d_{u_0,v}\leq r-\left(\frac{3\kappa-2}{12} \right) \left( 6\,\gadd_{\mathrm{worst}}(G) + 2 \right)
=
r - 
\mathrm{O} \Big( \kappa \, \gadd_{\mathrm{worst}}(G) \Big)
\]

\noindent
{\bf ({\em b})}
$\ell \left(\cP \right)\geq \left( \frac{3 \kappa-2}{6} \right) \left( 6\,\gadd_{\mathrm{worst}}(G) + 2 \right) +2\,\alpha 
= 
{\Omega \, \Big( \kappa \,\gadd_{\mathrm{worst}}(G) \, +\,  \alpha \Big)}$.
\end{theorem}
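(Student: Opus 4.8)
The plan is to run both parts through the shortest-path triangle $\Delta_{\{u_0,u_3,u_4\}}$ whose $u_3$–$u_4$ side is the given shortest path $\cP$; throughout I write $h=6\,\gadd_{\mathrm{worst}}(G)+2$. For part~(\emph{a}) I would take $v$ to be the Gromov product node of this triangle lying on $\cP$, i.e.\ the node with $d_{u_3,v}=\big\lceil\tfrac12(d_{u_3,u_0}+d_{u_3,u_4}-d_{u_0,u_4})\big\rceil$. Applying Lemma~\ref{pre1} to the four nodes $u_0,u_3,u_4,v$ (with $v$ on a shortest path between $u_3$ and $u_4$) yields directly $d_{u_0,v}\le\big\lceil\tfrac12(d_{u_0,u_3}+d_{u_0,u_4}-d_{u_3,u_4})\big\rceil+2\,\gadd_{\mathrm{worst}}(G)$, while Theorem~\ref{lem1} supplies the complementary picture that this $v$ is, up to an additive $\mathrm{O}(\gadd_{\mathrm{worst}}(G))$, the closest node of $\cP$ to $u_0$. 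So the whole problem reduces to bounding the half-defect $g:=\tfrac12\big(d_{u_0,u_3}+d_{u_0,u_4}-d_{u_3,u_4}\big)$.

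The heart of the argument is to show $g\le r-\tfrac{\kappa}{4}h+\mathrm{O}(\gadd_{\mathrm{worst}}(G))$, from which the claimed $d_{u_0,v}\le r-\tfrac{3\kappa-2}{12}h$ follows. Here I would feed in the hypotheses through the $4$-node condition, reading the configuration as in claim~$(\star\!\star)$: $u_1,u_2$ sit on the radius-$r$ sphere about $u_0$ and $u_3,u_4$ lie radially beyond them, so that $d_{u_0,u_3}=d_{u_0,u_2}+\alpha$ and $d_{u_0,u_4}=d_{u_0,u_1}+\alpha$ with $d_{u_1,u_4}=d_{u_2,u_3}=\alpha$. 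Applying the $4$-node inequality to $\{u_1,u_2,u_3,u_4\}$ together with the aligned quadruples $\{u_0,u_1,u_2,u_3\}$ and $\{u_0,u_1,u_2,u_4\}$ gives the key estimate $d_{u_3,u_4}\ge d_{u_1,u_2}+2\alpha-\mathrm{O}(\gadd_{\mathrm{worst}}(G))$. Substituting this and $d_{u_0,u_1},d_{u_0,u_2}\le r$ into $g$, the two copies of $\alpha$ cancel and, using $d_{u_1,u_2}\ge\tfrac{\kappa}{2}h$, one obtains $g\le r-\tfrac{\kappa}{4}h+\mathrm{O}(\gadd_{\mathrm{worst}}(G))$, which is part~(\emph{a}). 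Part~(\emph{b}) then follows from the same key estimate, since $\ell(\cP)=d_{u_3,u_4}\ge d_{u_1,u_2}+2\alpha-\mathrm{O}(\gadd_{\mathrm{worst}}(G))\ge\tfrac{3\kappa-2}{6}h+2\alpha$; equivalently, as $v$ lies on the shortest path $\cP$, one has $\ell(\cP)=d_{u_3,v}+d_{v,u_4}\ge(d_{u_0,u_3}-d_{u_0,v})+(d_{u_0,u_4}-d_{u_0,v})$, into which the bound from~(\emph{a}) and the radial distances can be inserted.

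The main obstacle is the tightness of the constants. Part~(\emph{a}) leaves a slack of only $\tfrac16 h=\gadd_{\mathrm{worst}}(G)+\tfrac13$ beyond the ideal value $r-\tfrac{\kappa}{4}h$, so I cannot afford to spend a full $2\,\gadd_{\mathrm{worst}}(G)$ on each of several independent invocations of the $4$-node condition — doing so naively over-spends the budget by a multiple of $\gadd_{\mathrm{worst}}(G)$. The careful version must route the estimate through the single geodesic between $u_1$ and $u_2$ and the thin-triangle bound of Theorem~\ref{lem1} applied to $\Delta_{\{u_0,u_1,u_2\}}$ (which costs one $h$, matching the tripod slack) instead of compounding defect terms, and must exploit the exact equalities $d_{u_0,u_3}=d_{u_0,u_2}+\alpha$ and $d_{u_0,u_4}=d_{u_0,u_1}+\alpha$ to hold the rounding losses to $\pm1$. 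A secondary point I would make explicit is the radial positioning of $u_3,u_4$: without it the $\alpha$-terms neither cancel in~(\emph{a}) nor accumulate in~(\emph{b}), so this placement should be read off from the hypotheses together with the intended geometry of claim~$(\star\!\star)$.
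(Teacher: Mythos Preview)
Your framework matches the paper's: work in the shortest-path triangle $\Delta_{\{u_0,u_3,u_4\}}$, take $v=u_{3,4}$ to be the Gromov product node on $\cP$, read the hypotheses as the radial configuration $d_{u_0,u_3}=d_{u_0,u_4}=r+\alpha$ (so that $u_2$ lies on the chosen shortest path from $u_0$ to $u_3$ and $u_1$ on the one from $u_0$ to $u_4$), and derive~(\emph{b}) from~(\emph{a}) via $\ell(\cP)\ge 2(r+\alpha)-2\,d_{u_0,v}$.

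Where you diverge is the key estimate, and your worry about constants is on the mark: routing through the three quadruples $\{u_1,u_2,u_3,u_4\}$, $\{u_0,u_1,u_2,u_3\}$, $\{u_0,u_1,u_2,u_4\}$ piles up several copies of $2\,\gadd_{\mathrm{worst}}(G)$ before you even add the $2\,\gadd_{\mathrm{worst}}(G)$ from Lemma~\ref{pre1}, overshooting the allowed slack of $h/6$ (with $h=6\,\gadd_{\mathrm{worst}}(G)+2$). Your proposed repair via $\Delta_{\{u_0,u_1,u_2\}}$ is not what the paper does and remains too vague to close the gap. The paper's device is to bring in the \emph{other two} Gromov product nodes $u_{0,3},u_{0,4}$ of $\Delta_{\{u_0,u_3,u_4\}}$, set $d_{u_0,u_{0,3}}=d_{u_0,u_{0,4}}=r-x$, and apply the $4$-node condition \emph{once} to $\{u_1,u_2,u_{0,3},u_{0,4}\}$. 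The gain is that $u_{0,3}$ and $u_2$ lie on a common geodesic from $u_0$ (after one checks $x>0$), so $d_{u_2,u_{0,3}}=x$ exactly, and likewise $d_{u_1,u_{0,4}}=x$; together with $d_{u_{0,3},u_{0,4}}\le h$ from Theorem~\ref{lem1}, a short case analysis of the three pairwise sums yields $x>\tfrac{3\kappa-2}{12}\,h$ directly. One then finishes with $d_{u_0,v}\le d_{u_0,u_{0,3}}+d_{u_{0,3},u_{3,4}}\le(r-x)+h$. This trades your stacked invocations for a single one whose two ``cross'' distances are exact by construction, which is precisely what keeps the constants in check.
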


\begin{corollary}[see \FI{f11} for a visual illustration]
\label{cor2}
Consider any path $\cQ$ between $u_3$ and $u_4$ that does not involve a node in $\cup_{r'\leq r}\mathcal{B}_{r'}\left(u_0\right)$. 
Then, the following statements hold:
\begin{description}
\item[(i)]
$\ell \left( \cQ \right) \geq 
2^{^{
\textstyle
\frac{\alpha} { 6\,\gadd_{\mathrm{worst}}(G) + 2 }
+
\frac{\kappa}{4} + \frac{5}{6}
}
}
-1
=
2^{\textstyle \Omega \, \left( \frac{\alpha}{\gadd_{\mathrm{worst}}(G)} \,+\, \kappa \right)}
$.
In particular, if $\gadd_{\mathrm{worst}}(G)$ is a constant then 
$\ell \left( \cQ \right) = {2^{\textstyle \Omega \left( \alpha + \kappa \right)}}$
and thus 
$\ell \left( \cQ \right)$ increases at least exponentially with both $\alpha$ and $\kappa$.

\vspace*{0.1in}
\item[(ii)]
if $\cQ$ is a $\mu$-approximate short path then 
{\small
\[
\mu \, \geq \,
\frac
{
   2^{
   \textstyle
\frac{\alpha} { 6\,\gadd_{\mathrm{worst}}(G) + 2 }
+
\frac{\kappa}{4} - \frac{1}{6}
   }
}
{
   { 12\,\alpha + \big( 3\,\kappa -26 - \mathrm{o} (1) \, \big)\, \big(6\,\gadd_{\mathrm{worst}}(G) + 2 \big)   }
}
- \frac { 1 } { 3 }
=
\Omega
\left(
\frac{
2^{
\Theta \, \left( \frac{\alpha}{\gadd_{\mathrm{worst}}(G)} \,+\, \kappa\right)
}
}
{
\alpha
+ \kappa\,
\gadd_{\mathrm{worst}}(G)
}
\right)
\]
}
In particular, if $\gadd_{\mathrm{worst}}(G)$ is a constant then 
$\mu = {\Omega \left( \frac{ 2^{\,\Theta \,( \alpha+\kappa )}  } { \alpha + \kappa } \right)}$
and thus 
$\mu$ increases at least exponentially with both $\alpha$ and $\kappa$.

\vspace*{0.1in}
\item[(iii)]
if $\cQ$ is a $\eps$-additive-approximate short path then 
\[
\eps > 
\frac{
2^{
   \textstyle
\frac{\alpha} { 6\,\gadd_{\mathrm{worst}}(G) + 2 }
+
\frac{\kappa}{4} - \frac{1}{6}
}
}
{
48\,\gadd_{\mathrm{worst}}(G) \,+ \frac{17}{2}
}
\,-\,
\log_2 \left( 48\,\gadd_{\mathrm{worst}}(G) + 16 \right) 
\]
In particular, if $\gadd_{\mathrm{worst}}(G)$ is a constant then 
$\eps = {\Omega \left( 2^{\,\Theta \,( \alpha+\kappa )} \right)}$
and thus 
$\eps$ increases at least exponentially with both $\alpha$ and $\kappa$.
\end{description}
\end{corollary}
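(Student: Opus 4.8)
The plan is to reduce all three parts to one distance estimate and then feed it into the approximate‑path machinery already built for Theorem~\ref{lem2} and Theorem~\ref{lem2-1}. Write $A=6\,\gadd_{\mathrm{worst}}(G)+2$ and $\beta=\left(\tfrac{3\kappa-2}{12}\right)A$, so that the depth bound of Theorem~\ref{lem3}(a) reads $d_{u_0,v}\le r-\beta$ and the length bound of Theorem~\ref{lem3}(b) reads $\ell(\cP)\ge 2\alpha+2\beta$, where $\cP$ is a shortest path between $u_3$ and $u_4$. The single claim I would isolate is: $(\ast)$ there is a node $v$ on $\cP$ with $\min_{v'\in\cQ}d_{v,v'}\ge \gamma$, where $\gamma=\alpha+\beta$. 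Once $(\ast)$ is available, Corollary~\ref{cor1} applied to the shortest path $\cP$ and the arbitrary path $\cQ$ yields part (i) verbatim, since its exponent $\tfrac{\gamma}{A}+1=\tfrac{\alpha}{A}+\tfrac{3\kappa-2}{12}+1=\tfrac{\alpha}{A}+\tfrac{\kappa}{4}+\tfrac56$ is exactly the stated one.

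To prove $(\ast)$ I would take $v$ to be a node on $\cP$ that is simultaneously \emph{deep} and \emph{central}: Theorem~\ref{lem3}(a) supplies the depth $d_{u_0,v}\le r-\beta$, while (b) forces $\ell(\cP)\ge 2\alpha+2\beta$, so the mid‑node of $\cP$ lies at distance $\ge\alpha+\beta$ from both endpoints $u_3,u_4$; a first sub‑step is to check that the node dug out in the proof of (a) can be taken to be this central node. Since $\cQ$ avoids $\cup_{r'\le r}\mathcal{B}_{r'}(u_0)=B_r(u_0)$, every node $v'$ of $\cQ$ satisfies $d_{u_0,v'}\ge r+1$. I would then establish $d_{v,v'}\ge\alpha+\beta$ for every such $v'$ by combining the radial depth of $v$ with the inward‑bending of geodesics encoded in Theorem~\ref{lem1}: a node sitting outside $B_r(u_0)$ whose geodesic to the deep central node $v$ were short would, by the thinness of the shortest‑path triangle $\Delta_{\{u_3,u_4,v'\}}$, be forced to absorb both the depth cost $\beta$ and the endpoint offset $\alpha$.

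The hard part will be exactly this extraction of the additive $\alpha$. The elementary route through $u_0$, namely $d_{v,v'}\ge d_{u_0,v'}-d_{u_0,v}\ge (r+1)-(r-\beta)=\beta+1$, recovers only the $\kappa$‑part of the exponent and misses $\alpha$ entirely; indeed the ball of radius $\alpha+\beta$ about $v$ protrudes past the sphere of radius $r$ by $\alpha-1$, so a priori $\cQ$ is free to approach $v$ within $\beta+1$. Getting the remaining $\alpha$ therefore cannot use any triangle inequality through $u_0$ alone: it must exploit that $v$ is central on $\cP$ (hence $\ge\alpha+\beta$ from $u_3$ and $u_4$, which themselves lie on $\cQ$) together with the four‑point/thin‑triangle estimate of Theorem~\ref{lem1}. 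This is the crux of the whole corollary, and I expect it to be where the genuine hyperbolic geometry enters.

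With $(\ast)$ in hand, parts (ii) and (iii) follow by rerunning the transcendental‑inequality analysis from the proof of Theorem~\ref{lem2-1} and Corollary~\ref{cor2-1}, now with the same $\gamma=\alpha+\beta$ but with base $A=6\,\gadd_{\mathrm{worst}}(G)+2$ inherited from Corollary~\ref{cor1}. If $\cQ$ is $\mu$‑approximate short, solving $2^{\gamma/A}\le b\,\gamma+c$ as there gives a bound of the form $\mu\ge \dfrac{2^{\gamma/A}}{12\gamma-24A}-\tfrac13$, and substituting $12\gamma=12\alpha+(3\kappa-2)A$ turns the denominator into $12\alpha+(3\kappa-26)A$, matching part (ii) with the $-o(1)$ absorbing the rounding in the recursion. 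If $\cQ$ is $\eps$‑additive‑approximate short, the additive version of the same computation produces the stated lower bound on $\eps$ with denominator $48\,\gadd_{\mathrm{worst}}(G)+\tfrac{17}{2}$ and the logarithmic correction term. Specializing to constant $\gadd_{\mathrm{worst}}(G)$ then yields the advertised exponential growth of $\ell(\cQ)$, $\mu$, and $\eps$ in both $\alpha$ and $\kappa$.
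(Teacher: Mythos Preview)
Your high-level plan is exactly the paper's: isolate a node $v=u_{3,4}$ on the shortest path $\cP$, establish the distance gap $(\ast)$ with $\gamma=\alpha+\beta$, and then read off (i) from Corollary~\ref{cor1} and (ii)--(iii) from Corollary~\ref{cor2-1} by the substitution $\tfrac{\gamma}{A}=\tfrac{\alpha}{A}+\tfrac{\kappa}{4}-\tfrac{1}{6}$. Everything downstream of $(\ast)$ in your write-up matches the paper's argument line by line.

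Where you diverge from the paper is precisely at $(\ast)$, and here you are in fact \emph{more} careful than the paper. The paper does not use any thin-triangle argument to extract the extra $\alpha$; it simply asserts that ``every node of $\cQ$ is at a distance strictly larger than $r+\alpha$ from $u_0$'' and then applies the very triangle inequality through $u_0$ that you reject, obtaining $d_{u_{3,4},v'}\ge (r+\alpha)-d_{u_0,u_{3,4}}=\alpha+\beta$. Under the hypothesis as stated (namely that $\cQ$ only avoids $\cup_{r'\le r}\mathcal{B}_{r'}(u_0)$), this radial claim is not justified --- indeed the endpoints $u_3,u_4$ sit at distance exactly $r+\alpha$, and interior nodes of $\cQ$ are allowed to dip down to distance $r+1$. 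Your observation that the honest triangle inequality yields only $d_{v,v'}\ge\beta+1$ is correct.

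So the situation is: your reduction and your handling of (i)--(iii) given $(\ast)$ are the paper's own; the paper closes $(\ast)$ by a one-line triangle inequality that relies on a stronger radial bound than the hypothesis provides; and your proposed alternative route to $(\ast)$ via the centrality of $u_{3,4}$ on $\cP$ together with Theorem~\ref{lem1} is plausible but left as a sketch. If you want to match the paper exactly, take the triangle-inequality route and note that it needs $\cQ$ to avoid the larger ball $\cup_{r'\le r+\alpha}\mathcal{B}_{r'}(u_0)$; if you want to salvage the $\alpha$ under the stated hypothesis, you will have to actually carry out the four-point computation you allude to, since neither your outline nor the paper supplies it.
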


\subsection{Proof of Theorem~\ref{lem3}}

Consider the shortest-path triangle $\Delta_{\left\{u_0,u_3,u_4\right\}}$ and 
let $u_{0,3},u_{0,4}$ and $u_{3,4}$ be 
the Gromov product nodes of $\Delta_{\{u_0,u_3,u_4\}}$ on the sides (shortest paths) $u_0$ to $u_3$, $u_0$ to $u_4$ and $u_3$ to $u_4$, respectively.
Thus, $d_{u_0,u_{0,3}}=d_{u_0,u_{0,4}}$, and $\beta=d_{u_3,u_{3,4}}=\left\lfloor \frac{d_{u_0,u_3}+d_{u_3,u_4}-d_{u_0,u_4}}{2} \right\rfloor = \left\lfloor \frac{d_{u_3,u_4}}{2} \right\rfloor$
since $d_{u_0,u_3}=d_{u_0,u_4}=r+\alpha$.

We first claim that $d_{u_0,u_{0,3}}<r=d_{u_0,u_2}$. Suppose for the sake of contradiction that $d_{u_0,u_{0,3}}=d_{u_0,u_{0,4}}\geq r$.
Then, by Theorem~\ref{lem1} we get $d_{u_1,u_2}\leq 6\,\gadd_{\mathrm{worst}}(G) + 2$ which contradicts the assumption that 
$d_{u_1,u_2}\geq \frac{\kappa}{2} \, \left( 6\,\gadd_{\mathrm{worst}}(G) + 2 \right)$ since $\kappa\geq 4$.

Thus, assume that $d_{u_0,u_{0,3}}=d_{u_0,u_{0,4}}=r-x$ for some integer $x>0$.  By Theorem~\ref{lem1}, $d_{u_{0,3},u_{0,4}}\leq 6\,\gadd_{\mathrm{worst}}(G) + 2$. 
Let $d_{u_{0,3},u_{0,4}}=6\,\gadd_{\mathrm{worst}}(G) + 2-y$ for some integer $0<y\leq 6\,\gadd_{\mathrm{worst}}(G) + 2$ and 
$d_{u_1,u_2}=\frac{\kappa}{2} \, \left( 6\,\gadd_{\mathrm{worst}}(G) + 2 \right)+z$ for some integer $z\geq 0$.
Consider the $4$-node condition for the four nodes $u_1,u_2,u_{0,3},u_{0,4}$.
The three relevant quantities for comparison are: 
\[
\begin{array}{l}
q_{\parallel}=d_{u_1,u_2}+d_{u_{0,3},u_{0,4}}= \left(\frac{\kappa}{2}+1\right) \left( 6\,\gadd_{\mathrm{worst}}(G) + 1 \right)+z -y
\\
[0.05in]
\hspace*{-0.1in}
q_{=}=d_{u_{0,3},u_2}+d_{u_{0,4},u_1}=
\left(d_{u_0,u_2}-d_{u_0,u_{0,3}}\right) + \left(d_{u_0,u_1}-d_{u_0,u_{0,4}}\right) =2x
\\
[0.05in]
q_{\varparallelinv}=d_{u_{0,3},u_1}+d_{u_{0,4},u_2}
\leq
\left( d_{u_{0,3},u_{0,4}}+d_{u_{0,4},u_1} \right) + \left( d_{u_{0,3},u_{0,4}}+d_{u_{0,3},u_2} \right)
\\
\hspace*{1.1in}
=
12\,\gadd_{\mathrm{worst}}(G) + 4-2y+2x
\end{array}
\]
We now show that $x> \left(\frac{3\kappa-2}{12}\right) \left( 6\,\gadd_{\mathrm{worst}}(G) + 2 \right)$. 
We have the following cases.
\begin{itemize}
\item
Assume that $q_{\varparallelinv} \leq \min \left\{ q_{\parallel},\, q_{=} \right\}$. This implies
\[
\begin{array}{cl}
&
\big| q_{\parallel}-q_{=} \big| \leq 2\,\gadd_{\mathrm{worst}}(G) 
\\
[0.05in]
\equiv
&
\Big| \, \left( \frac{\kappa}{2} +1\right) \,\left( 6\,\gadd_{\mathrm{worst}}(G) + 2 \right)+z -y-2x \, \Big| \leq 2\,\gadd_{\mathrm{worst}}(G)
\\
\Rightarrow
&
x \geq \dfrac{ \left( \frac{\kappa}{2} +1\right) \, \left( 6\,\gadd_{\mathrm{worst}}(G) + 2 \right)+z -y - 2\,\gadd_{\mathrm{worst}}(G)}{2}
\\
&
\hspace*{0.1in}
\geq
\left(\frac{3\kappa-2}{12}\right) \left( 6\,\gadd_{\mathrm{worst}}(G) + 2 \right)
\,+\, \frac{1}{6}
\end{array}
\]

\item
Otherwise, assume that $q_{=} \leq \min \left\{ q_{\parallel},\, q_{\varparallelinv} \right\}$. 
This implies
\[
\hspace*{-0.0in}
\begin{array}{cl}
&
\big| q_{\parallel}-q_{\varparallelinv} \big| \leq 2\,\gadd_{\mathrm{worst}}(G) 
\\
[0.05in]
\Rightarrow
&
q_{\varparallelinv} \geq  q_{\parallel} - 2\,\gadd_{\mathrm{worst}}(G) 
\\
[0.05in]
\Rightarrow
&
d_{u_{0,3},u_1}+d_{u_{0,4},u_2}
\geq  
\left( \frac{\kappa}{2} +1\right) \, \left( 6\,\gadd_{\mathrm{worst}}(G) + 2 \right)+z -y - 2\,\gadd_{\mathrm{worst}}(G) 
\\
[0.05in]
\Rightarrow
&
\big( d_{u_{0,3},u_{0,4}} + d_{u_{0,4},u_1} \big) + \big( d_{u_{0,3},u_{0,4}} + d_{u_{0,3},u_2} \big) 
\geq 
d_{u_{0,3},u_1}+d_{u_{0,4},u_2}
\\
&
\hspace*{0.4in}
\geq 
\left( \frac{\kappa}{2} +1\right) \, \left( 6\,\gadd_{\mathrm{worst}}(G) + 2 \right)+z -y - 2\,\gadd_{\mathrm{worst}}(G) 
\\
[0.05in]
\Rightarrow
&
2x + 2 \,\Big( 6\,\gadd_{\mathrm{worst}}(G) + 2-y \, \Big) 
\\
&
\hspace*{0.4in}
\geq 
\left( \frac{\kappa}{2} +1\right) \, \left( 6\,\gadd_{\mathrm{worst}}(G) + 2 \right)+z -y - 2\,\gadd_{\mathrm{worst}}(G) 
\\
[0.05in]
\Rightarrow
&
x \geq 
\left(\frac{3\kappa-2}{12}\right) \left( 6\,\gadd_{\mathrm{worst}}(G) + 2 \right)
\,+\, \frac{1}{6}
\end{array}
\]

\item 
Otherwise, assume that $q_{\parallel} \leq \min \left\{ q_{=},\, q_{\varparallelinv} \right\}$. This implies
\[
\begin{array}{cl}
&
\big| q_{=}-q_{\varparallelinv} \big| \leq 2\,\gadd_{\mathrm{worst}}(G) 
\\
[0.05in]
\equiv
&
\big|\,  2x - \big( d_{u_{0,3},u_1}+d_{u_{0,4},u_2} \big) \, \big| \leq 2\,\gadd_{\mathrm{worst}}(G) 
\\
[0.05in]
\Rightarrow
&
2x \geq d_{u_{0,3},u_1}+d_{u_{0,4},u_2} - 2\,\gadd_{\mathrm{worst}}(G) 
\\
&
\hspace*{0.4in}
\geq 
\big( d_{u_1,u_2} - d_{u_{0,4},u_1}  \big) + \big( d_{u_1,u_2} - d_{u_{0,3},u_1} \big) - 2\,\gadd_{\mathrm{worst}}(G) 
\\
[0.05in]
\equiv
&
2x \geq 
\kappa \left( 6\,\gadd_{\mathrm{worst}}(G) + 2 \right)+2z - 2x - 2\,\gadd_{\mathrm{worst}}(G)
\\
[0.05in]
\Rightarrow
&
x \geq \left( \frac{3\kappa-2}{12} \right) \left( 6\,\gadd_{\mathrm{worst}}(G) + 2 \right)
+ \frac{\gadd_{\mathrm{worst}}(G) }{2}
+ \frac{1}{6}
\end{array}
\]
\end{itemize}
Using Theorem~\ref{lem1}, it now follows that 
\begin{multline*}
d_{u_0,u_{3,4}} \leq d_{u_0,u_{0,3}} + d_{u_{0,3},u_{0,4}}
\leq \big(r-x\big) + \left( 6\,\gadd_{\mathrm{worst}}(G) + 2 \right)
\\
< 
r-\left(\frac{3\kappa-2}{12} \right) \left( 6\,\gadd_{\mathrm{worst}}(G) + 2 \right)
\end{multline*}
This proves part {\bf ({\em a})} with $u_{3,4}$ being the node in question. To prove part {\bf ({\em b})}, note that 
\[
|\cP|=2\,\beta \geq  2(r+\alpha)-2d_{u_0,u_{3,4}}
\geq 
2\alpha + 
\left( \dfrac{3 \kappa-2}{6} \right) \left( 6\,\gadd_{\mathrm{worst}}(G) + 2 \right)
\]



\subsection{Proof of Corollary~\ref{cor2}}

Consider such a path $\cQ$ and consider the node $u_{3,4}$ on the shortest path between $u_3$ and $u_4$.
Since every node of $\cQ$ is at a distance strictly larger than $r+\alpha$ from $u_0$, 
by Theorem~\ref{lem3} the following holds for every node $v\in\cQ$ 
\begin{multline*}
\hspace*{-0.2in}
d_{u_{3,4},v} \geq 
\big( r + \alpha \big) - d_{u_0,u_{3,4}}
=
\big( r + \alpha \big) - \left( r-\left(\frac{3 \kappa -2}{12} \right) \left( 6\,\gadd_{\mathrm{worst}}(G) + 2 \right) \right)
\\
=
\alpha + \left(\frac{3 \kappa -2}{12} \right) \left( 6\,\gadd_{\mathrm{worst}}(G) + 2 \right)
\end{multline*}
Thus, by Corollary~\ref{cor1} (with $\gamma=\alpha + \left(\frac{3 \kappa -2}{12} \right) \left( 6\,\gadd_{\mathrm{worst}}(G) + 2 \right)\,$), 
we get 
\[
\ell \left( \cQ \right) \geq 
2^{^{ \frac{\gamma}{6\,\gadd_{\mathrm{worst}}(G) + 2} +1 } }-1
=
2^{^{
\textstyle
\frac{\alpha} { 6\,\gadd_{\mathrm{worst}}(G) + 2 }
+
\frac{\kappa}{4} + \frac{5}{6}
}
}
-1
\]
If $\cQ$ is a $\mu$-approximate short path, then by Corollary~\ref{cor2-1}:
\begin{multline*}
           \mu
      > 
\frac
{
   2^{
   \textstyle
   \frac{ \gamma }
   {
   6\,\gadd_{\mathrm{worst}}(G) + 2
   }
   }
}
{
   { 12\,\gamma }
   - 
              \Big( \, 24 + \mathrm{o} (1) \, \Big)
              \,
              \Big( \, 6\,\gadd_{\mathrm{worst}}(G) + 2 \, \Big)
}
- \frac { 1 } { 3 }
\\
=
\frac
{
   2^{
   \textstyle
\frac{\alpha} { 6\,\gadd_{\mathrm{worst}}(G) + 2 }
+
\frac{\kappa}{4} - \frac{1}{6}
   }
}
{
   { 12\,\alpha + \big( 3\,\kappa -26 - \mathrm{o} (1) \, \big)\, \big(6\,\gadd_{\mathrm{worst}}(G) + 2 \big)   }
}
- \frac { 1 } { 3 }
\end{multline*}
If $\cQ$ is a $\eps$-additive-approximate short path, then by Corollary~\ref{cor2-1}:
\begin{gather*}
\eps
\,>\, 
\frac{
2^{
\frac{
\gamma
}
{
6\,\gadd_{\mathrm{worst}}(G) + 2
}
}
}
{
48\,\gadd_{\mathrm{worst}}(G) \,+ \frac{17}{2}
}
\,-\,
\log_2 \left( 48\,\gadd_{\mathrm{worst}}(G) + 16 \right) 
\\
\hspace*{0.3in}
=
\frac{
2^{
   \textstyle
\frac{\alpha} { 6\,\gadd_{\mathrm{worst}}(G) + 2 }
+
\frac{\kappa}{4} - \frac{1}{6}
}
}
{
48\,\gadd_{\mathrm{worst}}(G) \,+ \frac{17}{2}
}
\,-\,
\log_2 \left( 48\,\gadd_{\mathrm{worst}}(G) + 16 \right) 
\end{gather*}



\section{Lemma~\ref{lem4}}
\label{lem4-sec}

\begin{lemma}
\label{lem4}
{\small\em\bf (equivalence of strong and weak domination; see \FI{f12-prev} for a visual illustration)}
If $\lambda\geq \left( 6\,\gadd_{\mathrm{worst}}(G)+2 \right) \log_2 n$ then
\vspace*{-5pt}
\[
\hspace*{-0.0in}
\mathfrak{M}_{u,\rho,\lambda}
\stackrel{\mathrm{def}}{=\joinrel=}
\eX{
\left.
\hspace*{-0.15in}
\text{
\begin{tabular}{p{1.4in}}
\small 
number of pairs of nodes $v,y$ such that $v,y$ is {\bf weakly} $(\rho,\lambda)$-dominated by $u$
\end{tabular}
}
\hspace*{-0.1in}
\right| 
\hspace*{-0.1in}
\text{
\begin{tabular}{p{1in}}
\small 
$v$ is selected uniformly randomly from $\cup_{\rho\,<\,j\,\leq\, \lambda} \mathcal{B}_j(u)$
\end{tabular}
}
\hspace*{-0.1in}
}
\]
\vspace*{-10pt}
\[
\hspace*{28pt}
\pmb{=}
\,\,
\eX{
\left.
\hspace*{-0.15in}
\text{
\begin{tabular}{p{1.4in}}
\small 
number of pairs of nodes $v,y$ such that $v,y$ is {\bf strongly} $(\rho,\lambda)$-dominated by $u$
\end{tabular}
}
\hspace*{-0.1in}
\right| 
\hspace*{-0.1in}
\text{
\begin{tabular}{p{1in}}
\small 
$v$ is selected uniformly randomly from $\cup_{\rho\,<\,j\,\leq\, \lambda} \mathcal{B}_j(u)$
\end{tabular}
}
\hspace*{-0.1in}
}
\]
\end{lemma}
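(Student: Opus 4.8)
The plan is to deduce the equality of the two expectations from a pairwise statement: under the hypothesis on $\lambda$, for every pair of nodes $v,y$ in the annulus $\cup_{\rho<j\leq\lambda}\mathcal{B}_j(u)$, weak $(\rho,\lambda)$-domination and strong $(\rho,\lambda)$-domination by $u$ are \emph{equivalent}. Strong domination implies weak domination directly from the definition, so the count $c_{\mathrm{strong}}(v)$ of nodes $y$ with $(v,y)$ strongly dominated is termwise at most the corresponding weak count $c_{\mathrm{weak}}(v)$. Each side of the claimed identity is the uniform average $\tfrac{1}{|A|}\sum_{v\in A} c(v)$ over $v\in A=\cup_{\rho<j\leq\lambda}\mathcal{B}_j(u)$, so proving $c_{\mathrm{weak}}(v)=c_{\mathrm{strong}}(v)$ for every $v$ (equivalently, that every weakly dominated pair is strongly dominated) forces the two averages to agree. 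Thus the lemma reduces to ruling out a pair $(v,y)$ that is weakly but not strongly dominated.

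Assume such a pair exists. By condition (a), $\rho<d_{u,v},d_{u,y}\leq\rho+\lambda$. Weak domination supplies a shortest path $\cP_1$ between $v$ and $y$ carrying a node $z$ with $d_{u,z}\leq\rho$, while the failure of strong domination supplies another shortest path $\cP_2$ between $v$ and $y$ \emph{all} of whose nodes lie strictly outside $\mathcal{B}_{\rho}(u)$; i.e. $\cP_2$ avoids the ball $\cup_{r'\leq\rho}\mathcal{B}_{r'}(u)$. (Corollary~\ref{lem1-cor} already bounds the Hausdorff distance between $\cP_1$ and $\cP_2$ by $6\,\gadd_{\mathrm{worst}}(G)+2$, so $\cP_2$ may graze the ball; the real content is that for a thick enough annulus it cannot stay outside.) I would then match this to the setup of Theorem~\ref{lem3}/Corollary~\ref{cor2}: take $u_0=u$, $r=\rho$, $u_4=v$, $u_3=y$, and $\cQ=\cP_2$. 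Letting $u_1$ (resp. $u_2$) be the node at distance $\rho$ from $u$ on a shortest $u$–$v$ (resp. $u$–$y$) path puts $u_1,u_2\in\mathcal{B}_{\rho}(u)$ with $d_{u_1,u_4}=d_{u_2,u_3}=\alpha:=d_{u,v}-\rho$, after reducing to $d_{u,v}=d_{u,y}$ by truncating the longer radius. The presence of the dipping path $\cP_1$ forces $d_{v,y}\geq 2\alpha$, which controls $d_{u_1,u_2}$ from below and lets one fix an integer $\kappa\geq4$ with $d_{u_1,u_2}\geq\tfrac{\kappa}{2}\left(6\,\gadd_{\mathrm{worst}}(G)+2\right)$ and $\rho>\left(\tfrac{\kappa}{2}-1\right)\left(6\,\gadd_{\mathrm{worst}}(G)+2\right)$.

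With the hypotheses verified, Corollary~\ref{cor2}(i) applied to $\cQ=\cP_2$ gives $\ell(\cP_2)\geq 2^{\,\frac{\alpha}{6\,\gadd_{\mathrm{worst}}(G)+2}+\frac{\kappa}{4}+\frac{5}{6}}-1$. But $\cP_2$ is a \emph{shortest} path, so $\ell(\cP_2)=d_{v,y}<n$. The hypothesis $\lambda\geq\left(6\,\gadd_{\mathrm{worst}}(G)+2\right)\log_2 n$ is exactly what drives the exponent above $\log_2 n$, making the right-hand side at least $n$ and contradicting $\ell(\cP_2)<n$. Hence no weakly-but-not-strongly dominated pair exists, the two counts agree for every $v$, and the two expectations coincide. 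The geometric picture behind the parameter choices is the ``bending inward'' phenomenon of Figure~\ref{f12}: in a hyperbolic graph the geodesic between two far-apart nodes must curve toward the center $u$, so a shortest path cannot have both endpoints deep outside $\mathcal{B}_{\rho}(u)$ and yet stay entirely outside the ball.

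I expect the main obstacle to be the quantitative verification of the Corollary~\ref{cor2} hypotheses: certifying $d_{u_1,u_2}\geq\tfrac{\kappa}{2}\left(6\,\gadd_{\mathrm{worst}}(G)+2\right)$ with a $\kappa$ (together with $\alpha$) large enough that the exponent exceeds $\log_2 n$, and handling the asymmetric case $d_{u,v}\neq d_{u,y}$ cleanly by choosing $u_3,u_4$ on $\cP_2$ so the two radii match. This is precisely the place where the specific value of the threshold $\lambda=\left(6\,\gadd_{\mathrm{worst}}(G)+2\right)\log_2 n$ must be shown to suffice, rather than the weaker additive $6\,\gadd_{\mathrm{worst}}(G)+2$ slack that Corollary~\ref{lem1-cor} alone provides.
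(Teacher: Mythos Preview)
Your approach is essentially the same as the paper's: both argue that any weakly dominated pair is strongly dominated by assuming a shortest path $\cP_2$ avoids $\mathcal{B}_\rho(u)$, invoking Corollary~\ref{cor2}(i) to lower-bound its length, and contradicting $\ell(\cP_2)<n$ via the hypothesis $\lambda\geq(6\,\gadd_{\mathrm{worst}}(G)+2)\log_2 n$.

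The paper's execution is considerably terser than yours. It simply fixes $\kappa=4$ and substitutes $\alpha=\lambda$ into Corollary~\ref{cor2}(i) to obtain
\[
\ell(\cQ)\;\geq\;2^{\frac{\lambda}{6\,\gadd_{\mathrm{worst}}(G)+2}+\frac{11}{6}}-1\;\geq\;2^{\log_2 n+\frac{11}{6}}-1\;>\;n-1,
\]
and declares the contradiction. It does \emph{not} explicitly identify $u_1,u_2,u_3,u_4$, check $d_{u_1,u_2}\geq\tfrac{\kappa}{2}(6\,\gadd_{\mathrm{worst}}(G)+2)$, handle the asymmetric case $d_{u,v}\neq d_{u,y}$, or verify the lower bound on $r$. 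In other words, exactly the items you flag as ``the main obstacle'' are the ones the paper glosses over; your proposal is a more careful unpacking of what the paper asserts in one line. Your instinct that the dipping path $\cP_1$ forces a large $d_{v,y}$ (hence a usable $d_{u_1,u_2}$) is the right lever for filling those gaps.
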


\begin{proof}
Suppose that $v,y$ is weakly $(\rho,\lambda)$-dominated by $u$, \IE, 
there exists a shortest path 
$v \!\stackrel{\cP}{\leftrightsquigarrow}\! y$
between $v,y\in\mathcal{B}_{\rho+\lambda}(u)$
such that for some node $v'\in v\!\stackrel{\cP}{\leftrightsquigarrow}\!y$ we have $v'\in\mathcal{B}_{\rho}(u)$.                                                  
Let $v \!\stackrel{\cQ}{\leftrightsquigarrow}\! y$
be any other path between $v$ and $y$ that does {\em not} contain a node
from $\mathcal{B}_{\rho}(u)$.
Then, by Corollary~\ref{cor2}{\bf (i)} (with $\kappa=4$) we have 
\[
\ell \left( \cQ \right) \geq 
2^{^{
\textstyle
\frac{\lambda} { 6\,\gadd_{\mathrm{worst}}(G) + 2 }
+
\frac{11}{6}
}
}
-1
\geq
2^{\log_2 n + \frac{11}{6}} -1
> n-1
\]
which contradicts the obvious bound $\ell \left( \cQ \right)<n$.
Thus, {\em no} such path $\cQ$ exists and 
$v,y$ is strongly $(\rho,\lambda)$-dominated by $u$.
\end{proof}


\begin{acknowledgments}
B. DasGupta and N. Mobasheri were supported by NSF grants IIS-1160995. 
R. Albert was supported by NSF grants IIS-1161007 and PHY-1205840.
\end{acknowledgments}


\begin{figure*}
\begin{center}
{\bf\Huge Supplemental Information}
\end{center}
\end{figure*}

\clearpage

\renewcommand{\arraystretch}{1.2}
\begin{table*}
\caption{\label{L1}Details of $11$ biological networks studied}
\begin{tabular}{llccc}
\\
[-0.1in]
\multicolumn{1}{c}{name} & \multicolumn{1}{c}{brief description} & \multicolumn{1}{c}{\# nodes} & \multicolumn{1}{c}{\# edges} & \multicolumn{1}{c}{reference}
\\
\hline
\begin{tabular}{ll} 
{\bf 1}. & 
{\em E. coli} transcriptional 
\end{tabular}
& 
\begin{tabular}{l} 
{\em E. coli} transcriptional regulatory network of
\\
direct regulatory interactions between transcription 
\\
factors and the genes or operons they regulate
\end{tabular}
&
311
& 
451
&
\cite{net1} 
\\
\hline
\begin{tabular}{ll} 
{\bf 2}. & 
Mammalian signaling 
\end{tabular}
& 
\begin{tabular}{l} 
Mammalian network of signaling pathways and 
\\
cellular machines in the hippocampal CA1 neuron
\end{tabular}
&
512
& 
1047
&
\cite{net2} 
\\
\hline
\begin{tabular}{ll} 
{\bf 3}. & 
{\em E. coli} transcriptional 
\end{tabular}
& 
\begin{tabular}{l} 
{\em E. coli} transcriptional regulatory network of
\\
direct regulatory interactions between transcription 
\\
factors and the genes or operons they regulate
\end{tabular}
&
418
& 
544
&
$\pmb{\sharp}$
\\
\hline
\begin{tabular}{ll} 
{\bf 4}. & 
T-LGL signaling 
\end{tabular}
& 
\begin{tabular}{l} 
Signaling network inside cytotoxic T cells in the context of 
\\
the disease T cell large granular lymphocyte leukemia
\end{tabular}
&
58
& 
135
&
\cite{net6} 
\\
\hline
\begin{tabular}{ll} 
{\bf 5}. & 
{\em S. cerevisiae} 
\\
&
transcriptional 
\end{tabular}
& 
\begin{tabular}{l} 
{\em S. cerevisiae} transcriptional regulatory network 
\\
showing interactions between transcription factor 
\\
proteins and genes
\end{tabular}
&
690
& 
1082
&
\cite{net8} 
\\
\hline
\begin{tabular}{ll} 
{\bf 6}. & 
{\em C. elegans} 
metabolic 
\end{tabular}
& 
\begin{tabular}{l} 
The network of biochemical reactions in {\em C. elegans} metabolism
\end{tabular}
&
453
& 
2040
&
\cite{net10-1} 
\\
\hline
\begin{tabular}{ll} 
{\bf 7}. & 
Drosophila 
\\
&
segment polarity
\\
& 
(6 cells)
\end{tabular}
& 
\begin{tabular}{l} 
1-dimensional 6-cell version of the gene regulatory 
\\
network among products of the segment polarity 
\\
gene family that plays an important role in the
\\
embryonic development of {\em Drosophila melanogaster}
\end{tabular}
&
78
& 
132
&
\cite{Odell:2000}
\\
\hline
\begin{tabular}{ll} 
{\bf 8}. & 
ABA signaling
\end{tabular}
& 
\begin{tabular}{l} 
Guard cell signal transduction network for 
\\
abscisic acid (ABA) induced stomatal closure in plants
\end{tabular}
&
55
& 
88
&
\cite{LAA06}
\\
\hline
\begin{tabular}{ll} 
{\bf 9}. & 
Immune response 
\\
&
network
\end{tabular}
& 
\begin{tabular}{l} 
Network of interactions among immune cells and pathogens 
\\
in the mammalian immune response against two bacterial species 
\end{tabular}
&
18
& 
42
&
\cite{Thakar07}
\\
\hline
\begin{tabular}{ll} 
{\bf 10}. & 
T cell receptor 
\\
&
signaling
\end{tabular}
& 
\begin{tabular}{l} 
Network for T cell activation mechanisms after engagement 
\\
of the TCR, the CD$_4$/CD8 co-receptors and CD\scalebox{0.8}{2}8.
\end{tabular}
&
94
& 
138
&
\cite{julio07}
\\
\hline
\begin{tabular}{ll} 
{\bf 11}. & 
Oriented yeast PPI
\end{tabular}
& 
\begin{tabular}{l} 
An oriented version of an unweighted PPI network constructed 
\\
from {\em S. cerevisiae} interactions in the BioGRID database 
\end{tabular}
&
786
& 
2445
&
\cite{net11}
\\
\hline
\\
\multicolumn{5}{c}{
$^{\textstyle\pmb{\sharp}}\,$Updated version of the network in~\cite{net1}; see \url{www.weizmann.ac.il/mcb/UriAlon/Papers/networkMotifs/coli1_1Inter_st.txt}.
}
\end{tabular}
\end{table*}
\renewcommand{\arraystretch}{1}


\renewcommand{\arraystretch}{1.2}
\begin{table*}
\caption{\label{L2}Details of $9$ social networks studied}
\begin{tabular}{lllccc}
\\
[-0.1in]
\multicolumn{1}{c}{name} & \multicolumn{1}{c}{brief description} & \multicolumn{1}{c}{type} & \multicolumn{1}{c}{\# nodes} & \multicolumn{1}{c}{\# edges} & \multicolumn{1}{c}{reference}
\\
\hline
\begin{tabular}{ll} 
{\bf 1}. & 
Dolphin social 
\\ 
& network 
\end{tabular}
& 
\begin{tabular}{l} 
Social network of frequent associations between 
\\
$62$ dolphins in a community living off Doubtful 
\\
Sound in New Zealand
\end{tabular}
&
\begin{tabular}{c} 
undirected, 
\\
unweighted
\end{tabular}
&
62
& 
160
&
\cite{LSBHSD03} 
\\
\hline
\begin{tabular}{ll} 
{\bf 2}. & 
American 
\\ 
& 
College Football 
\end{tabular}
& 
\begin{tabular}{l} 
Network of American football games between 
\\
Division IA colleges during the regular Fall 2000 season 
\end{tabular}
&
\begin{tabular}{c} 
undirected, 
\\
unweighted
\end{tabular}
&
115
& 
612
&
\cite{GN02} 
\\
\hline
\begin{tabular}{ll} 
{\bf 3}. & 
Zachary Karate 
\\ 
& 
Club
\end{tabular}
& 
\begin{tabular}{l} 
Network of friendships between 34 members 
\\
of a karate club at a US university in the 1970s
\end{tabular}
&
\begin{tabular}{c} 
undirected, 
\\
unweighted
\end{tabular}
&
34
& 
78
&
\cite{Z77} 
\\
\hline
\begin{tabular}{ll} 
{\bf 4}. & 
Books about 
\\ 
& 
US politics 
\end{tabular}
& 
\begin{tabular}{l} 
Network of books about US politics published 
\\
around the time of the 2004 presidential 
\\
election and sold by the online bookseller 
\\
\url{amazon.com}; edges between books represent 
\\
frequent copurchasing of books by the same buyers. 
\end{tabular}
&
\begin{tabular}{c} 
undirected, 
\\
unweighted
\end{tabular}
&
105
& 
442
&
$\pmb{\ddagger}$
\\
\hline
\begin{tabular}{ll} 
{\bf 5}. & 
Sawmill 
\\ 
& 
communication 
\\
&
network 
\end{tabular}
& 
\begin{tabular}{l} 
A communication network within a small enterprise: 
\\
a sawmill. All employees were asked to indicate the 
\\
frequency with which they discussed work matters 
\\
with each of their colleagues on a five-point scale 
\\
ranging from less than once a week to several times
\\
a day. Two employees were linked in the 
network 
\\
if they rated their contact as three or more. 
\end{tabular}
&
\begin{tabular}{c} 
undirected, 
\\
unweighted
\end{tabular}
&
36
& 
62
&
\cite{MM97} 
\\
\hline
\begin{tabular}{ll} 
{\bf 6}. & 
Jazz Musician 
\\ 
&
network 
\end{tabular}
& 
\begin{tabular}{l} 
A social network of Jazz musicians
\end{tabular}
&
\begin{tabular}{c} 
undirected, 
\\
unweighted
\end{tabular}
&
198
& 
2742
&
\cite{GD03} 
\\
\hline
\begin{tabular}{ll} 
{\bf 7}. & 
Visiting ties 
\\ 
&
in San Juan 
\end{tabular}
& 
\begin{tabular}{l} 
Network for visiting relations between families 
living 
\\
in farms in the neighborhood San Juan Sur, 
\\
Costa Rica, 1948
\end{tabular}
&
\begin{tabular}{c} 
undirected, 
\\
unweighted
\end{tabular}
&
75
& 
144
&
\cite{LMCL53} 
\\
\hline
\begin{tabular}{ll} 
{\bf 8}. & 
World Soccer
\\ 
&
Data,
\\
&
Paris 1998
\end{tabular}
& 
\begin{tabular}{l} 
Members of the 22 soccer teams which participated 
\\
in the World Championship in Paris in 1998 had 
\\
contracts in 35 countries. Counts of which team 
\\
exports how many players to which country are
\\
used to generate this network.
\end{tabular}
&
\begin{tabular}{c} 
directed, 
\\
weighted
\end{tabular}
&
35
& 
118
&
$\pmb{\dagger}$
\\
\hline
\begin{tabular}{ll} 
{\bf 9}. & 
Les Miserables
\end{tabular}
& 
\begin{tabular}{l} 
Network of co-appearances of characters in Victor 
\\
Hugo's novel ``Les Miserables''.  Nodes represent 
\\
characters as indicated by the labels and edges 
\\
connect any pair of characters that appear in the 
\\
same chapter of the book. The weights on the 
\\
edges are the number of such coappearances. 
\end{tabular}
&
\begin{tabular}{c} 
undirected, 
\\
weighted
\end{tabular}
&
77
& 
251
&
\cite{Knu93} 
\\
\hline
\\
&
\multicolumn{5}{l}{
\begin{tabular}{l}
$^{\textstyle\pmb{\ddagger}}\,$V. Krebs, unpublished manuscript, found on Krebs' website \url{www.orgnet.com}.
\end{tabular}
}
\\
[0.1in]
&
\multicolumn{5}{l}{
\begin{tabular}{l}
$^{\textstyle\pmb{\dagger}}\,$Dagstuhl seminar: {\em Link Analysis and Visualization}, Dagstuhl 1-6, 2001.
\\
\hspace*{0.1in}(see \url{http://vlado.fmf.uni-lj.si/pub/networks/data/sport/football.htm})
\end{tabular}
}
\end{tabular}
\end{table*}
\renewcommand{\arraystretch}{1}

\begin{table*}
\begin{center}
\bf 
\large
Biological details of source, target and central nodes ($u_{\mathrm{source}}$, $u_{\mathrm{target}}$ and $u_{\mathrm{central}}$)
used in Table~\ref{KO-prev} and Table~\ref{KO}
\end{center}

\section*{Network 1:  E. coli transcriptional}

\vspace*{0.2in}
\hspace*{0.2in}
\begin{tabular}{r c p{5in}}
\hline
\\
\multicolumn{1}{c}{Node name} & 
\multicolumn{1}{c}{Node type} & 
\multicolumn{1}{c}{Details}
\\
\\
\hline
\\
fliAZY & \large$u_{\mathrm{source}}$ & Contains fliA gene (sigma factor), fliZ (possible cell-density responsive regulator of sigma) and fliY (periplasmic cystine-binding protein)
\\
[0.1in]
fecA & \large$u_{\mathrm{source}}$ & Ferric citrate, outer membrane receptor
\\
[0.1in]
arcA & \large$u_{\mathrm{target}}$ & Aerobic respiration control, transcriptional dual regulator
\\
[0.1in]
aspA & \large$u_{\mathrm{target}}$ & Component of aspartate ammonia-lyase
\\
[0.1in]
crp & \large$u_{\mathrm{central}}$ & Component of CRP transcriptional dual regulator (DNA-binding transcriptional dual regulator)
\\
[0.1in]
CaiF & \large$u_{\mathrm{central}}$ & DNA-binding transcriptional activator
\\
[0.1in]
sodA & \large$u_{\mathrm{central}}$ & Component of superoxide dismutases that catalyzes the dismutation of superoxide into oxygen and hydrogen peroxide
\\
\\
\hline
\\
\\
\\
\end{tabular}

\section*{Network 4: T-LGL signaling network}

\vspace*{0.2in}
\hspace*{0.2in}
\begin{tabular}{r c p{5in}}
\hline
\\
\multicolumn{1}{c}{Node name} & 
\multicolumn{1}{c}{Node type} &
\multicolumn{1}{c}{Details}
\\
\\
\hline
\\
PDGF & \large$u_{\mathrm{source}}$ & Platelet-derived growth factor is one of the numerous growth factors, or proteins that regulates cell growth and division.
\\
[0.1in]
IL15 & \large$u_{\mathrm{source}}$ & Interleukin 15 is a cytokine.
\\
[0.1in]
Stimuli &  \large$u_{\mathrm{source}}$ & Antigen Stimulation
\\
[0.1in]
apoptosis & \large$u_{\mathrm{target}}$ & process of programmed cell death
\\
[0.1in]
IL2 & \large$u_{\mathrm{central}}$ & Interleukin 2 is a cytokine signaling molecule in the immune system
\\
[0.1in]
Ceramide & \large$u_{\mathrm{central}}$ & A waxy lipid molecule within the cell membrane which can participate in variety of cellular signaling like proliferation and apoptosis
\\
[0.1in]
GZMB & \large$u_{\mathrm{central}}$ & A serine proteases that is released within cytotoxic T cells and natural killer cells to induce apoptosis within virus-infected cells, thus destroying them
\\
[0.1in]
NFKB &  \large$u_{\mathrm{central}}$ & nuclear factor kappa-light-chain-enhancer of activated B cells, a protein complex that controls the transcription of DNA
\\
[0.1in]
MCL1 &  \large$u_{\mathrm{central}}$ & Induced myeloid leukemia cell differentiation protein Mcl-1
\\
\\
\hline
\\
\\
\\
\end{tabular}

\end{table*}

\end{document}